\onecolumn \linespread{1.2}
\newcommand{\norm}[1]{%
	\| #1 \|}
\newcommand{\gm}{{\tiny{GM}}}
\newcommand{\sigmax}{\sigma_{1}^*{}}
\newcommand{\sigmin}{\sigma_{r}^*{}}
\newcommand{\x}{\bm{x}}
\newcommand{\y}{\bm{y}}
\newcommand{\g}{\bm{g}}
\renewcommand{\b}{\bm{b}}
\newcommand{\X}{\bm{X}}
\newcommand{\Y}{\bm{Y}}
\newcommand{\A}{\bm{A}}
\newcommand{\U}{{\bm{U}}}
\newcommand{\V}{\bm{V}}
\newcommand{\B}{\bm{B}}
\newcommand{\xstar}{\x^*}
\newcommand{\bstar}{\b^*}
\newcommand{\Xstar}{\X^*}
\newcommand{\Ustar}{\U^*}
\newcommand{\Vstar}{\V^*}
\newcommand{\Bstar}{\B^*}
\newcommand{\bSigma}{\bm{\Sigma}^*}
\newcommand{\Sig}{\bm{\Sigma}}
\newtheorem{theorem}{Theorem}[section]
\newtheorem{cor}[theorem]{Corollary}
\newtheorem{claim}[theorem]{Claim}
\newtheorem{remark}[theorem]{Remark}
\newtheorem{definition}{Definition}
\newtheorem{assu}{Assumption}
\newtheorem{lemma}[theorem]{Lemma}
\newtheorem{fact}[theorem]{Fact}
\renewcommand\thetheorem{\arabic{section}.\arabic{theorem}}
\newcommand{\cS}{\mathcal{S}}
\renewcommand{\S}{\bm{S}}
\newcommand{\ik}{{ki}}
\newcommand{\Xhat}{\hat\X}
\newcommand{\tC}{\tilde{C}}
\newcommand{\tc}{\tilde{c}}
\newcommand{\sigmamin}{\sigma_{min}}
\newcommand{\pul}{\mathcal{P}_{\U_{\ell}}}	
\newcommand{\pulbst}{\mathcal{P}_{\U_{\ell_{best}}}}
\newcommand{\pustar}{\mathcal{P}_{\Ustar}}
\renewcommand{\d}{\bm{d}}
\newcommand{\dstar}{\d}
\newcommand{\tL}{\tilde{L}}
\begin{document}

\newcommand{\ben}{\begin{enumerate}} \newcommand{\een}{\end{enumerate}} \newcommand{\bi}{\begin{itemize}} \newcommand{\ei}{\end{itemize}}
	
	\newcommand{\indic}{\mathbb{1}}
	\newcommand{\E}{\mathbb{E}}
	\newcommand{\SD}{\bm{SD}}
	\newcommand{\eps}{\epsilon}
	\newcommand{\tdelta}{\tilde\delta}
	
	\newcommand{\I}{\bm{I}}
	\newcommand{\M}{\bm{M}}
	
	\renewcommand{\a}{\bm{a}}
	\newcommand{\e}{\bm{e}}

	\newcommand{\G}{\bm{G}}
	
	\newcommand{\Err}{\mathrm{Err}}
	\newcommand{\GradErr}{\mathrm{GradErr}}
	\newcommand{\Bone}{{\B^{(1)}}}
	\newcommand{\Xone}{\X^{(1)}}
	\newcommand{\Bell}{{\B^{(\ell)}}}
	\newcommand{\Xell}{\X^{(\ell)}}
	\newcommand{\bell}{\b^{(\ell)}}
	\newcommand{\xell}{\x^{(\ell)}}

	\newcommand{\Uhat}{\hat\U}
	\renewcommand{\P}{\mathcal{P}}
	\newcommand{\D}{\bm{D}}
	
	\newcommand{\BGM}{\B}
	\newcommand{\XGM}{\X}
	\newcommand{\xkGM}{\x_k}
	\newcommand{\bkGM}{\b_k}
\newcommand{\Sigmastar}{\bm\Sigma^*}

	\newcommand{\Jgood}{\mathcal{J}_{good}}

\newcommand{\thresh}{\omega_\gm}
\newcommand{\z}{\bm{z}}
\newcommand{\Z}{\bm{Z}}
\newcommand{\bPhi}{\bm{\Phi}}
\newcommand{\tz}{\tilde{\z}}

\newcommand{\m}{\widetilde{m}}
\newcommand{\q}{\widetilde{q}}
\newcommand{\qell}{q_\ell}
\newcommand{\totl}{q_\ell}
\newcommand{\trnc}{\mathrm{trunc}}
\newcommand{\sigmarstar}{\sigma_r^*}
\newcommand{\sigmarplusstar}{\sigma_{r+1}^*}
\newcommand{\sigmaonestar}{\sigma_1^*}
\renewcommand{\Xhat}{\X}

\title{Byzantine-Resilient Federated PCA and Low Rank Matrix Recovery}	
	\title{Byzantine-Resilient Federated PCA \\ and Low Rank Column-wise Sensing}	
	\author{%
		\IEEEauthorblockN{Ankit Pratap Singh and Namrata Vaswani}
		\IEEEauthorblockA{Iowa State University, Ames, IA, USA}
\thanks{An early version of the first part of this work will be presented at ISIT 2024 \cite{altgdmin_isit}. A Byzantine resilient federated few shot learning solution using the ideas introduced in this work will be presented at ICML 2024 \cite{altgdmin_icml}.
}
	}
	
	\maketitle

\newcommand{\approxgm}{{\tiny{\text{approxGM}}}}

\begin{abstract}
This work considers two related learning problems in a federated attack-prone setting -- federated principal components analysis (PCA) and federated low rank column-wise sensing (LRCS).  The node attacks are assumed to be Byzantine which means that the attackers are omniscient and can collude.  We introduce a novel provably Byzantine-resilient communication-efficient and sample-efficient algorithm, called Subspace-Median, that solves the PCA problem and is a key part of the solution for the LRCS problem. We also study the most natural Byzantine-resilient solution for federated PCA, a geometric median based modification of the federated power method, and explain why it is not useful.
Our second main contribution is a complete alternating gradient descent (GD) and minimization (altGDmin) algorithm for Byzantine-resilient horizontally federated LRCS and sample and communication complexity guarantees for it. Extensive simulation experiments are used to corroborate our theoretical guarantees. The ideas that we develop for LRCS are easily extendable to other LR recovery problems as well.
	\end{abstract}

\section{Introduction}
Federated learning is a setting where multiple entities/nodes/clients collaborate in solving a machine learning (ML) problem. Each node can only communicate with a central server or service provider that we refer to as ``center'' in this paper. The data observed or measured at each node/client is stored locally and should not be shared with the center. Summaries of it can be shared with the center. The center typically aggregates the received summaries and broadcasts the aggregate to all the nodes \cite{kairouz2021advances}.
One of the challenges in this setup is adversarial attacks on the nodes. In this work we assume Byzantine attacks, i.e., the adversarial nodes are omniscient and can collude \cite{guerraoui2018hidden, chen2018draco, krum, alistarh2018byzantine, yin2018byzantine}. ``Omniscient'' means that the attacking nodes have knowledge of all the data at every node and the exact algorithm (and all its parameters) implemented by every node, including center, and can use this information to design the worst possible attacks at each algorithm iteration.

This work develops provably Byzantine resilient algorithms for solving two related problems -- federated principal components analysis (PCA) and horizontally-federated low rank (LR) column-wise sensing or LRCS -- in a communication- and sample-efficient fashion. 
The first goal in solving both problems is to reliably estimate the subspace spanned by the top $r$ singular vectors of an unknown symmetric $n \times n$ matrix, $\bPhi^*$. In case of PCA, $\bPhi^*$ is the population covariance matrix of the available data. For each $\ell=1,2,\dots,L$, node $\ell$ observes an $n \times \totl$  data matrix $\D_\ell$ which can be used to compute an estimate $\bPhi_\ell:=\D_\ell \D_\ell^\top /\totl$ of $\bPhi^*$. 
PCA is well known to have a large number of applications in scientific visualization and as a pre-processing step for speeding up various ML tasks. LRCS finds applications in accelerated dynamic MRI \cite{dyn_mri1,lrpr_gdmin_mri_jp}, multi-task linear representation learning and few shot learning \cite{du2020few, collins2021exploiting,altgdmin_icml},  and federated sketching \cite{hughes_icip_2012,hughes_icml_2014,lee2019neurips}.

\subsection{Existing Work}

\subsubsection{Byzantine-resilient federated machine learning (ML)}
There has been a large amount of recent work on Byzantine-resilient federated ML algorithms, some of which come with provable guarantees \cite{dist_adversarial,krum,yin2018byzantine,xie2019zeno,alistarh2018byzantine,cao2020fltrust,allen2020byzantine,wu2020federated,defazio2014saga,acharya2022robust, pillutla2019robust,data2021byzantine,li2019rsa,ghosh2019robust, regatti2022byzantine,lu2022defense,cao2020fltrust,cao2019distributed,xie2019zeno}. Some of the theoretical guarantees are asymptotic, and almost all of them analyze the standard gradient descent (GD) algorithm or stochastic GD.
Typical solutions involve replacing the mean/sum of the gradients from the different nodes by a different robust statistic, such as geometric median (of means) \cite{dist_adversarial}, trimmed mean, coordinate-wise mean \cite{yin2018byzantine} or Krum  \cite{krum}.

One of the first non-asymptotic results for Byzantine attacks is \cite{dist_adversarial}.  This used the geometric median (GM) of means to replace the regular mean/sum of the partial gradients from each node. Under standard assumptions  (strong convexity, Lipschitz gradients, sub-exponential-ity of sample gradients, and an upper bound on the fraction of Byzantine nodes), it provided an exponentially decaying bound on the distance between the estimate at the $t$-th iteration and the unique global minimizer. 
In follow-up work \cite{yin2018byzantine}, the authors studied the coordinate-wise mean and the trimmed-mean estimators and developed guarantees for both convex and non-convex problems. Because these works used coordinate-wise estimators, their results needed smoothness and convexity along each dimension. This is a stronger, and sometimes impractical, assumption.
Another interesting series of works \cite{alistarh2018byzantine,allen2020byzantine} provides non-asymptotic guarantees for Byzantine resilient stochastic GD. This work develops an elaborate median based algorithm to detect and filter out the Byzantine nodes. The theoretical analysis assumes that the norm of the sample gradients is bounded by a constant that does not depend on the gradient dimension. This can be a restrictive assumption.
With this assumption, they are able to obtain sample complexity guarantees that do not depend on the signal dimension. These works also assume that the set of Byzantine nodes is the same for all GD iterations, while the work of \cite{dist_adversarial} allowed this set to vary at each iteration.

Most of the above works considered the homogeneous data setting; this means that the data that is i.i.d. (independent and identically distributed) across all nodes. More recent work has focused on heterogeneous distributions (data is independent across nodes but is not identically distributed) and proved results under upper bounds on the amount of heterogeneity  \cite{pillutla2019robust,data2021byzantine,li2019rsa,ghosh2019robust}. 
Other works rely on detection methods to handle heterogeneous gradients \cite{regatti2022byzantine,lu2022defense,cao2020fltrust,cao2019distributed,xie2019zeno}.  These assume the existence of a trustworthy root/validation dataset at the central server that is used for detecting the adversarial gradients.

\subsubsection{Work on robust PCA and subspace learning and tracking, and other robust estimation problems}
There is much work also on solving the robust PCA problem using the low rank plus sparse (L+S) \cite{rpca,rpca2,rpca_gd,rrpcp_dynrpca,rrpcp_jsait} or other models, on robust subspace learning \cite{novel_m_estimator,li2020nonconvex}, and on robust subspace tracking problems \cite{narayanamurthy2022federated,rekavandi2023robust,diakonikolas2017being}. The review article \cite{rrpcp_review} provides a comprehensive summary of the older work. In addition there is other related work that uses the median or vector medians for other types of outlier robust algorithms, e.g., \cite{li2020non}. However, there are two key differences between all these works and the problem that we study in this paper. 
(1) All of these works assume that the outlier or the attack is on the observed or measured data. In security literature, such attacks (in which only the data can be corrupted) are referred to as ``data-poisoning'' attacks. The algorithms from these works cannot be used to deal with Byzantine attacks which involve corruption of the (intermediate and/or final) algorithm estimates sent by some nodes.
(2) Secondly, almost all of these are designed for the centralized setting. A possible way to extend any of these ideas to the federated setting is for the nodes to share their raw data with the center and for the center to implement the same algorithm as that developed in these works. However, this would not be communication-efficient\footnote{One exception is the work of \cite{narayanamurthy2022federated} that considers the federated setting. However this has two important limitations: (i) it assumes that, for the initialization step, the data is outlier-free; (ii) and, it requires a much larger number of observed samples than what traditional LR matrix completion literature needs.}.
To distinguish from the L+S, or other, model-based robust PCA work, here, we use the term ``resilient'' to denote attack-resilience.%

\subsubsection{Work on robust statistics - robust mean and robust covariance estimation}
There is a large amount of existing work in the general robust statistics literature, most of it is on robust mean estimation, and some on robust covariance estimation, e.g., see \cite{diakonikolas2023algorithmic,diakonikolas2019recent,diakonikolas2019robust,lai2016agnostic,gm_banach}. None of these can be extended to solving our problem in a communication-efficient fashion and most of these also have much larger sample complexities. 
As an example, the work of Minsker \cite{gm_banach}  studies the geometric median (GM), which  is one well-known approach to compute a reliable estimate of a vector-valued quantity using multiple individual estimates of it when some of these estimates may be corrupted by outliers \cite{gm_banach,dist_adversarial}. In \cite[Corollary 4.3]{gm_banach}, Minsker shows the application of GM for ``robust PCA'' -
provably accurate robust/resilient covariance estimation followed by SVD on the robust covariance estimate to compute its top  $r$ singular vectors. We refer to this solution as {\em SVD-ResCovEst}.
This approach needs order $\qell\ge n^2$ samples. Moreover, it cannot be federated efficiently because it requires that each node $\ell$ shares $\D_\ell \D_\ell^\top$ with the center. 
This has a communication cost of order $n^2$. A similar discussion applies for the result of \cite[Theorem 4.35]{diakonikolas2023algorithmic} as well.

\subsubsection{Work on the LR Column-wise Sensing (LRCS) problem}
The LRCS problem, and its phaseless measurements' extension, LR phase retrieval, have been extensively studied in recent years \cite{lrpr_it,lrpr_best,lee2019neurips,lrpr_gdmin,lrpr_gdmin_2,lrpr_gdmin_mri_jp}, mostly in centralized settings. 
The work of \cite{lrpr_gdmin,lrpr_gdmin_2} introduced a fast and communication-efficient solution to attack-free federated LRCS, called alternating GD and minimization (altGDmin). AltGDmin is initialized using spectral initialization.

\subsubsection{Federated PCA and subspace learning; no attacks}
There is also somewhat related work on federated PCA and subspace learning that does not consider any attacks or other outliers, e.g.,  \cite{fed_pca,zhang2022turning,silva2019federated}.

\subsection{Our Contributions}
A natural way to make the SVD-ResCovEst approach communication-efficient is to borrow ideas from the sketching literature and  share $\bPhi_\ell \U_{any}$ for some (possibly random) $n \times r$ matrix $\U_{any}$. This idea is,  in fact, one iteration of the power method for computing the $r$-SVD of a matrix \cite{golub,distpca_review}. It can be converted into a provably correct solution by using the GM to modify the power method.
We refer to this solution as Resilient Power Method (ResPowMeth), and obtain a set of sufficient conditions for it to work. We show that this approach is both provably resilient to Byzantine attacks and communication efficient under certain restrictive assumptions on the accuracy of the individual nodes' partial covariance estimates, which translate into a very large sample complexity for PCA: for $n$-length data vectors, ResPowMeth works if $\qell \ge C n^2r^2$. 

Our first important contribution is a novel and well-motivated solution to Byzantine-resilient federated subspace estimation, and PCA, that is both communication-efficient and sample-efficient. We refer to this as ``Subspace-Median''. Its guarantee is provided in Theorem \ref{main_res} and Corollary \ref{th:ro_pca_A}.
We show how the Subspace Median can be used to provably solve two practically useful problems: (i) Byzantine resilient federated PCA, and (ii) the initialization step of Byzantine-resilient horizontal federated LRCS. 
For the PCA problem, we show that this works well with just $\qell \ge C nr $ samples.
We also develop Subspace Median-of-Means (MoM) extensions for both problems. These help improve the sample complexity at the cost of reduced Byzantine/outlier tolerance. For all these algorithms, Theorem \ref{main_res} helps prove sample, communication, and time complexity bounds for $\eps$-accurate subspace recovery. Extensive simulation experiments corroborate our theoretical results.

Our second important contribution is a provable communication-efficient and sample-efficient complete solution to horizontally federated LRCS.
For solving it, we develop a GM-based modification of the alternating GD and minimization (altGDmin) algorithm from earlier work \cite{lrpr_gdmin}.  We use Subspace Median and Subspace MoM to make its spectral initialization step Byzantine-resilient. For the complete algorithm, we can show that it obtains an $\eps$-accurate estimate of the unknown LR matrix using only order $nr^2\log(1/\eps)$ samples per node, and with a total communication cost of only order $nr\log(1/\eps)$ per node. Both costs are comparable to what basic altGDmin needs for solving this problem in the attack-free setting \cite{lrpr_gdmin,lrpr_gdmin_2}.

The overall approach that we develop for modifying the altGDmin algorithm (use Subspace-Median for initialization and GM of gradients for the GD step), and analyzing it, can be extended to make GD-based solutions to many other similar non-convex problems in federated settings Byzantine-resilient. Some examples include vertically federated LRCS, LR matrix completion, LR phase retrieval, LR plus sparse matrix recovery (robust PCA).
Our approach for analyzing Byzantine-resilient PCA is also extendable to solving PCA for approximately LR datasets, PCA for such datasets with missing entries (see Remark \ref{gens}), and also to  subspace tracking and robust subspace tracking. We describe these in Sec. \ref{extend}.

\subsection{Novelty of our algorithmic and proof techniques}\label{discuss}\label{novelty}
While both SVD and geometric median (GM) are well known in literature, we are not aware of any notions of ``median'' for subspaces. We cannot directly use the GM on the subspace basis matrices because these do not lie in a Euclidean space, e.g., $\U$, $-\U$ specify the same subspace even though $\norm{\U-(-\U)}_F=2\sqrt{r}\neq 0$. 
The design of Subspace Median relies on the fact that the Frobenius norm of the difference between two subspace projection matrices is within a constant factor of the subspace distance between their respective subspaces.  Its analysis also uses the fact that these projection matrices are bounded by $\sqrt{r}$ in Frobenius norm.
We use these facts and Lemma \ref{gm_new_1} (GM lemma for bounded inputs) to prove our key lemma, Lemma \ref{key_lemma}. This is combined with the Davis-Kahan $\sin \Theta$ theorem to prove Theorem \ref{main_res}. This result is likely to be widely applicable in making various other subspace recovery problems Byzantine resilient.

Our analysis of the AltGDmin iterations relies heavily on the lemmas proved in \cite{lrpr_gdmin} and the overall simplified proof approach developed in \cite{lrpr_gdmin_2}. However, we need to modify this approach to deal with the fact that we compute the geometric median of the gradients from the different nodes. The GM analysis provides bounds on Frobenius norms, and hence our analysis also uses the Frobenius norm subspace distance instead of the 2-norm one; see Lemma \ref{sd_pop}. At the same time it avoids the complicated proof approach (does not need to use the fundamental theorem of calculus) of \cite{lrpr_gdmin}. The main new step is the bound on the difference between the expected values of the gradients from two good nodes conditioned on past estimates and data\footnote{As explained earlier, the conditional expectations are different at the different nodes. These can be computed and bounded easily because we assume sample-splitting.}. See Lemma \ref{new_term_lem}.  This lemma is used along with Lemma \ref{gm_new_2} (our GM lemma for potentially unbounded inputs) to obtain Lemma \ref{err_bound_lemma}. This discussion will be clearer from the proof outline provided below Theorem \ref{gdmin_thm_gmom}.

\subsection{Organization}
We define the problems, the notation, and introduce the geometric median in the next section. Sec. \ref{res_sub_estim} develops the Subspace Median and Resilient Power Method (ResPowMeth) solutions, and provides their theoretical guarantees. Sec. \ref{res_pca} develops corollaries for the resilient PCA problem, compares the three approaches -- SVD-ResCovEst \cite{gm_banach}, ResPowMeth, and Subspace Median. A summary is provided in Table \ref{table_compare}. Subspace Median of Means is also developed here. Sec. \ref{lrccs_section} develops a complete altGDmin-based solution for resilient horizontally federated LRCS. Proofs for Sections \ref{res_sub_estim} and \ref{res_pca} are provided in Sec. \ref{proofs1}.  Simulation experiments are provided in Sec. \ref{sims}. We conclude in Sec. \ref{discuss}.

\section{Problem set-up, Notation, and Geometric Median Preliminaries}

\subsection{Problem setup} \label{probdef}
We study two interrelated problems stated below. We begin by stating the subspace estimation meta problem that occurs in both problems.
We consider a federated setting with $L$ nodes, with $L$ being a numerical constant, and assume the following.

\begin{assu}[Number of Byzantine nodes]
At most $\tau L$ of the $L$ total nodes are Byzantine, with $\tau \le 0.4$ (instead of 0.4, we can use any constant $c$ that is strictly less than 0.5 here). Denote the set of good (non-Byzantine) nodes by $\Jgood$. Equivalently, this means that $|\Jgood| > (1- \tau)L$. We define a Byzantine attack below in Sec. \ref{byzdefine}.
\label{byzassu}
\end{assu}

\subsubsection{Resilient Federated Subspace Estimation}
The goal is to reliably estimate the subspace spanned by the top $r$ singular vectors of an unknown symmetric $n \times n$ matrix, $\bPhi^*$. Denote the $n \times r$ matrix formed by these singular vectors by $\Ustar$.  Our goal is thus to estimate $\mathrm{span}(\Ustar)$.  Each node $\ell \in [L]$ observes, or can compute, a symmetric matrix $\bPhi_\ell$ which is {\em an} estimate of $\bPhi^*$.
Typically, the node observes an $n \times \totl$  data matrix $\D_\ell$ and computes $\bPhi_\ell:=\D_\ell \D_\ell^\top /\totl$. 
We use $\sigmaonestar\geq ...\geq \sigma_n^*$ to denote the singular values of $\bPhi^*$.

\subsubsection{Resilient Federated PCA} \label{respca_prob}
Given $q$ data vectors $\dstar_k \in \Re^n$, that are  mutually independent and identically distributed (i.i.d.), the goal is to find the $r$-dimensional principal subspace (span of top $r$ singular vectors) of their covariance matrix, which we will denote by $\Sigmastar$.  We can arrange the data vectors into an $n \times q$ matrix, $\D:=[\dstar_1, \dstar_2, \dots \dstar_q]$. We use $\sigma^*_j$ to denote the $j$-th singular value of $\Sigmastar$.
We assume that all $\dstar_k$s are i.i.d. {\em zero mean}, sub-Gaussian vectors, with covariance matrix $\Sigmastar$ and maximum sub-Gaussian norm $K\sqrt{\|\Sigmastar\|} = K \sqrt{\sigmax}$ \cite[Chap 2]{versh_book}. 
The data is vertically federated, this means that each node $\ell$ has  $q_\ell = \q=\frac{q}{L}$ $\dstar_k$'s. Denote the corresponding sub-matrix of $\D$ by $\D_\ell$. Thus, $\D = [\D_1, \dots, \D_\ell,\dots \D_L]$.
This problem is an instance of resilient federated subspace estimation with $\bPhi_\ell = \D_\ell \D_\ell^\top / \qell$.

\subsubsection{Resilient Horizontally-Federated Low Rank Column-wise Sensing (LRCS)}
LRCS involves recovering an $n \times q$ rank-$r$ matrix $\Xstar$ from compressive linear measurements of each column, i.e., from $\y_k:= \A_k \xstar_k$, $k \in [q]$, with $\y_k \in \Re^m$ with $m \ll n$, and $\A_k$ being $m \times n$ matrices which are i.i.d. random Gaussian (each entry is i.i.d. standard Gaussian)  \cite{lrpr_gdmin}. We treat $\Xstar$ as a deterministic unknown.
Here and below $[q]:=\{1,2,\dots,q\}$. 
Let $\Y := [\y_1, \y_2, \dots, \y_q]$. Horizontal federation means that row sub-matrices of $\Y$ are observed at the different nodes. To be precise, node $\ell$ observes an $\m \times q$ rows sub-matrix of $\Y$ denoted $\Y_\ell$ with $\m = m/L$.   We assume that node $\ell$ has access to $\Y_\ell$ and $\{ (A_k)_\ell, k \in [q]\}$.  Denote the set of indices of the rows available at node $\ell$ by $\cS_\ell$. Then, $(\A_k)_\ell:= \I_{\cS_\ell}^\top \A_k$ is of size $\m \times n$ and $\Y_\ell:= \I_{\cS_\ell}^\top \Y$ is of size $\m \times q$ with $\m = m/L$, and with $(\y_k)_\ell: = (\A_k)_\ell \xstar_k$ for all $k \in [q]$
Three important applications that can be modeled as instances of LRCS are accelerated dynamic MRI reconstruction \cite{dyn_mri1,lrpr_gdmin_mri_jp}, federated sketching \cite{lrpr_gdmin_mri_jp,lee2019neurips}, and multi-task representation learning and few shot learning \cite{du2020few, collins2021exploiting,altgdmin_icml}. In the representation learning problem, horizontal federation corresponds to the setting where the $\ell$-th subset of training data for the $q$ correlated linear regression tasks is observed at node $\ell$. Few shot learning uses this learned representation (column span of $\Xstar$) for learning the regression coefficients using very few training data points (this problem is also referred to as online subspace tracking in \cite{lrpr_gdmin_mri_jp}).
In multi-coil dynamic MRI, $L$ is the number of MRI scanners, each of which observes a differently weighted subset of measurements of the human organ's image sequence. Scanners can be prone to security threats if they are connected to the internet. 
As we will see later, the initialization step for solving LRCS using an iterative algorithm  can be interpreted as an instance of resilient federated subspace estimation.

The reason we consider vertical federation for PCA but horizontal for LRCS is because these are the settings in which the data on the different nodes is i.i.d. in each case. In case of vertically federated PCA, $\D_\ell$'s are i.i.d. If we consider horizontal federation for PCA, then this is no longer true (unless we assume $\Sigmastar$ is block diagonal). For LRCS, the opposite holds because different entries of a given $\y_k$ are i.i.d.; but the different $\y_k$'s are not identically distributed. 
Guaranteeing Byzantine resilience without extra assumptions requires the different nodes' data be i.i.d. or i.i.d.-like (this means that it should be possible to obtain a uniform bound on the errors between the individual nodes' outputs and the quantity of interest each time the node output is shared with the center).
As we explain later, it is possible to use ideas similar to the ones introduced here to also solve vertically federated LRCS, but that will need extra assumptions that ensure bounded heterogeneity.

\subsubsection{Byzantine attack definition}\label{byzdefine}
{\em We use the terms Byzantine node/adversary/attack almost interchangeably. The output of a Byzantine node or adversary is the Byzantine attack.  Byzantine nodes are often also referred to as ``bad'' nodes and non-Byzantine ones as ``good'' nodes. 
}
The Byzantine attack has not been clearly mathematically defined in past work \cite{guerraoui2018hidden, chen2018draco, krum, alistarh2018byzantine, yin2018byzantine}, although there are definitions inspired by \cite{byz_general}. The following definition, taken from \cite{guerraoui2018hidden}, is the most precise one we can find.
\begin{definition}
The Byzantine adversary is an entity which controls the outputs of some of the $L$ worker nodes. 
It is omniscient, in the sense that it has a perfect knowledge of the system state at any time, i.e., it knows (i) the full state of the center (data and algorithm, including all algorithm parameters), and (ii) the full state of every node (data and algorithm, including all algorithm parameters). Different Byzantine adversaries can also collude. However, they are not omnipotent: they cannot modify the outputs of the other (non-Byzantine) nodes or of the center, or delay communication. 
\end{definition}
In our setting, this means the following. Let $\nabla_{byz}$ denote the set of outputs of all the Byzantine nodes. Then $\nabla_{byz}= g_{byz} (\{\mathrm{Data}_\ell\}_{\ell=1}^{L},\mathcal{A}))$  where $\mathcal{A}$ denotes the true algorithm being implemented at each of the non-Byzantine (good) nodes and at the center along with all its parameters; $g_{byz}(.)$ is a function that can be jointly designed by all the Byzantine nodes; and $\mathrm{Data}_\ell$ is the data observed at node $\ell$: it is $\bPhi_\ell$ or $\D_\ell$ (in case of PCA), or $\Y_\ell, (\A_k)_\ell, k\in [q]$ (in case of LRCS).

\subsection{Notation}
We use $\|.\|_F$ to denote the Frobenius norm and $\|.\|$ without a subscript to denote the (induced) $l_2$ norm (often called the operator norm or spectral norm);  $^\top$ denotes matrix or vector transpose; $|\bm{z}|$ for a vector $\bm{z}$ denotes element-wise absolute values; $\I_n$ (or sometimes just $\I$) denotes the $n \times n$ identity matrix, and $\e_k$ denotes its $k$-th column ($k$-th canonical basis vector); and $\M^\dag = (\M^\top \M)^{-1} \M^\top$. We use $\indic_{\{a\leq b\}}$ to denote the indicator function that returns $1$ if $a\leq b$ otherwise $0$.

We say $\U$ is a {\em basis matrix} if it is a tall matrix with mutually orthonormal columns; we use this to denote the subspace spanned by its columns. For a basis matrix $\U$, the projection matrix for projecting onto $\mathrm{span}(\U)$ (the subspace spanned by the columns of $\U$) is denoted $\P_\U:= \U \U^\top$ while that for projecting orthogonal to $\mathrm{span}(\U)$ is denoted $\P_{\U,\perp}:=\I - \U \U^\top$
``$r$-SVD'' to refer to the top $r$ left singular vectors (singular vectors corresponding to the $r$ largest singular values) of a matrix. 
For basis matrices, $\U_1, \U_2$, we use $\SD_F(\U_1, \U_2): = \|(\I - \U_1 \U_1{}^\top)\U_2\|_F$ as the default Subspace Distance (SD) measure between the subspaces spanned by the two matrices. In some places, we also use  $\SD_2(\U_1, \U_2) = \|(\I - \U_1 \U_1{}^\top)\U_2\|$. If both matrices have $r$ columns (denote $r$-dimensional subspaces), then $\SD_F(\U_1, \U_2) \le \sqrt{r} \SD_2(\U_1, \U_2)$.  

We use $QR(\tilde\U)$ to denote the orthonormalization of the columns of $\tilde\U$ by using QR decomposition.  For a matrix $\M$, $vec(\M)$ vectorizes it.

{\em We reuse the letters $c,C$ to denote different numerical constants in each use with the convention that $c < 1$ and $C \ge 1$.} Also the notation $a \lesssim b$ means $a \le C b$.

For $a,b$ in $(0,1)$, we use $\psi(a, b):=(1-a)\log \frac{1-a}{1-b}+a\log \frac{a}{b}$ to denote the binary KL divergence.
When computing a median of means estimator, one splits the $L$ node indices into $\tL$ mini-batches so that each mini-batch contains $\rho = L/\tL$ indices. 
For the $\ell$-th node in the $\vartheta$-th mini-batch we use the short form notation ${(\vartheta,\ell)} = { (\vartheta - 1)\rho + \ell}$, for $\ell\in[\rho]$.

Recall that Byzantine nodes are often also referred to as ``bad'' nodes and non-Byzantine ones as ``good'' nodes. We use $\Jgood \subseteq [L]$ to denote the set of non-Byzantine/good nodes and $\Jgood^c$ denotes the set of Byzantine nodes. Using Assumption \ref{byzassu}, $|\Jgood| = L - \tau L = (1-\tau)L$.

\subsection{Geometric median (GM)}\label{GM_basics}
The geometric median (GM) is one well-known approach to compute a reliable estimate of a vector-valued quantity using multiple individual estimates of it when some of these estimates may be corrupted by outliers \cite{gm_banach,dist_adversarial}. For $L$ data vectors $\{\z_1, \z_2, \dots, \z_L\}$, with each $\z_\ell \in \Re^d$, this is defined as
\[
\z_\gm^*:= GM\{\z_1,\z_2,...,\z_L\} = \arg \min_{\z\in \Re^d} \sum_{\ell=1}^{L}\left\| \z-\z_\ell \right\|
\]
This cannot be computed in closed form. Iterative algorithms exist to solve it approximately.
When we say $\z_\gm$ is a $(1+ \eps_0)$ approximate GM, for an $0< \eps_0 < 1$ we mean that
\begin{align}
\sum_{\ell=1}^{L}\left\| \z_\gm -\z_\ell \right\| \le (1+\eps_0) \sum_{\ell=1}^{L}\left\| \z_\gm^* -\z_\ell \right\| = (1+\eps_0)  \min_{\z\in \Re^d} \sum_{\ell=1}^{L}\left\| \z-\z_\ell \right\|
\label{approxgmdef}
\end{align}
There are two popular iterative solutions for computing the approximate GM. The most commonly used one in practice, Weiszfeld's algorithm \cite{weiszfeld_original,beck2015weiszfeld}, does not come with a useful iteration complexity guarantee.
The recent work of \cite{cohen2016geometric} introduced a nearly linear-time algorithm for provably computing the approximate GM, with high probability.
We provide \cite[Algorithm 1]{cohen2016geometric} in Appendix \ref{gm_append}. We state its guarantee next. All theoretical results in our work use this result.

\begin{claim}[Theorem 1 \cite{cohen2016geometric}]\label{cohen_1}
Pick an accuracy level $0 < \eps_0 < 1$.
Consider \cite[Algorithm 1]{cohen2016geometric} with input $\{\z_1,\z_2,...,\z_L\}$ and using number of iterations, $T_\gm =  C \log(\frac{L}{\eps_0})$. 
With probability at least $1-c_\approxgm$ (where $c_\approxgm<1$ is a numerical constant, e.g., 0.1), the algorithm computes $\z_\gm$ that satisfies \eqref{approxgmdef}.
Its per iteration complexity is $C Ld\log^2(\frac{L}{\eps_0})$ and total time complexity is $O(Ld\log^{3}(\frac{L}{\eps_0}))$.
\end{claim}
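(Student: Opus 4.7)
The plan is to follow the strategy of \cite{cohen2016geometric}. The key obstacle is that the geometric median objective $f(\z) = \sum_{\ell=1}^L \|\z - \z_\ell\|$ is convex but non-smooth at the data points, so neither gradient descent nor Newton's method applies directly, and Weiszfeld-style iterations (which is why the classical analysis fails to give a useful iteration complexity). The first step I would take is to replace $f$ by a family of smoothed surrogates $f_t(\z) = \sum_\ell \phi_t(\|\z - \z_\ell\|)$, where $\phi_t$ is a smoothing of the identity (for example, $\phi_t(u) = \sqrt{u^2 + t^2} - t$) that sharpens as $t \to 0$. The target multiplicative accuracy $(1+\eps_0)$ will translate into driving a duality gap below $\eps_0 \cdot f(\z_\gm^*)$.

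Next I would set up a path-following / interior-point outer loop: track the minimizer of $f_t$ as $t$ shrinks geometrically from an $O(1)$ starting value (initialized at, say, the coordinate-wise mean of the $\z_\ell$) down to $t = \mathrm{poly}(\eps_0/L)$. A standard self-concordance-style computation for the penalized objective shows that a constant multiplicative decrease of $t$ per outer iteration preserves the Newton decrement from step to step, so the duality gap shrinks by a constant factor per outer iteration. Hence $T_\gm = O(\log(L/\eps_0))$ outer iterations suffice to reach the desired accuracy. The main analytic obstacle here is verifying the quasi-self-concordance of $\phi_t$ and establishing the resulting one-step Newton progress lemma; this is the heart of the Cohen et al.\ argument and is what rules out the usual $\sqrt{\kappa}$-type dependence.

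For the per-iteration cost, forming and factorizing the Hessian exactly would cost $O(Ld^2 + d^3)$, which violates the nearly linear claim. Instead, I would solve each Newton system approximately using a preconditioned iterative solver (conjugate gradient on a sketched/reweighted linear system). One shows that solving each system to accuracy $\mathrm{poly}(\eps_0/L)$ is enough to preserve the one-step Newton progress established above. This randomized solver runs in $O(Ld \log^2(L/\eps_0))$ time with constant success probability per iteration, and is the only source of randomness in the guarantee; pushing the failure probability below the absolute constant $c_\approxgm$ is done by a standard boosting step (repeat and take the best) whose cost is absorbed into the constants.

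Finally I would combine the outer iteration bound $T_\gm = O(\log(L/\eps_0))$ with the per-iteration cost $O(Ld\log^2(L/\eps_0))$ to obtain the stated total time $O(Ld\log^3(L/\eps_0))$, and translate the final duality-gap bound into the multiplicative statement in \eqref{approxgmdef} using $\sum_\ell \|\z_\gm - \z_\ell\| \le \sum_\ell \|\z_\gm^* - \z_\ell\| + \text{(gap)} \le (1+\eps_0)\sum_\ell \|\z_\gm^* - \z_\ell\|$, since the optimal value of $f$ lower bounds its own magnitude. The hardest step, and where I would spend most of the effort, is the one-step Newton progress lemma under the smoothed objective; everything else is bookkeeping around standard interior point and iterative linear system technology.
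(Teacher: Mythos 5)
This statement is not proved in the paper: it is imported wholesale as Theorem~1 of Cohen et al.\ \cite{cohen2016geometric}, cited as a black box, and used only through its conclusion. There is therefore no ``paper's own proof'' to compare your attempt against; the only material the paper offers is the reproduction of \cite[Algorithm~1]{cohen2016geometric} in Appendix~\ref{gm_append} (as $\mathtt{AccurateMedian}$ with subroutines $\mathtt{ApproxMinEig}$ and $\mathtt{LineSearch}$).

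Your sketch is in the right general territory --- Cohen et al.\ do indeed use a smoothed penalized objective $f_t$ and a path-following outer loop with $O(\log(L/\eps_0))$ stages --- but the description of the inner work is not what they actually do, and the discrepancy matters for the claimed per-iteration cost. You propose to ``solve each Newton system approximately using a preconditioned iterative solver (conjugate gradient on a sketched/reweighted linear system).'' That is not the mechanism in \cite{cohen2016geometric}. Their key structural observation is that $\nabla^2 f_t(x)$ is, up to controllable error, a scalar multiple of the identity plus an essentially rank-one perturbation. This lets them avoid solving any $d\times d$ linear system at all: the inner iteration consists of (i) $\mathtt{ApproxMinEig}$, a power-method approximation of the smallest-eigenvalue direction of $\nabla^2 f_t$, which only needs $O(\log(L/\eps_0))$ matrix-vector products each costing $O(Ld)$; and (ii) $\mathtt{LineSearch}$, a one-dimensional search along that direction evaluated via centering. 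That is where the $O(Ld\log^2(L/\eps_0))$ per-iteration cost in the claim actually comes from, and also where the constant failure probability $c_{\tiny\text{approxGM}}$ enters (the randomness is in the power-method initialization, not in a sketched CG solve). If you want to present a faithful proof outline of the imported claim, you should replace the CG/Newton-system step with the min-eigenvector-plus-line-search step and tie the $\log^2$ factor to the power-method iteration count times the per-product cost $O(Ld)$. Within the scope of this paper, though, the expected answer is simply to cite \cite[Theorem~1]{cohen2016geometric} and move on, as the authors do.
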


The use of the above result allows us to bound the iteration complexity of all our algorithms. This, in turns, allows us to get a bound on the total communication cost and the total time cost.
Although it has a simple guarantee, the algorithm \cite[Algorithm 1]{cohen2016geometric} itself is quite complicated. The authors of \cite{cohen2016geometric} have not shown any experimental results with it. To our best knowledge, nor have any other authors in follow-up work that cites it. The algorithm used in practice for approximating the GM is the Weiszfeld's algorithm initialized using the average of the $\z_\ell$'s \cite{weiszfeld_original}. This is an iteratively re-weighted least squares type algorithm.
 We provide both algorithms in Appendix \ref{gm_append}.

\begin{algorithm}[t]
	\caption{Subspace Median}\label{NEW_PCA_1}
\hspace*{\algorithmicindent} \textbf{Input} $\D_\ell$, $\ell \in [L]$; or $\bPhi_\ell$, $\ell \in [L]$. \\
    \hspace*{\algorithmicindent} \textbf{Parameters} $T_\gm$, $T_{pow}$
	\begin{algorithmic}[1]
		\STATE \underline{\textbf{Nodes $\ell=1,...,L$}}
		\STATE Compute  top $r$ singular vectors, $\Uhat_\ell$, of $\D_\ell$ (equivalently of $\bPhi_\ell := \D_\ell \D_\ell^\top$).
\\ (Can use power method with $T_{pow}$ iterations)
			
		\STATE \underline{\textbf{Central Server}}
\STATE Orthonormalize: $\U_\ell \leftarrow QR(\Uhat_\ell)$, $\ell \in [L]$
		\STATE Compute Projection Matrices: $\pul \leftarrow \U_\ell \U_\ell^\top$, $\ell \in [L]$
		\STATE Compute their GM: $\P_\gm \leftarrow approxGM\{\pul, \ \ell \in [L]\}$
\\ (Use \cite[Algorithm 1]{cohen2016geometric} with parameter $T_\gm$).
		\STATE Find $\ell_{best} = \arg \min_\ell \norm{\pul - \P_\gm}_F$
		\STATE Output $\U_{out} = \U_{\ell_{best}}$
	\end{algorithmic}
\end{algorithm}

\section{Resilient Federated Subspace Estimation} \label{res_sub_estim}

\subsection{Proposed solution: Subspace Median}
Recall that our goal is to obtain a reliably accurate estimate of $\mathrm{span}(\Ustar)$, which is an $r$-dimensional subspace in $\Re^n$, when each node computes an estimate $\U_\ell$ of it by computing the top $r$ singular vectors of $\bPhi_\ell$.. Some nodes can be Byzantine (Assumption \ref{byzassu}).  
We develop a solution approach that relies on the geometric median (GM). Notice from Sec. \ref{GM_basics} that the GM is defined for quantities whose distance can be measured using the vector $l_2$ norm (equivalently, matrix Frobenius norm). Our solution adapts the GM to use it for subspaces by using the fact that the Frobenius norm between the projection matrices of two subspaces is another measure of subspace distance: $\|\P_\U - \P_{\Ustar}\|_F = \sqrt{2}\SD_F(\U,\Ustar) $ \cite[Lemma 2.5]{spectral_dist}.

Our proposed algorithm, which we refer to as ``Subspace Median'', relies on this fact. It proceeds as follows.
Each node computes the top $r$ singular vectors of its matrix $\bPhi_\ell$, denoted $\Uhat_\ell$, and sends these to the center. If node $\ell$ is good (non-Byzantine), then $\Uhat_\ell$ already has orthonormal columns; however if the node is Byzantine, then it is not. The center first orthonormalizes the columns of all the received $\Uhat_\ell$'s using QR. This ensures that all the $\U_\ell$'s have orthonormal columns.
It then computes the projection matrices $\P_{\U_\ell}:= \U_\ell \U_\ell^\top$, $\ell \in [L]$, followed by vectorizing them, computing their GM, and converting the GM back to an $n \times n$ matrix. Denote this by $\P_\gm$. 
Finally, the center finds the $\ell$ for which $\P_{\U_\ell}$ is closest to $\P_\gm$ in Frobenius norm and outputs the corresponding $\U_\ell$.

We should mention that this last step can also be replaced by finding the top $r$ singular vectors of $\P_\gm$. However, doing this requires time of order $n^2 r \log(1/\eps)$ while finding the closest $\P_{\U_\ell}$ only needs time of order $\max(n^2,L \log L)$.

Subspace Median is summarized in Algorithm \ref{NEW_PCA_1}.  We can prove the following for it. 
\begin{lemma}[Subspace-Median]\label{key_lemma}
For a $\delta > 0$, consider Algorithm \ref{NEW_PCA_1} with $T_\gm = C\log \left(\frac{Lr}{\delta}\right)$. Assume that Assumption \ref{byzassu} holds. 
	Assume that, for at least $(1-\tau)L$ nodes, the following holds:  
	\begin{equation}
		\Pr \left(\SD_F(\Ustar,\U_{\ell})\leq\delta \right) \geq 1-p.  \notag
	\end{equation}
Then,  w.p. at least $1- c_\approxgm - \exp(-L \psi(0.4-\tau,p) )$,   	
	\begin{equation}
		\SD_F(\Ustar,\U_{out})\leq 23\delta .  \notag
	\end{equation}
Here $\psi(a, b):=(1-a)\log \frac{1-a}{1-b}+a\log \frac{a}{b}$ for $0 < a,b < 1$ is the binary KL divergence, and $c_\approxgm$ is the numerical constant from Claim \ref{cohen_1}.
\end{lemma}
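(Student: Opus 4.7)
The plan is to lift the subspace problem into Euclidean space via the identity $\|\P_{\U_1} - \P_{\U_2}\|_F = \sqrt{2}\,\SD_F(\U_1, \U_2)$ quoted in the paper, apply a geometric-median concentration statement in that lifted space, and then invert the lift. First, after the QR step each $\U_\ell$ has orthonormal columns, so every $\pul = \U_\ell \U_\ell^\top$ is a rank-$r$ projection with $\|\pul\|_F = \sqrt{r}$; in particular the inputs fed into the approxGM routine live in a ball of radius $\sqrt{r}$, which matches the bounded-input hypothesis of Lemma~\ref{gm_new_1}. Equivalently, the hypothesis $\SD_F(\Ustar, \U_\ell) \le \delta$ becomes $\|\pul - \P_{\Ustar}\|_F \le \sqrt{2}\delta$ for each compliant node.

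Next I would count the compliant nodes. Among the at least $(1-\tau)L$ good nodes, each is non-compliant independently w.p.\ at most $p$, so the total number of non-compliant nodes is stochastically bounded by $\tau L$ (all Byzantine nodes) plus an independent $\mathrm{Bin}((1-\tau)L, p)$ variable from the good side. A standard binary-KL Chernoff tail bound then shows that, except on an event of probability at most $\exp(-L\,\psi(0.4-\tau, p))$, this total is at most $0.4L$, which leaves at least $0.6L$ (a strict majority) of the $\pul$'s within $\sqrt{2}\delta$ of $\P_{\Ustar}$.

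Third, I would invoke Lemma~\ref{gm_new_1} together with Claim~\ref{cohen_1}. Choosing $T_\gm = C\log(Lr/\delta)$ drives the approximation factor $\eps_0$ in Claim~\ref{cohen_1} small enough that, even after multiplication by the input diameter $\sqrt{r}$, it contributes only lower-order terms. On the intersection of the approximate-GM event (probability $\ge 1 - c_\approxgm$) and the counting event from Step~2, Lemma~\ref{gm_new_1} yields $\|\P_\gm - \P_{\Ustar}\|_F \le C_1\,\delta$ for an absolute constant $C_1$ that depends only on the majority fraction $0.6$.

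Finally, I would translate back. Since $\U_{out} = \U_{\ell_{best}}$ is chosen to minimize $\|\pul - \P_\gm\|_F$ over $\ell \in [L]$, for any compliant node $\ell^\star$ (which exists by Step~2),
$$ \|\P_{\U_{out}} - \P_\gm\|_F \le \|\P_{\U_{\ell^\star}} - \P_\gm\|_F \le \|\P_{\U_{\ell^\star}} - \P_{\Ustar}\|_F + \|\P_{\Ustar} - \P_\gm\|_F \le \sqrt{2}\delta + C_1\delta. $$
The triangle inequality $\|\P_{\U_{out}} - \P_{\Ustar}\|_F \le \|\P_{\U_{out}} - \P_\gm\|_F + \|\P_\gm - \P_{\Ustar}\|_F$ then gives $\|\P_{\U_{out}} - \P_{\Ustar}\|_F \le \sqrt{2}\delta + 2C_1\delta$, and dividing by $\sqrt{2}$ converts back into $\SD_F(\Ustar, \U_{out}) \le (1 + \sqrt{2}\,C_1)\delta$. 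The standard Minsker-style constant at the $0.4$-outlier threshold satisfies $1 + \sqrt{2}\,C_1 \le 23$, producing the claimed $23\delta$ bound, and union-bounding the two failure events gives the stated probability. The main obstacle is bookkeeping constants through Lemma~\ref{gm_new_1} and checking that the choice of $T_\gm$ really renders the approximate-GM slack negligible against $\delta$ in spite of the $\sqrt{r}$ diameter of the inputs; these are routine but they form the quantitative heart of the argument.
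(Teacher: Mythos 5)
Your proof follows essentially the same route as the paper's: lift subspaces to projection matrices via $\|\P_{\U_1}-\P_{\U_2}\|_F = \sqrt{2}\,\SD_F(\U_1,\U_2)$, apply Lemma~\ref{gm_new_1} to the vectorized $\pul$'s (bounded by $\sqrt{r}$ after the QR step), and close with two triangle inequalities through $\P_\gm$ using the minimizing property of $\ell_{best}$. One small imprecision worth noting: the $\eps_\gm\sqrt{r}$ slack from the approximate GM is not a lower-order term as you suggest---the paper tracks it explicitly by setting $\eps_\gm = \sqrt{2}\delta/\sqrt{r}$, so it contributes $5\sqrt{2}\delta$ of the total $11\sqrt{2}\delta$ bound on $\|\P_\gm - \pustar\|_F$---but your final accounting still correctly lands on $23\delta$.
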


\begin{proof} See Sec. \ref{proofs1}. \end{proof}

Combining this lemma with the Davis-Khan $\sin \Theta$ theorem (bounds the distance between the principal subspaces of two symmetric matrices) \cite{davis_kahan} and a guarantee for the power method  \cite{npm_hardt}, we can prove the following theorem.

\begin{remark}
We specify the power method just to have one algorithm for computing the top $r$ singular vectors of a matrix for which we can specify the time compleixty. 
It can be replaced by any other algorithm and our overall result remains the same.
\end{remark}

\begin{theorem}[Subspace-Median guarantee] \label{th:ro_pca}\label{main_res}  
Pick an $\eps<1$.
Assume that Assumption \ref{byzassu} holds and that $\sigmarstar-\sigmarplusstar \ge \Delta$ for a $\Delta > 0$.
Assume also that, for at least $(1-\tau)L$ node outputs, the following holds, for a $p>0$.
\[
\Pr\left\{\norm{\bPhi_\ell-\bPhi^*}\leq  \frac{\epsilon}{92\sqrt{r}}  \Delta \right\}\geq 1-p.
\]
Consider Algorithm \ref{NEW_PCA_1} with $T_\gm= C \log \left(\frac{L r}{\eps}  \right)$.
\ben
\item  Assume use of exact SVD at the nodes. 
Then, w.p. at least $1-c_\approxgm-\exp(-L\psi(0.4-\tau,p))$,
\[
\SD_F(\U_{out},\Ustar)\leq \eps
\]

\item Assume that the power method with $T_{pow}$ iterations is used for the SVD step. If $T_{pow} = C  \frac{\sigmarstar}{\Delta} \log ( \frac{n}{\eps}) $, then the above conclusion holds w.p. at least  $1-c_\approxgm-\exp(-L\psi(0.4-\tau,p + \frac{1}{n^{10}} ))$.

The communication cost is $nr$ per node.
The computational cost at the center is order $n^2 L  \log^3\left(\frac{L r}{\eps} \right)$.
 The computational cost at any node (when using power method) is order $ n \qell r  T_{pow} =  n\qell r \frac{\sigmarstar}{\Delta} \log (\frac{n}{\epsilon})$. 
\een
\end{theorem}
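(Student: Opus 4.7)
The plan is to convert the entry-wise spectral-gap hypothesis $\|\bPhi_\ell-\bPhi^*\|\le \eps\Delta/(92\sqrt r)$ into the subspace-distance hypothesis required by Lemma \ref{key_lemma}, and then invoke that lemma directly. Both parts of the theorem differ only in how the per-node subspace $\U_\ell$ is produced (exact $r$-SVD versus power method), so I will handle them by bounding $\SD_F(\U_\ell,\Ustar)$ in each case and feeding the bound into Lemma \ref{key_lemma}.

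For part 1, first apply the Davis--Kahan $\sin\Theta$ theorem to the symmetric matrices $\bPhi_\ell$ and $\bPhi^*$: since $\sigmarstar-\sigmarplusstar\ge\Delta$, one obtains $\SD_2(\U_\ell,\Ustar)\le 2\|\bPhi_\ell-\bPhi^*\|/\Delta$ for every good node $\ell$. Combining with the elementary inequality $\SD_F\le \sqrt r\,\SD_2$ for $r$-dimensional subspaces and the hypothesis on $\|\bPhi_\ell-\bPhi^*\|$, I get
\[
\SD_F(\U_\ell,\Ustar)\;\le\;\sqrt r\cdot\frac{2}{\Delta}\cdot\frac{\eps\Delta}{92\sqrt r}\;=\;\frac{\eps}{46}
\]
with probability at least $1-p$, for each of the $\ge(1-\tau)L$ good nodes. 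This is exactly the hypothesis of Lemma \ref{key_lemma} with $\delta=\eps/46$, and with $T_\gm=C\log(Lr/\eps)$ chosen large enough to match the $\log(Lr/\delta)$ requirement there. The conclusion of Lemma \ref{key_lemma} then gives $\SD_F(\U_{out},\Ustar)\le 23\delta=\eps/2\le\eps$ with failure probability at most $c_\approxgm+\exp(-L\psi(0.4-\tau,p))$.

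For part 2, the node now returns a power-method iterate $\U_\ell^{pow}$ rather than the exact top-$r$ singular vectors $\U_\ell^{SVD}$ of $\bPhi_\ell$. I would use the noisy/standard power-method guarantee from \cite{npm_hardt}: with $T_{pow}=C(\sigmarstar/\Delta)\log(n/\eps)$ iterations started from a random Gaussian initialization, with probability at least $1-1/n^{10}$ one has $\SD_F(\U_\ell^{pow},\U_\ell^{SVD})\le \eps/46$ (absorbing constants into $C$). A triangle-type bound on $\SD_F$ (using the basic fact $\SD_F(\U_1,\U_3)\le \SD_F(\U_1,\U_2)+\SD_F(\U_2,\U_3)$, or equivalently via projection matrices) then yields $\SD_F(\U_\ell^{pow},\Ustar)\le \eps/46+\eps/46=\eps/23$ for each good node with probability at least $1-(p+1/n^{10})$. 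Applying Lemma \ref{key_lemma} with $\delta=\eps/23$ now gives the stated bound, with the failure-probability exponent replaced by $\psi(0.4-\tau,p+1/n^{10})$.

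The complexities follow by bookkeeping: each node sends an $n\times r$ matrix (communication $nr$); the center runs $L$ QR factorizations ($O(n r^2 L)$), forms $L$ projection matrices, and invokes \cite[Algorithm 1]{cohen2016geometric} on vectors of length $n^2$ for $T_\gm$ iterations, giving $O(n^2 L\log^3(Lr/\eps))$ per Claim \ref{cohen_1}; each node's power-method cost is $O(n\qell r\,T_{pow})$. The only real subtlety is the probability accounting --- the per-node bound must hold simultaneously on the random sampling (failure $p$) and the random power-method initialization (failure $1/n^{10}$), and these must combine correctly with the counting argument used inside Lemma \ref{key_lemma} (that is precisely where the binomial-tail term $\exp(-L\psi(0.4-\tau,\cdot))$ arises); the rest is routine constant-chasing.
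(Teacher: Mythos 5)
Your proposal takes essentially the same route as the paper: Davis--Kahan to convert the covariance perturbation bound into a per-node subspace error, then Lemma~\ref{key_lemma} to aggregate, with the power-method error handled by a triangle inequality for part 2. Two small technical points you glossed over but which the paper handles explicitly: (i) the power method runs on $\bPhi_\ell$, not $\bPhi^*$, so its convergence rate is governed by $\sigma_r(\bPhi_\ell)-\sigma_{r+1}(\bPhi_\ell)$ rather than $\Delta$; the paper uses Weyl's inequality to show this gap is $\geq\Delta-2b_0$, which is fine because $b_0\ll\Delta$ but needs to be said; (ii) Claim~\ref{pm_lemma_1} bounds $\SD_2$, not $\SD_F$, so the node-level power-method tolerance must be set to $\eps_{pow}\sim b_0/\Delta$ and the $\sqrt r$ factor picked up when passing to $\SD_F$ --- this is why the proof in the paper ends up with $\log(nr/\eps)$ inside $T_{pow}$ before absorbing $\log r$ into the constant. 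Neither changes the result, so the proposal is correct.
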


\begin{proof} See Sec. \ref{proofs1}. \end{proof}

The assumption $\sigmarstar-\sigmarplusstar \ge \Delta$ (singular value gap) is needed for ensuring that the span of $\U_\ell$ computed at any good node is an accurate estimate of the span of $\Ustar$. It also decides the time complexity of the computation ($\Delta$ appears in the required number of power method iterations.

\subsection{Alternate solution 1: SVD on Resilient Covariance Estimation (SVD-ResCovEst)}
SVD-ResCovEst is the solution studied by Minsker \cite{gm_banach} and  described earlier. It involves computing the GM of (vectorized) $\bPhi_\ell$s, followed by obtaining the principal subspace ($r$-SVD) of the GM matrix.
In a federated setting, this is communication inefficient since it requires that each node $\ell$ either shares its raw data $\D_\ell$ with the center (this is a matrix of size $n \times q_\ell$), or,  that it shares $\bPhi_\ell = \D_\ell \D_\ell^\top/\qell$ (this is of size $n \times n$). For PCA, as we explain in the next section, this is also sample inefficient; it requires $q_\ell \ge n^2/\eps^2$. See Remark \ref{svdcovest_remark}.

\subsection{Alternate solution 2: Resilient Power Method (ResPowMeth)}
A natural way to make the SVD-ResCovEst approach communication-efficient is to borrow ideas from the sketching literature and  share $\bPhi_\ell \U_{any}$ for some (possibly random) $n \times r$ matrix $\U_{any}$. This idea is,  in fact, one iteration of the power method for computing the $r$-SVD of a matrix \cite{golub,distpca_review}. It can be converted into a provably correct solution by using the GM to modify the power method. This starts with a random Gaussian initialization, $\U_{rand}$, and implements the iteration: $\U \leftarrow \mathrm{QR}(\sum_\ell \bPhi_\ell \U)$. In our GM based modification, we replace the summation by the GM.
We refer to this solution as {\em Resilient Power Method (ResPowMeth)}, and summarize it in Algorithm \ref{gm_pm}. 
As we show next, ResPowMeth works with high probability (w.h.p.) if all the $\bPhi_\ell$'s are very accurate estimates of $\bPhi^*$.
The reason it needs to make the above assumption is because it computes the GM of the node outputs $\bPhi_\ell \U$ at each iteration including the first one. At the first iteration, $\U_0$ is a randomly generated matrix and thus, w.h.p., this is a bad approximation of the desired subspace $\mathrm{span}(\Ustar)$. Consequently, the same is true for the column span of $\tilde\U_\ell^+=\bPhi_\ell \U_0$. To understand this easily, suppose $\U_0$ is almost orthogonal to $\Ustar$, i.e., $\U_0^\top \Ustar \approx \bm{0}$. Then the span of $\tilde\U_\ell^+$ will be almost orthonormal to that of $\Ustar$.
Thus, unless all the $\Phi_\ell$s are very similar, the column spans of the different $\tilde\U_\ell^+$s will not be close. As a result, the GM of their projection matrices will not be able to distinguish between the good and Byzantine ones. There is a good chance that it approximates the subspace of the Byzantine one(s). This then means that the updated $\U$ is also a bad approximation of $\mathrm{span}(\Ustar)$. The same idea repeats at the second iteration. Thus, with significant probability, the subspace estimates do not improve over iterations.
This intuition is captured in the guarantee provided next. It becomes clearer in the direct one-step analysis that we provide in Appendix \ref{direct_respow}.
\begin{algorithm}[t]
	\caption{Resilient Power Method (ResPowMeth)}\label{gm_pm}
	\hspace*{\algorithmicindent} \textbf{Parameters} $T_{pow}$, $T_\gm$, $\thresh$
	\begin{algorithmic}[1]
		\STATE \underline{\textbf{Central Server}} Randomly Initialize $\U_{rand}$ with i.i.d standard Gaussian entries. Set $\U_0 = \U_{rand}$.
		\FOR {$t\in T_{pow}$}
		\STATE \underline{\textbf{Nodes $\ell=1,...,L$}}
		\STATE Compute $\bPhi_\ell \U_{t-1}$
		\STATE \underline{\textbf{Central Server}}
		\STATE 
		$GM\leftarrow approxGM\left( \{ vec(\bPhi_\ell\U_{t-1}), \ \ell \in [L] \} \setminus \{\ell: \|\bPhi_\ell\U_{t-1}\|_F > \thresh\} \right)$
		\\ (Use \cite[Algorithm 1]{cohen2016geometric} with $T_\gm$ iterations on the set of $\bPhi_\ell\U_{t-1}$s whose Frobenius norm is below $\thresh$)
		\STATE Orthonormalize: using QR $GM \stackrel{QR}{=} \hat{\U} \bm{R}$
		\STATE Return $\U_{t}\leftarrow \hat{\U}$
		\ENDFOR
		\STATE Output $\U_{out} \leftarrow \U_{T_{pow}}$
	\end{algorithmic}
\end{algorithm}

\begin{theorem}[ResPowMeth guarantee] \label{th:ro_pm}
Assume that Assumption \ref{byzassu} holds and that $\sigmarstar-\sigmarplusstar \ge \Delta$ for a $\Delta > 0$.
Consider ResPowMeth  (Algorithm \ref{gm_pm}) with $T_{pow} = C \frac{\sigmarstar}{\Delta} \log (\frac{n}{\epsilon})$,
$T_\gm =  \log \left(\frac{L nr }{\eps}\right) $, and $\thresh = 1.1 \sigmaonestar \sqrt{r}$.
Suppose, for at least $(1-\tau)L$ node outputs, the following holds
\[
\Pr\left\{\norm{\bPhi_\ell-\bPhi^*}\leq \frac{1}{70}\min\left( \frac{\eps}{\sqrt{r}}, \frac{1}{2\sqrt{n} r}\right) \Delta \right\}\geq 1-p.
\]
Then w.p. at least $1 - c_\approxgm - c - L p -\exp(-L\psi(0.4-\tau,p)) $ 
$
\SD_F(\U_{out},\Ustar)\leq \epsilon.
$
The communication cost is $nr T_{pow} = C  nr \frac{\sigmarstar}{\Delta} \log (\frac{n}{\epsilon})$ per node.
 The computational cost at the center is $nrL \log^3\left(\frac{L nr }{\eps}  \right)  \cdot T_{pow}= nrL \frac{\sigmarstar}{\Delta} \log^3\left(\frac{L nr }{\eps}  \right) \log (\frac{n}{\eps})$.
 The computational cost at any node is $n \qell r T_{pow} = n \qell r \frac{\sigmarstar}{\Delta} \log (\frac{n}{\epsilon})$. 
\end{theorem}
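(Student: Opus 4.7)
The plan is to view ResPowMeth as a noisy power method whose per-iteration perturbation is exactly the discrepancy between the approximate geometric median $GM_t$ and the ideal centralized product $\bPhi^*\U_{t-1}$. The proof then has three modular layers: (i) a per-iteration bound on each good node's output $\bPhi_\ell\U_{t-1}$ in Frobenius norm, which simultaneously gives a proximity bound to $\bPhi^*\U_{t-1}$ and a norm bound below $\thresh$; (ii) application of the Frobenius-norm bounded-input GM lemma (Lemma \ref{gm_new_1} alluded to in Sec.~\ref{novelty}) together with Claim \ref{cohen_1} to convert the per-node bounds into a bound on $\|GM_t-\bPhi^*\U_{t-1}\|_F$; and (iii) invocation of the Hardt-Price noisy power method guarantee \cite{npm_hardt} to turn this per-iteration noise into an $\eps$-accurate subspace recovery after $T_{pow}$ iterations.

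Let $\delta_0:=\frac{\Delta}{70}\min(\eps/\sqrt{r},\,1/(2\sqrt{n}\,r))$ be the error tolerance from the hypothesis. For any good node $\ell$ whose accuracy event holds, $\|\bPhi_\ell-\bPhi^*\|\le\delta_0$; and since $\U_{t-1}$ has orthonormal columns, $\|\bPhi_\ell\U_{t-1}-\bPhi^*\U_{t-1}\|_F\le\delta_0\sqrt{r}$ while $\|\bPhi_\ell\U_{t-1}\|_F\le(\sigmaonestar+\delta_0)\sqrt{r}\le\thresh$. Hence every such good node passes the threshold filter, whereas any surviving Byzantine output has Frobenius norm at most $\thresh$. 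Applying the bounded-input GM lemma to the filtered list---at least $(1-\tau)L$ of whose entries lie within $\delta_0\sqrt{r}$ of the common target $\bPhi^*\U_{t-1}$, and all of which are bounded by $\thresh$---then yields, with probability at least $1-c_\approxgm$, $\|E_t\|_F:=\|GM_t-\bPhi^*\U_{t-1}\|_F\lesssim\delta_0\sqrt{r}$ once the approximation tolerance $\eps_0$ is chosen small, which $T_\gm=\log(Lnr/\eps)$ guarantees.

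After QR, the algorithm realizes the iteration $\U_t=QR(\bPhi^*\U_{t-1}+E_t)$, which is precisely the noisy power method of \cite{npm_hardt}. For a standard Gaussian initialization, $\sigma_{\min}(\Ustar{}^\top\U_0)\gtrsim 1/\sqrt{nr}$ holds w.p.\ $\ge 1-c$. The Hardt-Price guarantee then needs two spectral-norm noise conditions, $\|E_t\|\lesssim\Delta\eps$ (for convergence) and $\|\Ustar{}^\top E_t\|\lesssim\Delta/\sqrt{nr}$ (for the initial direction to grow monotonically). Our Frobenius bound $\|E_t\|_F\lesssim\delta_0\sqrt{r}$ dominates both, and this is exactly why $\delta_0$ is chosen as the minimum of two terms: $\delta_0\sqrt{r}\le\Delta\eps/70$ handles convergence while $\delta_0\sqrt{r}\le\Delta/(140\sqrt{nr})$ handles initialization. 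With $T_{pow}=C(\sigmarstar/\Delta)\log(n/\eps)$, NPM then delivers $\SD_F(\U_{T_{pow}},\Ustar)\le\eps$. The complexity claims follow by counting: each iteration costs $nr$ communication per node and $O(n\qell r)$ computation per node, while the center pays the GM subroutine cost from Claim \ref{cohen_1} per iteration.

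The delicate part is the Frobenius-to-spectral gap in Step (iii): the GM lemma only gives Frobenius-norm control, whereas NPM is phrased in spectral norm, and the loss of $\sqrt{r}$ in passing from one to the other is precisely what forces the $\sqrt{r}$ inside both branches of the hypothesis and is the reason why ResPowMeth is dramatically more sample-hungry than Algorithm \ref{NEW_PCA_1}. A secondary subtlety is that $\thresh$ must be large enough to admit every good node yet small enough to truly bound the GM inputs; because the bounded-GM lemma is uniform over all bounded inputs, a Byzantine node cannot benefit by tuning its output to sit just below $\thresh$. The probability budget of the theorem is assembled by union-bounding four failure modes: $c_\approxgm$ for the approximate GM, $c$ for the Gaussian initialization, $Lp$ from a per-node union bound ensuring all good nodes' accuracy events hold (so all good nodes uniformly pass the threshold across iterations), and $\exp(-L\psi(0.4-\tau,p))$ from the Chernoff argument (as in Lemma \ref{key_lemma}) that at least $(1-0.4)L$ of the $(1-\tau)L$ good nodes are accurate.
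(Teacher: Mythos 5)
Your proposal is correct and mirrors the paper's own proof: both cast ResPowMeth as the Hardt--Price noisy power method with per-iteration perturbation $\G_t=GM_t-\bPhi^*\U_{t-1}$, bound $\|\G_t\|_F$ via the thresholded geometric-median lemma (what you re-derive from Lemma~\ref{gm_new_1} is exactly Corollary~\ref{gm_new_2}), identify the two NPM conditions ($\|\G_t\|\lesssim\eps\Delta$ for convergence and $\|\Ustar{}^\top\G_t\|\lesssim\Delta/\sqrt{nr}$ from the initialization condition plus Fact~\ref{f:r_bound}) as the source of the $\min(\eps/\sqrt{r},1/(\sqrt{n}r))$ in the hypothesis, and assemble the probability budget from the same four terms ($c_\approxgm$ for approximate GM, $c$ for the Gaussian start, $Lp$ from the union bound guaranteeing good nodes pass $\thresh$, and the Chernoff term). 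No gaps.
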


\begin{proof} See Sec. \ref{proofs1}. A second more illustrative proof is provided in Appendix \ref{direct_respow}. \end{proof}

Observe that this result assumes $\|\bPhi_\ell - \bPhi^*\| \lesssim \min(\eps/\sqrt{r}, 1/(\sqrt{n} \ r)) \cdot \Delta$. 
The $1/\sqrt{n}r$ factor makes this a very stringent requirement, e.g., even to get an $\eps = 0.5$ accurate subspace estimate, we need $\|\bPhi_\ell - \bPhi^*\| \lesssim \Delta / \sqrt{n} \  r$.
On the other hand, the Subspace Median guarantee only assumes $\|\bPhi_\ell - \bPhi^*\| \lesssim (\eps/\sqrt{r})\Delta$.
As we will see in the next section, this translates into a much better sample complexity for PCA for Subspace Median than for ResPowMeth.

\begin{table*}[t]
	\begin{center}
\resizebox{0.95\linewidth}{!}{
		\begin{tabular}{|l|l|l|l|l|}
			\hline
			\textbf{Methods}$\rightarrow$ & \textbf{SVD-ResCovEst} & \textbf{ResPowMeth} &  \textbf{SubsMed}& \textbf{Basic PowMeth, no attack}   \\
& \cite[Cor 4.3]{gm_banach}, \cite[Thm 4.35]{diakonikolas2023algorithmic}  & \textbf{(Proposed modific of \cite[Cor 4.3]{gm_banach})} & \textbf{(Proposed)}  & {(Baseline)} \\
			\hline
			\textbf{Sample Comp for PCA} %
			      &$\frac{n^2}{\epsilon^2} \cdot L$ & $\max\left(n^2 r^2 ,\frac{n}{\epsilon^2}\right) \cdot L$ & $\frac{nr}{\epsilon^2} \cdot L $ & $\frac{n}{\epsilon^2}$\\
$q \geq  C  K^4 \frac{\sigmaonestar{}^2}{\Delta^2} \times$  &&&& \\
&&&&\\ \hline
			\textbf{Communic Cost}   &$n^2$ &   $nr \frac{\sigmarstar}{\Delta} \log (\frac{n}{\epsilon})$ & $nr$  & $nr \frac{\sigmarstar}{\Delta} \log (\frac{n}{\epsilon})$ \\
&&&&\\ \hline
\textbf{Compute Cost - node}    & $n^2\qell$ &   $ n \qell r \frac{\sigmarstar}{\Delta} \log (\frac{n}{\epsilon})$ &  $ n\qell r \frac{\sigmarstar}{\Delta} \log (\frac{n}{\epsilon})$ & $ n \qell r \frac{\sigmarstar}{\Delta} \log (\frac{n}{\epsilon})$ \\
&&&&\\			\hline
			\textbf{Compute Cost - center}    & $n^2 L  \log^3\left(\frac{L r}{\eps}\right)$ &   $\frac{\sigmarstar}{\Delta} nrL  \log (\frac{n}{\eps}) \log^3\left(\frac{L n}{\eps} \right) $ &  $n^2 L  \log^3\left(\frac{L r}{\eps}  \right)$& $\frac{\sigmarstar}{\Delta} nrL  \log (\frac{n}{\eps})$ \\
&&&&\\ \hline
		\end{tabular}
}
	\end{center}
\vspace{-0.1in}
\caption{\sl{
Comparing Subspace Median (SubsMed) with SVD-ResCovEst and Resilient Power Method (ResPowMeth) and with the basic power method for a no-attack setting. We obtain complexities for guaranteeing $\SD_2(\U,\Ustar) \le \eps$. SubsMed and ResPowMeth only bound $\SD_F(\U,\Ustar)$ and this is why the sample complexities for these contain an extra factor of $r$. This table summarizes the results of Corollary \ref{th:ro_pca_A} and the two remarks below it.
}
}
\vspace{-0.1in}
\label{table_compare}
\end{table*}

\section{Application 1: Resilient federated PCA} \label{res_pca}

Recall from Sec. \ref{probdef} that our goal is to reliably estimate the principal subspace of the unknown data covariance matrix $\Sigmastar$. Node $\ell$ has access to a subset of $\qell$ data vectors $\dstar_k$ arranged as columns of an $n \times \qell$ matrix $\D_\ell$.

\subsection{Subspace-Median (SubsMed) for resilient PCA}
Using its data, each node can compute the empirical covariance matrix $\hat\Sig_\ell:= \D_\ell \D_\ell^\top / \q$. This is an estimate of the true one, $\Sigmastar$.
This allows us to use Algorithm \ref{NEW_PCA_1} applied to $\D_\ell$ or $\hat\Sig_\ell$ to obtain a Byzantine resilient PCA solution, and use Theorem \ref{th:ro_pca} to analyze it.  
The sample complexity needed to get the desired bound on $\|\hat\Sig_\ell - \Sigmastar\|$ w.h.p. is obtained using \cite[Theorem 4.7.1]{versh_book}. Combining these two results, we can prove the following.

\begin{cor}[Subspace Median for PCA] \label{th:ro_pca_A}
Consider the PCA problem as defined in Sec. \ref{respca_prob}.
Assume that Assumption \ref{byzassu} holds and that $\sigmarstar-\sigmarplusstar \ge \Delta$ for a $\Delta > 0$. 
Consider Algorithm \ref{NEW_PCA_1} (SubsMed) with input $\bPhi = \D_\ell \D_\ell^\top / \qell$, and parameters as set in Theorem \ref{th:ro_pca}. If
\[
\qell:= \frac{q}{L} \geq C  K^4 \frac{\sigmaonestar{}^2}{\Delta^2} \cdot \frac{nr}{\epsilon^2},
\]
then, w.p. at least $1 - c_\approxgm -\exp(-L\psi(0.4-\tau, 2\exp(-n)+ n^{-10}))$,
$\SD_F(\U_{out},\Ustar)\leq \epsilon$.
\end{cor}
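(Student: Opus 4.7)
\textbf{Proof plan for Corollary \ref{th:ro_pca_A}.} The strategy is to verify the hypothesis of Theorem \ref{th:ro_pca} for the PCA setting by invoking a standard concentration bound on sample covariance matrices, and then to plug this into the Subspace-Median guarantee. Here $\bPhi_\ell = \hat{\Sig}_\ell := \D_\ell \D_\ell^\top / \qell$ and $\bPhi^* = \Sigmastar$. The target bound to verify, in the notation of Theorem \ref{th:ro_pca}, is $\|\hat{\Sig}_\ell - \Sigmastar\| \le (\eps/(92\sqrt{r}))\Delta$ with probability at least $1-p$ for each good node, for an explicit $p$.

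First I would apply the sub-Gaussian sample covariance concentration bound, \cite[Theorem 4.7.1]{versh_book}. Since the $\dstar_k$'s are i.i.d., zero-mean, sub-Gaussian with covariance $\Sigmastar$ and sub-Gaussian norm $K\sqrt{\sigmaonestar}$, this result gives
\begin{equation*}
\|\hat\Sig_\ell - \Sigmastar \| \;\le\; C K^2 \sigmaonestar \left( \sqrt{\tfrac{n+t}{\qell}} + \tfrac{n+t}{\qell}\right)
\end{equation*}
with probability at least $1 - 2\exp(-t)$ for any $t \ge 0$. Choosing $t = n$ and assuming $\qell \ge n$ so that the first (square-root) term dominates, this simplifies to $\|\hat\Sig_\ell - \Sigmastar\| \le C K^2 \sigmaonestar \sqrt{n/\qell}$ with probability at least $1 - 2\exp(-n)$. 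Requiring the right-hand side to be at most $(\eps/(92\sqrt{r}))\Delta$ rearranges to the sample complexity
\begin{equation*}
\qell \;\ge\; C\, K^4 \frac{\sigmaonestar{}^2}{\Delta^2}\cdot \frac{nr}{\eps^2},
\end{equation*}
which is exactly the hypothesis in the corollary. Thus each good node satisfies the input condition of Theorem \ref{th:ro_pca} with failure probability at most $p = 2\exp(-n)$.

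Second, I would invoke Theorem \ref{th:ro_pca} (part 2, since the statement does not assume exact SVD and since the final probability in the corollary contains the $n^{-10}$ tail from the power-method guarantee). With $T_{pow}$ and $T_\gm$ set as in that theorem, using $p = 2\exp(-n)$ for the covariance concentration, the power method contributes an additional $n^{-10}$ to the per-node failure probability, yielding the combined per-node failure probability $p' = 2\exp(-n) + n^{-10}$. Theorem \ref{th:ro_pca} then certifies $\SD_F(\U_{out},\Ustar) \le \eps$ with probability at least $1 - c_\approxgm - \exp(-L \psi(0.4-\tau,\, p'))$, which matches the stated conclusion.

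I do not expect any genuine obstacle: the only non-mechanical step is choosing $t$ appropriately in Vershynin's bound so that the exponent $\exp(-t)$ decays fast enough to make $\psi(0.4-\tau, p')$ large, while at the same time absorbing the $t$ into the sample complexity without altering its stated form. The choice $t=n$ accomplishes both (it keeps the $\sqrt{n/\qell}$ scaling of the covariance error and gives the $2\exp(-n)$ tail seen inside $\psi$). Everything else is a direct substitution into Theorem \ref{th:ro_pca}.
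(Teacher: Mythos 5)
Your proposal is correct and follows essentially the same route as the paper: apply Vershynin's sub-Gaussian covariance concentration bound (\cite[Theorem 4.7.1]{versh_book}, restated as Claim \ref{th:sigma_hpb}) with the tail parameter set to $n$, require the resulting bound on $\|\hat\Sig_\ell - \Sigmastar\|$ to be below $\eps\Delta/(92\sqrt{r})$ to derive the sample complexity, and then plug $p = 2\exp(-n)$ into part 2 of Theorem \ref{th:ro_pca} (the power-method version, which contributes the additional $n^{-10}$ term inside $\psi$). The only difference is that you spell out the generic form of Vershynin's bound before specializing $t=n$, whereas the paper cites the already-specialized form; the content is identical.
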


\begin{proof}
We prove it in Sec. \ref{proof_ro_pca_A}. It is  an immediate corollary of Theorem \ref{th:ro_pca} and \cite[Theorem 4.7.1]{versh_book}. 
\end{proof}

\begin{remark}[ResPowMeth for PCA]\label{respowmeth_remark}
In the setting of Corollary \ref{th:ro_pca_A}, consider Algorithm \ref{gm_pm} (ResPowMeth). If $\qell\geq C  K^4 \frac{\sigmax^2}{\Delta^2}  \cdot \max\left(\frac{n}{\epsilon^2},n^2 r^2 \right)$,
then $\SD_F(\U_{out},\Ustar)\leq \epsilon$. 
\end{remark}

\begin{remark}[SVD-CovEst for PCA]\label{svdcovest_remark}
In the setting of Corollary \ref{th:ro_pca_A}, consider SVD-ResCovEst (SVD on GM of nodes' covariance matrix estimates) studied in \cite[Corollary 4.3]{gm_banach}.
By using \cite[Corollary 4.3]{gm_banach}, and using the fact that, $\E\norm{\dstar_k}^4 - \text{trace}(\Sigma^*{}^2)\leq Cn^2K^4 \sigmax^2$ under the sub-Gaussian assumption, we can conclude the following:
If $q_\ell \ge CK^4 \frac{\sigmax^2}{\Delta^2} n^2 /\eps^2$, then, $\SD_F(\U_{out},\Ustar)\leq \epsilon$ with constant probability. 

The reason this needs $q_\ell$ of order $n^2$ is because it first obtains a resilient estimate of the entire $n \times n$ covariance matrix, followed by $r$-SVD on it. For resilient estimation, it needs to use the Frobenius norm as the error measure. 
The robust estimator studied in \cite[Theorem 4.35]{diakonikolas2023algorithmic} uses a different algorithm, but this also needs order $n^2 /\eps^2$ sample complexity and order $n^2$ communication complexity.
\end{remark}

 \begin{remark}[Generalizations of Theorem \ref{th:ro_pca}]\label{gens}
(1) Theorem \ref{th:ro_pca} also holds if the $\d_k$'s are not i.i.d., but are zero mean, independent, sub-Gaussian, and with covariance matrices that are of the form  $\E[\d_k \d_k^\top] = \Ustar \bm{S}_k \Ustar{}^\top$ with all $\bm{S}_k$'s being such that 
their $r$-th singular value gap is at least $\Delta$.

(2) We can also combine Theorem \ref{th:ro_pca} with the sample complexity bound for estimating approximately LR covariance matrices given in \cite[Corollary 5.52 and Remark 5.53]{vershynin2010introduction} to show that, in this case, a much lower sample complexity suffices. Suppose  $\dstar_k$ are i.i.d., {\em zero mean}, sub-Gaussian, have covariance matrix $\Sigmastar$, and are bounded with $\|\dstar_k\|^2 \le K^2 trace(\Sigmastar)$ and $trace(\Sigmastar) = r_0 \sigmax$ with stable rank $r_0 \ll n$ (approximately LR matrix). Then,  if $ \qell \ge C K^4 \frac{\sigmax^2}{\Delta^2} (\max(r_0, r)^2 \log n) / \eps^2$, then $\SD_F(\U_{out},\Ustar)\leq \epsilon$. Here $r_0$ is the stable rank. 

(3) We can also do the above for PCA with missing data by combining with \cite[Theorem 3.22]{spectral_init_review}.
\end{remark}

\subsection{Subspace Median-of-Means (Subspace MoM)} \label{subspace_gmom} 
As is well known, the use of median of means (MoM), instead of median, improves (reduces) the sample complexity needed to achieve a certain recovery error, but tolerates a smaller fraction of Byzantine nodes. It is thus useful in settings where the number of bad nodes is small. 
We show next how to obtain a communication-efficient and private GMoM estimator for federated PCA. Pick an integer $\tL \le L$.
In order to implement the ``mean'' step, we need to combine samples from $\rho = L/\tL$ nodes, i.e., we need to find the $r$-SVD of 
matrices $\D_{(\vartheta)} = [\D_{(\vartheta,1)}, \D_{(\vartheta,2)}, \dots, \D_{(\vartheta,\rho)}]$, for all $\vartheta \in [\tL]$. Recall that $_{(\vartheta,\ell)} = _{ (\vartheta - 1)\rho + \ell}$.
This needs to be done without sharing the entire matrix $\D_{(\vartheta,1)}$.
We do this by implementing $\tL$ different federated power methods, each of which combines samples from a different minibatch of $\rho$ nodes. The output of this step will be $\tL$ subspace estimates $\U_{(\vartheta)}$, $\vartheta \in [\tL]$. These serve as inputs to the Subspace-Median algorithm to obtain the final Subspace-MoM estimator.
We summarize the complete the algorithm in Algorithm \ref{alg:sub_gmom_pm}. We should mention that $\tL=L$ is the subspace median special case.

As long as the same set of $\tau L$ nodes are Byzantine for all the power method iterations, we can prove the following.
\begin{cor}\label{th:sub_gmom} 
	Consider Algorithm \ref{alg:sub_gmom_pm} and the setting of Corollary \ref{th:ro_pca_A}. 
Assume that the set of Byzantine nodes remains fixed for all iterations in this algorithm and the size of this set is at most $\tau L$ with  $\tau< 0.4\tL/L$. 
	If $$\frac{q}{L} = \q 	\geq  CK^4 \frac{\sigmaonestar{}^2}{\Delta^2} \frac{nr}{\epsilon^2} \cdot \frac{\tL}{L}$$ then, the conclusion of  Corollary \ref{th:ro_pca_A} holds. 
	The communication cost is $T_{pow}nr = nr  \frac{\sigmarstar}{\Delta} \log ( \frac{n}{\eps})$ per node.
The computational cost at the center is order $n^2 \tL  \log^3\left(\frac{\tL r}{\eps} \right)$.
 The computational cost at any node is order 
 $n \qell r \frac{\sigmarstar}{\Delta} \log (\frac{n}{\epsilon})$. 
\end{cor}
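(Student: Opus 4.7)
The plan is to reduce Corollary \ref{th:sub_gmom} to a direct application of Lemma \ref{key_lemma} / Theorem \ref{main_res} at the level of the $\tL$ mini-batches rather than the $L$ original nodes, treating each mini-batch output $\U_{(\vartheta)}$ as a single ``super-node'' estimate of $\mathrm{span}(\Ustar)$. The first step is a counting argument: since at most $\tau L$ original nodes are Byzantine and the mini-batches are disjoint and of size $\rho = L/\tL$, at most $\tau L$ of the $\tL$ mini-batches can contain any Byzantine node. Hence the effective bad fraction at the super-node level is $\tau' := \tau L/\tL$, and the hypothesis $\tau < 0.4\,\tL/L$ gives $\tau' < 0.4$, matching Assumption \ref{byzassu} for the outer Subspace-Median call on $\tL$ inputs.

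Second, I would control the accuracy of each good mini-batch output. Because the Byzantine set is assumed to be fixed across iterations, every one of the $T_{pow}$ inner power-method iterations inside a good mini-batch is clean: at each $t$ the center exactly recovers $\bPhi_{(\vartheta)}\U_{t-1}$, where $\bPhi_{(\vartheta)} := \frac{1}{\rho\q}\sum_{\ell=1}^{\rho}\D_{(\vartheta,\ell)}\D_{(\vartheta,\ell)}^\top$ is the empirical covariance built from $\rho\q = \q L/\tL$ i.i.d.\ sub-Gaussian samples. The sample hypothesis $\q \ge CK^4(\sigmaonestar)^2/\Delta^2 \cdot (nr/\epsilon^2)\cdot \tL/L$ rewrites as $\rho\q \ge CK^4 (\sigmaonestar)^2\, nr/(\Delta^2 \epsilon^2)$, so \cite[Theorem 4.7.1]{versh_book} gives $\|\bPhi_{(\vartheta)}-\Sigmastar\|\le (\epsilon/(92\sqrt{r}))\Delta$ with probability at least $1-2\exp(-n)$. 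Combining this with the power-method convergence bound (a further $n^{-10}$ slack) then yields $\SD_F(\U_{(\vartheta)},\Ustar)\le\epsilon$ for each good mini-batch with probability at least $1-p$, where $p = 2\exp(-n)+n^{-10}$, exactly as in the proof of Theorem \ref{main_res} part~2.

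Third, I would feed the $\tL$ super-node outputs into the outer Subspace-Median step and invoke Lemma \ref{key_lemma} with bad fraction $\tau'$ and per-super-node tail probability $p$. This yields $\SD_F(\U_{out},\Ustar)\le \epsilon$ with probability at least $1 - c_\approxgm - \exp(-\tL\,\psi(0.4-\tau',p))$, which is the conclusion. The communication and compute bounds then follow immediately: each node still runs $T_{pow}$ power iterations sending an $n\times r$ matrix each, while the center now runs the GM on only $\tL$ projection matrices of size $n\times n$, giving the stated $n^2 \tL \log^3(\tL r/\epsilon)$ center cost and the unchanged per-node cost.

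The main obstacle, and the reason for the extra hypothesis that the Byzantine set is fixed across all iterations of the algorithm, is the clean-iteration claim in the second step. Without it, a mini-batch that is entirely good at iteration $1$ could be poisoned at some later iteration by a node that flips its status, and one would have to union-bound over iterations as well as over time-varying bad sets, which would break the clean reduction to a single super-node Subspace-Median on outputs $\U_{(\vartheta)}$. Everything else is essentially bookkeeping on top of Corollary \ref{th:ro_pca_A}, with $L$ replaced by $\tL$ and per-node sample size $\q$ replaced by the combined mini-batch size $\rho\q$.
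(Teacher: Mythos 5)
Your proof is correct and follows essentially the same route as the paper's: view each mini-batch as a ``super-node,'' note that the fixed-Byzantine-set assumption plus the disjointness of mini-batches forces the corrupted fraction at the super-node level to be $\tau L/\tilde{L} < 0.4$, apply \cite[Theorem 4.7.1]{versh_book} to the pooled covariance $\bPhi_{(\vartheta)}$ built from $\rho\q$ samples, and then invoke the Subspace-Median guarantee (Lemma \ref{key_lemma} / Theorem \ref{main_res}) on the $\tilde{L}$ outputs. Your explicit observation that the fixed-Byzantine-set assumption is what guarantees clean inner power iterations for the good mini-batches is exactly the point the paper's terse remark leans on, and your bookkeeping of the sample size ($\q \cdot L/\tilde{L} = \rho\q$) matches the paper's substitution of $\q\rho$ for $\q$ and $\tilde{L}$ for $L$.
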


\subsection{Discussion and Comparisons}

\subsubsection{Comparing Subspace-Median and Subspace-MoM}
For a chosen value of $\tL \le L$, the sample complexity required by subspace-MoM reduces by a factor of $1/\rho = \tL/L$, but its Byzantine tolerance also reduces by this factor. This matches what is well known for other MoM estimators, e.g., that for gradients used in \cite{dist_adversarial}. Also, the communication cost of Subspace-MoM is larger than that of Subspace Median since it implements a power method to share samples between subsets of nodes.

\subsubsection{Comparing Subspace-Median with SVD-ResCovEst and ResPowMeth}
Consider communication cost. SVD-CovEst has a very high cost of order $n^2$ while Subspace Median and ResPowMeth have much lower costs of order $nr$ and $nr \frac{\sigmin}{\Delta}$ times a log factor respectively. 
Consider sample cost. Both SVD-CovEst and ResPowMeth have a very high sample cost of order $n^2$ and order $n^2 r^2$ respectively for $\eps=c$.  Subspace Median has a sample cost of only order $nrL$.

In terms of computation cost at the nodes, SVD-CovEst is the most expensive, while both ResPowMeth and Subspace Median have the same cost. But, at the center, Subspace Median has a  higher cost by a factor $n/(r \log^3(n/\eps))$. In many practical federated applications, the nodes are power limited, and hence their computation cost, and communication cost, needs to be low.  In terms of total algorithm speed, communication cost/time is often the main bottleneck. The computational cost at the center is a lesser concern.

\subsubsection{Comparison with standard federated power method in the no-attack setting}
Observe that, for a given normalized singular value gap, the sample complexity (lower bound on $q$) needed by the above result is order $nr L /\eps^2$ while that needed for standard PCA (without Byzantine nodes) is order $n/\eps^2$ \cite[Remark 4.7.2]{versh_book}. 
The reason we need an extra factor of $L$ is because we are computing the individual node estimates using $\q = q/L$ data points and we need each of the node estimates to be accurate (to ensure that their ``median'' is accurate). This extra factor of $L$ is needed also in other work that uses (geometric) median, e.g., \cite{dist_adversarial} needs this too.
The reason we need an extra factor of $r$ is because we need use Frobenius subspace distance, $\SD_F$, to develop and analyze the geometric median step of Subspace Median.  The bound provided by the Davis-Kahan sin Theta theorem for $\SD_F$ needs an extra factor of $\sqrt{r}$.

The per-node computational cost of standard federated PCA is $n \q r T_{pow}$ while that for SubsMed is $n \q r T_{pow} + n^2 L T_\gm$. Ignoring log factors and treating the singular value gap as a numerical constant  (ignoring $T_{pow}$ and $T_\gm$), letting $\eps=c$, and substituting the respective lower bounds on $\q$, the PCA cost is $n^2 r$ while that for SubsMed for Byzantine-resilient PCA is $\max(n^2 r^2 ,n^2 L) = n^2 \max(r^2, L)$. Thus the computational cost is only $\max(r, L/r)$ times higher.

We summarize the comparisons in Table \ref{table_compare}.

\begin{algorithm}[t]
	\caption{\small{Subspace Median-of-Means. Recall that $_{(\vartheta,\ell)} = _{ (\vartheta - 1)\rho + \ell}$. 
	}}
\label{alg:sub_gmom_pm}
	\begin{algorithmic}[1]
		\STATE \textbf{Input:} Batch $\D_{(\vartheta)} = [\D_{(\vartheta,1)}, \D_{(\vartheta,2)}, \dots, \D_{(\vartheta,\rho)}]$, $\vartheta \in [\tL]$. \\ 
		\STATE \textbf{Parameters:} $T_{pow}$
		\STATE \underline{\textbf{Central Server}}
		\STATE  Randomly initialize $\U_{rand}$ with i.i.d standard Gaussian entries. Set $\U_{(\vartheta)} = \U_{rand}$.
		\FOR {$t \in [T_{pow}]$ }
		\STATE \underline{\textbf{Nodes $\ell=1,...,L$}}
		\STATE Compute  $\tilde\U_{(\vartheta,\ell)} \leftarrow \D_{(\vartheta,\ell)} \D_{(\vartheta,\ell)}^\top \U_{(\vartheta)}$,  $\ell \in (\vartheta - 1)\rho +[\rho]$, $\vartheta \in [\tL]$. Push $\tilde\U_{(\vartheta,\ell)}$ to center.
		\STATE \underline{\textbf{Central Server}}
 \STATE  Compute $ \U_{(\vartheta)} \leftarrow QR(\sum_{\ell=1}^{\rho}\tilde\U_{(\vartheta,\ell)})$, $\vartheta \in [\tL]$
		\STATE Push $\U_{(\vartheta)}$ to nodes $\ell \in (\vartheta - 1)\rho +[\rho]$.
		\ENDFOR
		\STATE Use Algorithm \ref{NEW_PCA_1} for the input $\{\U_{(\vartheta)}\}_{\vartheta=1}^{\tL}$
		\STATE {\bf Output} $\U_{out}$.
	\end{algorithmic}
	
\end{algorithm}

\section{Application 2: Horizontally federated LRCS} \label{lrccs_section}

\subsection{Problem setting}\label{lrccs_problem}

\subsubsection{Basic problem}
The LRCS problem involves recovering an $n \times q$ rank-$r$ matrix $\Xstar=[\xstar_1, \xstar_2, \dots, \xstar_q]$, with  $r \ll \min(q,n)$, from
$
\y_k := \A_k \xstar_k,  \ k \in [q]
\label{obsmod}
$
when $\y_k$ is an $m$-length vector with $m < n$, and the measurement matrices $\A_k$ are known and independent and identically distributed (i.i.d.) over $k$. We assume that each $\A_k$ is a ``random Gaussian'' matrix, i.e., each entry of it is i.i.d. standard Gaussian.
Let
$
\Xstar \stackrel{SVD}{=} \Ustar  \bSigma \Vstar{}^\top: = \Ustar \Bstar
$
denote its reduced (rank $r$) SVD, and $\kappa:= \sigmax/\sigmin$ the condition number of $\bSigma$.
Notice that each measurement $\y_\ik$ is a global function of column $\xstar_k$, but not of the entire matrix. As explained in \cite{lrpr_gdmin}, to make it well-posed (allow for correct interpolation across columns), we need the following incoherence assumption on the right singular vectors. 

\begin{assu}[Right Singular Vectors' Incoherence]\label{right_incoh}
	We assume that $\max_k \| \bstar_k \| \leq \mu \sqrt{r/q} \sigmax$ for a constant $\mu \geq 1$.
	\label{right_incoh}
\end{assu}

\subsubsection{Horizontal Federation}
Consider the $m \times q$ measurements' matrix,
\[
\Y=[\y_1,\y_2,...,\y_q]=[\A_1\xstar_1,\A_2\xstar_2,...,\A_q\xstar_q].
\]
We assume that there are a total of $L$ nodes and each node observes a different disjoint subset of $\m$ rows of $\Y$. Denote the set of indices of the rows available at node $\ell$ by $\cS_\ell$. Thus $|\cS_\ell|=\m = m/L$. 
We assume that node $\ell$ has access to $\Y_{\ell}$ and $\{(A_k)_\ell, k \in [q]\}$.  Here $(\A_k)_\ell:= \I_{\cS_\ell}^\top \A_k$ is $\m \times n$ and $\Y_\ell:= \I_{\cS_\ell}^\top \Y$ is of size $\m \times q$ with $\m = m/L$, and with $(\y_k)_\ell: = (\A_k)_\ell \xstar_k$ for all $k \in [q]$

Observe that the sub-matrices of rows of $\Y$, $\Y_\ell$, are identically distributed, in addition to being independent. Consequently, the same is true for the partial gradients computed at the different nodes. This is why, without extra assumptions, we can make our solution Byzantine resilient. On the other hand, column sub-matrices of $\Y$ are not identically distributed. In order to obtain provable guarantees for vertical LRCS, we will need extra assumptions that bound on the amount of heterogeneity in the data (and hence in the nodes' partial gradients). This is being studied in ongoing work.

\subsubsection{Byzantine nodes}
Assumption \ref{byzassu} holds. Also, the set of Byzantine nodes may change at each AltGDmin algorithm iteration. 

\subsection{Review of Basic altGDmin \cite{lrpr_gdmin}}
We first explain the basic idea \cite{lrpr_gdmin} in the simpler no-attack setting. AltGDmin imposes the LR constraint by expressing the unknown matrix $\X $ as $\X = \U \B$ where $\U$ is an $n \times r$ matrix and $\B$ is an $r \times q$ matrix.
In the absence of attacks, the goal is to minimize
\[
f(\U,\B): = \sum_{k=1}^q \|\y_k - \U \b_k\|^2
\]
AltGDmin proceeds as follows:
\ben
\item {\em Truncated spectral initialization:} Initialize $\U$ (explained below). 
\item At each iteration, update $\B$ and $\U$ as follows:
\ben
\item {\em Minimization for $\B$:} keeping $\U$ fixed, update $\B$ by solving $\min_{\B} f(\U, \B)$. Due to the form of the LRCS measurement model, this minimization decouples across columns, making it a cheap least squares problem of recovering $q$ different $r$ length vectors. It is solved as $\b_k \leftarrow (\A_k \U)^\dag \y_k$ for each $k \in [q]$.

\item {\em GD for $\U$:} keeping $\B$ fixed, update $\U$ by a GD step, followed by orthonormalizing its columns: 
$\U^+  \leftarrow QR(\U - \eta \nabla_{\U} f(\U,\B)))$.
Here $\nabla_{\U} f(\U,\B)=\sum_{k \in [q]}\A_k^{\top} (\A_k \U \b_k - \y_k) \b_k^{\top}$
\een
\een
The use of full minimization to update $\B$ is what helps ensure that AltGDmin provably converges, and that we can show exponential error decay with a constant step size (this statement treats $\kappa$ as a numerical constant) \cite{lrpr_gdmin,lrpr_gdmin_2}. Due to the decoupling in this step, its time complexity is only as much as that of computing one gradient w.r.t. $\U$. Both steps need time of order $mqnr$.
In a federated setting, AltGDmin is also communication-efficient because each node needs to only send $nr$ scalars (gradient w.r.t $\U$) at each iteration.%

We initialize $\U$ by computing the top $r$ singular vectors of
\[
\X_0 := \sum_k \A_k^\top (\y_{k})_\trnc (\alpha) \e_k^\top, \ \text{where} \  \y_\trnc(\alpha):=(\y \circ \indic_{|\y| \le  \sqrt\alpha}) 
\]
Here $ \alpha:= 9\kappa^2\mu^2 \sum_k \|\y_k\|^2 / mq$  and $\y_\trnc(\alpha)$ is a truncated version of the vector $\y$ obtained by zeroing out entries of $\y$ with magnitude larger than $\alpha$ 
(the notation $|\y|$ means $|\y|_i = |y_i|$ for each entry $i$, the notation $\indic_{\bm{z} \le \alpha}$ returns a 1-0 vector with 1 where $\bm{z}_j < \alpha$ and zero everywhere else, and $\bm{z}_1 \circ \bm{z}_2$ is the Hadamard product between the two vectors, i.e., the ``.*'' operation in MATLAB)

Sample-splitting is assumed to prove the guarantees. This means the following: we use a different independent set of measurements and measurement matrices $\y_k,\A_k, k \in [q]$ for each new update of $\U$ and of $\B$. We also use a different independent set for computing the initialization threshold $\alpha$.

{\em All expected values used below are expectations conditioned on past estimates (which are functions of past measurement matrices and measurements, $\A_k,\y_k$). For example, $\E[(\nabla_U f)_{\ell}]$ conditions on the values of $\U, \B_\ell$ used to compute it. This is also the reason why $\E[(\nabla_U f)_{\ell}]$ is different for different nodes; see Lemma \ref{new_term_lem}.
}

\subsection{Resilient Federated Spectral Initialization}

This consists of two steps. First the truncation threshold $\alpha=\frac{\tC}{m q}\sum_{k}\sum_{i} y_{ki}^2$ which is a scalar needs to be computed. This is simple: each node computes $\alpha_\ell=\frac{\tC}{\m q}\sum_{k}\sum_{i\in\cS_\ell} (y_\ell)_{ki}^2$ and sends it to the center which computes their median.

Next, we need to compute $\U_0$ which is the matrix of top $r$ left singular vectors of $\X_0$, and hence also of $\X_0 \X_0^\top$. Node $\ell$ has data to compute the $n \times q$ matrix $(\Xhat_0)_\ell$, defined as 
\begin{align}
(\Xhat_0)_\ell := \sum_{k=1}^q (\A_k)_\ell{}^\top ((\y_k)_\ell)_\trnc \e_k^\top,
\label{Xhat0_ell}
\end{align}
Observe that $\Xhat_0 = \sum_\ell (\Xhat_0)_\ell$. If all nodes were good (non-Byzantine), we would use this fact to implement the federated power method for this case. 
However, some nodes can be Byzantine and hence this approach will not work.
For reasons similar to those explained in Sec. \ref{res_sub_estim}, (i) an obvious GM-based modification of the federated power method will not work either, and (ii) nodes cannot send the entire $(\Xhat_0)_\ell$ (this is too expensive to communicate). We instead use Subspace Median, Algorithm \ref{NEW_PCA_1}, applied to $\D_\ell =(\Xhat_0)_\ell$. This is both communication-efficient and sample-efficient. It can be shown that it will work under a sample complexity lower bound that is comparable to that needed in the attack-free setting. We summarize this in Algorithm \ref{node_init}. We can obtain a guarantee for this approach by applying Theorem \ref{main_res} with $\bPhi_\ell \equiv (\Xhat_0)_\ell (\Xhat_0)_\ell^\top / q$ and using the results from \cite{lrpr_gdmin,lrpr_gdmin_2} to ensure that the assumption needed by Theorem \ref{main_res} holds.  We directly state a guarantee for the GM of means estimator developed next.
The guarantee for
Algorithm \ref{node_init} is a special case of that for the GM of means estimator developed next with $\tL=L$, and thus its guarantee is also given by Corollary \ref{u_init_gmom} with $\tL=L$.

\begin{algorithm}[t]
	\caption{Byz-AltGDmin: Initialization using Subspace Median}\label{node_init}
	\begin{algorithmic}
\STATE	 \textbf{Input} $\Y_\ell$ and $(\A_k)_\ell$
\STATE  \textbf{Parameters} $T_\gm, T_{pow}$
		\STATE \underline{\textbf{Nodes $\ell=1,...,L$}}
		\STATE Compute $\alpha_\ell \leftarrow \frac{9 \kappa^2\mu^2}{\m q}\sum_{k} \|(\y_k)_\ell\|^2$.  	
		\STATE \underline{\textbf{Central Server}}
		\STATE $\alpha \leftarrow \text{Median}\{\alpha_\ell\}_{\ell=1}^{L}$	
		\STATE \underline{\textbf{Nodes $\ell=1,...,L$}}
		\STATE Compute $(\U_0)_{\ell}$ $\leftarrow$ top-$r$-singular vectors of $(\Xhat_0)_\ell$ defined in \eqref{Xhat0_ell}
		(use power method with $T_{pow}$ iterations).
		\STATE \underline{\textbf{Central Server}}
		\STATE Use Algorithm \ref{NEW_PCA_1} (Subspace-Median) with parameter $T_\gm$ on $(\U_0)_{\ell}$, $\ell \in [L]$. 
		\STATE {\bf Output} $\U_{out}$.
	\end{algorithmic}
\end{algorithm}

\subsection{Resilient Federated Spectral Initialization: Horizontal Subspace-MoM}\label{lrccs_gmom}
As explained earlier for PCA, the use of just (geometric) median wastes samples. Hence, we develop a median-of-means estimator. For a parameter $\tL \le L$, we would like to form $\tL$ mini-batches of $\rho = L/\tL$ nodes; w.l.o.g. $\rho$ is an integer.
In our current setting, the data is horizontally federated. This requires a different approach to combine samples than what we used for PCA in Sec. \ref{subspace_gmom}. Here, each node can compute the $n \times q$ matrix $(\Xhat_0)_\ell$. 
Combining samples means combining the rows of $(\A_k)_\ell$ and $(\y_k)_\ell$ for $\rho$ nodes to obtain $(\Xhat_0)_{(\vartheta)}$ with $k$-th column given by
$
\sum_{\ell=1}^\rho (\A_k)_{(\vartheta,\ell)}^\top (\y_{k,\trnc})_{(\vartheta,\ell)} / \rho.
$
Recall that ${(\vartheta,\ell)} = {(\vartheta-1)\rho + \ell)}$. 
To compute this in a communication-efficient and private fashion, we use a horizontally federated power method for each of the $\tL$ mini-batches. The output of each of these power methods is $\U_{(\vartheta)}, \ \vartheta \in [\tL]$. These are then input to the subspace-median algorithm, Algorithm \ref{NEW_PCA_1} to obtain the final subspace estimate $\U_{out}$.
To explain the federation details simply, we explain them for $\vartheta=1$. 
The power method needs to federate $\U \leftarrow QR( (\Xhat_0)_{(1)} (\Xhat_0)_{(1)}^\top  \U) =QR( \sum_{\ell'=1}^\rho (\Xhat_0)_{\ell'}  (\sum_{\ell=1}^\rho (\Xhat_0)_\ell^\top \U)) $. This  needs two steps of information exchange between the nodes and center at each power method iteration. In the first step, we compute $\V = \sum_{\ell \in [\rho]} (\X_0)_\ell{}^\top \U$, and in the second one we compute  $\tilde\U = \sum_{\ell \in [\rho]} (\X_0)_\ell \V$, followed by its QR decomposition.

We summarize the complete algorithm in  Algorithm \ref{alg:gmom_init}. As long as the same set of $\tau L$ nodes are Byzantine for all the power method iterations needed for the initialization step, we can prove the following result for it\footnote{This assumption can be relaxed if we instead assume that the size of the set of nodes that are Byzantine in any one initialization iteration is at most $\tau L$.}. This follows as a corollary of Theorem \ref{main_res} and the lemmas proved in \cite{lrpr_gdmin,lrpr_gdmin_2} for the attack-free case.

\begin{cor}[Initialization using Subspace-GMoM]
\label{u_init_gmom}
Consider the Initialization steps (lines 3-22) of Algorithm \ref{byzaltgdmin} 
with $T_\gm = C\log(\frac{Lr}{\delta_0})$ and $T_{pow} = C\kappa^2 \log(\frac{n}{\delta_0})$. 
Assume that Assumption \ref{right_incoh} hold. Assume also that the set of Byzantine nodes remains fixed for all iterations in this algorithm and the size of this set is at most $\tau L$ with  $\tau< 0.4\tL/L$. 
Pick a $\delta_0 < 1$ and an $\tL < L$ such that $L$ is a multiple of $\tL$.
If
$$m q\geq C \tL  \cdot \kappa^6\mu^2(n+q)r^2 / \delta_0^2,$$
then w.p. at least $1- c_\approxgm - \exp(-L\psi(0.4-\tau,\exp(-c(n+q)) + n^{-10} )) - L\exp(-\tilde{c}\m q\delta_0^2/r^2\kappa^4)$, 
	\[
		\SD_F(\Ustar,\U_{out})\leq \delta_0.
	\]
The communication cost per node is order $nr \cdot T_{pow} = \kappa^2 nr \log(\frac{n}{\delta_0})$.
\end{cor}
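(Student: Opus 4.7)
The plan is to reduce the corollary to Theorem \ref{main_res} (equivalently Lemma \ref{key_lemma}) by treating each mini-batch output $\U_{(\vartheta)}$, $\vartheta\in[\tL]$, as one input subspace estimate fed into the Subspace-Median step. First I would classify the $\tL$ mini-batches as ``clean'' (containing no Byzantine node) or ``corrupted''. Because the federated power method within any mini-batch sums contributions from its $\rho = L/\tL$ nodes, a single Byzantine node anywhere in the batch can corrupt that batch's output; so under Assumption \ref{byzassu} with $\tau < 0.4\tL/L$ and a Byzantine set that is fixed across iterations, the number of corrupted mini-batches is at most $\tau L < 0.4\tL$, leaving more than $0.6\tL$ clean mini-batches. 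This sets up the hypothesis of Lemma \ref{key_lemma} with $L$ replaced by $\tL$ and $\tau$ replaced by $\rho\tau$.

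Next I would bound the subspace error produced by each clean mini-batch. For a clean batch $\vartheta$, the federated power method correctly operates on $(\Xhat_0)_{(\vartheta)}(\Xhat_0)_{(\vartheta)}^\top$ where $(\Xhat_0)_{(\vartheta)} = \sum_{\ell=1}^{\rho}(\Xhat_0)_{(\vartheta,\ell)}$ is built from $mq/\tL$ effective scalar measurements, which is exactly the attack-free spectral initialization matrix analyzed in \cite{lrpr_gdmin,lrpr_gdmin_2}. Its expectation $\E[(\Xhat_0)_{(\vartheta)}]$ has principal subspace $\mathrm{span}(\Ustar)$ with $r$-th singular value gap of order $\sigmarstar{}^2$. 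The truncation uses the median of the $\alpha_\ell$, which by a scalar median-concentration argument lies within a constant factor of the ideal $\alpha$ (so the \cite{lrpr_gdmin,lrpr_gdmin_2} concentration lemma applies verbatim). Once $mq/\tL \ge C\kappa^6\mu^2(n+q)r^2/\delta_0^2$, the spectral-norm concentration of $(\Xhat_0)_{(\vartheta)}$ combined with Davis-Kahan in Frobenius norm (this is where the extra $r$ factor in the sample complexity arises, since $\SD_F \le \sqrt{r}\,\SD_2$) gives $\SD_F$ between the top-$r$ singular vectors of $(\Xhat_0)_{(\vartheta)}$ and $\Ustar$ at most $\delta_0/2$, with failure probability at most $\exp(-c(n+q))$. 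The $T_{pow} = C\kappa^2\log(n/\delta_0)$ internal power iterations then approximate those singular vectors to within another $\delta_0/2$ error with failure at most $n^{-10}$ by standard arguments \cite{npm_hardt}. Hence each clean mini-batch yields $\SD_F(\Ustar,\U_{(\vartheta)}) \le \delta_0$ with per-batch failure probability $p := \exp(-c(n+q)) + n^{-10}$.

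Finally I would invoke Lemma \ref{key_lemma} on the collection $\{\U_{(\vartheta)}\}_{\vartheta\in[\tL]}$ with Byzantine fraction $\rho\tau$, per-input failure probability $p$, and target accuracy $\delta_0/23$ (the constant $23$ is absorbed into $C$ in the sample complexity). This gives $\SD_F(\Ustar,\U_{out}) \le \delta_0$ with probability at least $1 - c_\approxgm - \exp(-\tL\psi(0.4-\rho\tau,p))$; the last summand $L\exp(-\tilde{c}\m q\delta_0^2/(r^2\kappa^4))$ in the stated bound comes from union-bounding the per-node concentration events for the $(\Xhat_0)_\ell$ (one per node, hence the leading $L$) that underlie the per-batch concentration above. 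I expect the main technical obstacle to be importing the spectral-initialization concentration lemma of \cite{lrpr_gdmin,lrpr_gdmin_2} intact under the median-aggregated truncation threshold and rephrasing its Davis-Kahan step in $\SD_F$ without losing more than the advertised $\sqrt{r}$ factor; a secondary bookkeeping obstacle is to rewrite the derived $\tL\psi(0.4-\rho\tau,\cdot)$ tail bound into the $L\psi(0.4-\tau,\cdot)$ form stated in the corollary, which uses the fact that any mini-batch containing a Byzantine node is counted as corrupted and that the binomial tail in Lemma \ref{key_lemma}'s proof can be upper-bounded at the node level. Assembling these three pieces yields the claimed accuracy, probability, and communication-cost bounds.
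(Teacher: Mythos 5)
Your proposal is correct and takes essentially the same route as the paper: reduce to Theorem~\ref{main_res} (equivalently Lemma~\ref{key_lemma}) by treating each mini-batch output $\U_{(\vartheta)}$ as one subspace estimate, bound the per-batch error of clean batches via the attack-free spectral-initialization concentration lemmas of \cite{lrpr_gdmin,lrpr_gdmin_2} combined with Davis-Kahan in $\SD_F$ (which is where the $\sqrt{r}$ cost shows up), handle the truncation threshold by a scalar-median argument, and then remove the conditioning on $\alpha$. One small misattribution: the last summand $L\exp(-\tilde{c}\m q\delta_0^2/(r^2\kappa^4))$ in the stated probability is not the union bound over per-node concentration of the matrices $(\Xhat_0)_\ell$ (that failure event is folded into the $p=\exp(-c(n+q))+n^{-10}$ argument of $\psi$); rather, it is $p_\alpha$ from Lemma~\ref{lem:threshold}, i.e., the probability that the median of the scalar thresholds $\alpha_\ell$ falls outside the good interval $\mathcal{E}$ --- still a per-node union bound, but over the $\alpha_\ell$ events, as in the paper's Appendix~\ref{proof_u_init_1}.
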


\begin{proof}
This follows by applying Theorem \ref{th:ro_pca} on $\bPhi_{(\vartheta)} = \sum_{\ell=1}^\rho (\Xhat_0)_{(\vartheta,\ell)} (\Xhat_0)_{(\vartheta,\ell)}^\top /  \rho$ and $\bPhi^* = \E[(\Xhat_0)_\ell |\alpha] \E[(\Xhat_0)_\ell |\alpha]^\top$ for $\vartheta \in [\tL]$ and using the results from \cite{lrpr_gdmin,lrpr_gdmin_2} to ensure that the assumption needed by Theorem \ref{main_res} holds. 

The idea is almost exactly the same as for the special case $\tL=L$. This case is simpler notation-wise and hence we provide a proof for this case in  Appendix \ref{proof_u_init_1}.
The main idea is as follows. Let $\D(\alpha)$ be the positive entries' diagonal matrix defined in  \cite[Lemma 3.8]{lrpr_gdmin}. 
We use \cite[Lemma 3.8]{lrpr_gdmin} and  \cite[Fact 3.9]{lrpr_gdmin} to show that $\E[(\X_0)_\ell|\alpha]  = \Xstar \D(\alpha)$  and to bound $\|(\X_0)_\ell  - \E[(\X_0)_\ell|\alpha] \|$.
We then use this bound to then get a bound  $\|\bPhi_\ell - \bPhi^*\|$. In the last step, we use an easy median-based modification of \cite[Fact 3.7]{lrpr_gdmin} to remove the conditioning on $\alpha$. 
\end{proof}

\subsection{Byzantine-resilient Federated AltGDmin: GDmin Iterations}
We can make the altGDmin iterations resilient as follows. 
In the minimization step, each node computes its own estimate $(\b_k)_\ell$ of $\bstar_k$ as follows:
\[
(\b_k)_\ell = ((\A_k)_\ell \U)^\dag (\y_k)_\ell , \ k \in [q]
\]
Here, $\M^\dag:= (\M^\top \M)^{-1}\M^\top$.
Each node then uses this to compute its estimate of the gradient w.r.t. $\U$ as $\nabla f_\ell= \sum_{k \in [q]} (\A_k)_\ell^\top ((\A_k)_\ell \U (\b_k)_\ell -  (\y_k)_\ell) (\b_k)_\ell^\top$. The center receives the gradients from the different nodes, computes their GM and uses this for the projected GD step. Since the gradient norms are not bounded, the GM computation needs to be preceded by the thresholding step explained in Sec. \ref{gm_unbounded}.

As before, to improve sample complexity (while reducing Byzantine tolerance), we can replace GM of the gradients by their GM of means: form $\tL$ batches of size $\rho = L/\tL$ each, compute the mean gradient within each batch, compute the GM of the $\tL$ mean gradients. Use appropriate scaling. 
We summarize the GMoM algorithm in Algorithm \ref{byzaltgdmin}. The GM case corresponds to $\tL=L$.
Given a good enough initialization, a small enough fraction of Byzantine nodes, enough samples $\m q$ at each node at each iteration, and assuming that Assumption \ref{right_incoh} holds, we can prove the following for the GD iterations.

\begin{theorem}(AltGDmin-GMoM: Error Decay)\label{gdmin_thm_gmom} \label{gdmin_thm}
	Consider the AltGDmin steps of Algorithm \ref{byzaltgdmin}
with sample-splitting, and with a step-size $\eta \le 0.5/\sigmax^2$.
Set $T_\gm = C \log \frac{L r}{\eps}$, $\thresh =  C \m \sigmin^2$.
	Assume that Assumptions \ref{byzassu} and \ref{right_incoh} holds. If, at each iteration $t$,
$$m q\geq C_1 \tL \kappa^4\mu^2(n+r)r^2 , $$  $\m>C_2\max(\log q, \log n)$; if $\tau< 0.4\tL/L$; and
	if the initial estimate $\U_0$ satisfies $\SD_F(\Ustar,\U_0)\leq \delta_0 = 0.1/\kappa^2$, then w.p. at least $1 - c_\approxgm -t\left[Ln^{-10} +\exp(-L\psi(0.4-\tau,n^{-10}))\right]$, 
	\[
	\SD_F(\Ustar,\U_{t+1})\leq \delta_{t+1}:=\left(1-(\eta\sigmax^2)\frac{0.12}{\kappa^2}\right)^{t+1}\delta_0
	\]
	and $\|\xstar_k - (\x_k)_{t+1}\| \le \delta_{t+1} \|\xstar_k\|$ for all $k \in [q]$.

The communication cost per node is order $nr \cdot T = \kappa^2 nr \log(\frac{1}{\eps})$.
\end{theorem}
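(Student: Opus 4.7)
The plan is to adapt the one-iteration descent analysis of attack-free altGDmin from \cite{lrpr_gdmin,lrpr_gdmin_2} to track the Frobenius subspace distance $\SD_F$ and to replace the aggregated gradient by the GM of means of the per-batch gradients. Because of sample splitting, at each iteration $t$ we can condition on all past estimates and data; the fresh measurement matrices and measurements used at iteration $t$ are then independent of $\U_t$, and for each good mini-batch $\vartheta$ the summands of $(\nabla f)_{(\vartheta)}$ are i.i.d.\ with identical conditional distributions across good batches, modulo the different realizations of the local $(\b_k)_\ell$'s.

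First I would analyze the minimization step node-wise: since $\m \gtrsim \max(\log q,\log n)$ and Assumption \ref{right_incoh} holds, the local least-squares estimate $(\b_k)_\ell = ((\A_k)_\ell \U_t)^\dag (\y_k)_\ell$ at a good node satisfies the per-column error bounds already proved in \cite{lrpr_gdmin_2}, with the bound scaling linearly in $\delta_t := \SD_F(\Ustar,\U_t)$; this also controls $\max_k \|(\b_k)_\ell\|$ via incoherence. Second, I would prove concentration of each good mini-batch gradient around its conditional expectation via the matrix-Bernstein / sub-exponential argument of \cite{lrpr_gdmin_2} applied to the $\rho\m q = \m q/\tL$ effective i.i.d.\ summands in a batch; this is where the extra $\tL$ in the sample-complexity lower bound on $mq$ enters.

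The crucial new step is to quantify how far apart the conditional expected gradients of two different good mini-batches can be. In the attack-free case this is not needed because linearity of expectation merges everything into one population gradient, but here the GM of means sees each batch separately, and each batch's conditional expectation depends on its own $(\b_k)_\ell$'s. This is exactly what Lemma \ref{new_term_lem} provides, and I expect it to be the main obstacle: the bound must be of order $\sigmax^2 \delta_t/\kappa^2$ so that, combined with the per-batch concentration bound, one obtains a common reference vector $\bar{\nabla}$ with $\|(\nabla f)_{(\vartheta)} - \bar{\nabla}\|_F \le \eps_g$ for every good $\vartheta$ (this is Lemma \ref{err_bound_lemma}). The threshold $\thresh = C\m\sigmin^2$ is then chosen large enough that every good batch survives but maliciously-large Byzantine vectors are discarded; applying Lemma \ref{gm_new_2} to the surviving set, whose good fraction exceeds $1/2$ since $\tau < 0.4\tL/L$, gives $\|\nabla_{GM} - \bar{\nabla}\|_F \lesssim \eps_g$.

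Finally, plugging this bound into a Frobenius-norm one-step descent lemma (Lemma \ref{sd_pop}) for the update $\U_{t+1} = QR(\U_t - \eta \nabla_{GM})$, with step size $\eta \le 0.5/\sigmax^2$, should yield the strict contraction $\SD_F(\Ustar,\U_{t+1}) \le (1 - 0.12\,\eta\sigmax^2/\kappa^2)\,\SD_F(\Ustar,\U_t)$ provided $\eps_g$ is small enough, which is exactly what the sample-complexity lower bound guarantees. Iterating this bound and taking a union bound over the $t$ iterations with per-iteration failure contribution $c_\approxgm + L n^{-10} + \exp(-L\psi(0.4-\tau, n^{-10}))$ yields the stated geometric decay. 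The per-column bound $\|\xstar_k - (\x_k)_{t+1}\| \le \delta_{t+1}\|\xstar_k\|$ then follows by re-applying the minimization analysis at iteration $t+1$ together with the standard decomposition used in \cite{lrpr_gdmin_2}.
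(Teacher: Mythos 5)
Your proposal matches the paper's proof structure essentially step for step: you correctly identify the Frobenius-distance one-step contraction lemma (Lemma \ref{sd_pop}), the need to bound each good node's gradient against a single fixed reference $\E[\nabla f_{\ell_1}]$ (which you call $\bar\nabla$) via the decomposition into a concentration term plus an expectation-difference term, the role of Lemma \ref{new_term_lem} in controlling that expectation difference through $\|\B_\ell - \B_{\ell_1}\|_F$, the application of the thresholded-GM lemma (Lemma \ref{gm_new_2}) to get Lemma \ref{err_bound_lemma}, and the final contraction with step size $\eta\le 0.5/\sigmax^2$. The only cosmetic deviation is that you treat $c_\approxgm$ as a per-iteration failure term; the theorem as stated counts it once, but this is a bookkeeping detail rather than a gap in the argument.
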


\begin{proof}
Consider the $\tL=L$ (GM) special case since this is notationally simpler. The extension for the general $\tL < L$ (GM of means) case is straightforward.
The proof uses the overall approach developed in \cite{lrpr_gdmin_2} with the following changes. Let $\ell_1 := (\Jgood)_1$ be {\em a} non-Byzantine node.  
We now also need a bound on the Frobenius norm of
\[
\Err := \nabla f_\gm - \E[ \nabla f_{\ell_1} (\U,\B_{\ell_1})], \ \ \ell_1 := (\Jgood)_1
\]
that is of the form $c \delta_t \m \sigmax^2$ for a $c < 1$ w.h.p., under the assumed sample complexity bound. This type of a bound, along with assuming $\delta_0 < c/\sqrt{r} \kappa^2$, helps ensure that the algebra needed for showing exponential decay of the subspace estimation error goes through. 
We can get the above bound on $\|\Err\|_F$ using Lemma \ref{gm_new_2} if we can get a similar bound on
\[
\max_{\ell \in \Jgood} \| \nabla f_\ell - \E[ \nabla f_{\ell_1} (\U,\B_{\ell_1})] \|_F
\]
This is proved in the lemma given next.
\begin{lemma} \label{new_term_lem}
	Assume $\SD_F(\Ustar,\U)\leq \delta_t<\delta_0$. Then,
w.p. at least $1-\exp \Big((n+r)-c\epsilon_1^2\frac{\m q}{r^2\mu^2}\Big)-2\exp(\log q +r -c\epsilon_1^2\m)$, for all $\ell \in \Jgood$,
		\[
		\norm{\nabla f_{\ell}(\U,\B_\ell)-\E[\nabla f_{\ell_1}(\U,\B_{\ell_1})]}_F\leq 12.5 \epsilon_1 \delta_t \m \sigmax^2
		\]
\end{lemma}
We prove this lemma by noting that
\[
\nabla f_{\ell}(\U,\B_\ell)-\E[\nabla f_{\ell_1}(\U,\B_{\ell_1})] = \left( \nabla f_{\ell}(\U,\B_\ell)-\E[\nabla f_{\ell}(\U,\B_{\ell})] \right) + \left(\E[\nabla f_{\ell}(\U,\B_{\ell})] - E[\nabla f_{\ell_1}(\U,\B_{\ell_1})]\right)
\]
The first term can be bounded using standard concentration bounds. The second one requires carefully bounding $\|\B_\ell - \B_{\ell_1}\|_F$ by using the fact that both $\B_\ell, \B_{\ell_1}$ are close to $\G:=\U^\top \Xstar$.

We provide its proof and the  complete proof of our Theorem in Appendix \ref{proof_gdmin_byz}.
\end{proof}

\subsection{Complete Byz-AltGDmin algorithm} 

Combining Corollary \ref{u_init_gmom} and Theorem \ref{gdmin_thm_gmom}, and setting $\eta = 0.5/\sigmax^2$  and $\delta_0 = 0.1/\kappa^2$, we can show that, at iteration $t+1$, $\SD_F(\Ustar,\U_{t+1})\leq \delta_{t+1} = (1- 0.06/\kappa^2)^{t+1} 0.1/\kappa^2 $ whp.
Thus, in order for this to be $\le \eps$, we need to set $T = C \kappa^2 \log(1/\eps)$. 
Also, since we are using fresh samples at each iteration (sample-splitting), this also means that our sample complexity needs to be multiplied by $T$.

We thus have the following final result.

\begin{cor}
	Consider the complete Algorithm  \ref{byzaltgdmin}  
with sample-splitting.
Set  $T_{pow}=C\log (n \kappa )$,  $\eta = 0.5/\sigmax^2$, $T=  C \kappa^2 \log(1/\eps)$.  Assume that Assumption \ref{right_incoh} holds.
	If the total number of samples per column $m$, satisfies
$$m q\geq C \tL \kappa^4\mu^2(n+q)r^2 \log (1/\epsilon)$$ and $ m>C \kappa^2 \max(\log q, \log n) \log(1/\eps)$;  if at most $\tau L$ nodes are Byzantine with $\tau< 0.4\tL/L$, if the set of Byzantine nodes remains fixed for the initialization step power method (but can vary for the GDmin iterations);
	then, w.p. at least $1- T L n^{-10}$,
	$
	\SD_F(\Ustar,\U_T)\leq \eps, 
  $	
	and $\|\x_k - \xstar_k\| \le \eps \|\xstar_k\|$ for all $k \in [q]$.

The communication cost per node is order $\kappa^2 nr \log(\frac{n}{\eps})$.
The computational cost at any node is order $\kappa^2 \m n q r\log(\frac{n}{\eps})$ while that at the center is $n^2 \tL  \log^3(\tL r/\eps)$.
\end{cor}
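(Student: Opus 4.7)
The plan is to compose Corollary \ref{u_init_gmom} (resilient spectral initialization) with Theorem \ref{gdmin_thm_gmom} (exponential decay of the AltGDmin iterations), and then account for sample-splitting by multiplying the per-iteration sample count by the total number of iterations $T$.

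First, I would instantiate Corollary \ref{u_init_gmom} with $\delta_0 = 0.1/\kappa^2$. This is the accuracy threshold required by Theorem \ref{gdmin_thm_gmom} for its basin-of-attraction assumption $\SD_F(\Ustar,\U_0)\leq 0.1/\kappa^2$. Substituting $1/\delta_0^2 = 100\kappa^4$ into the initialization sample complexity $m q \geq C\tL \kappa^6 \mu^2 (n+q) r^2/\delta_0^2$ gives the stated bound (up to the polynomial-in-$\kappa$ factors, which are absorbed into the constant $C$ and the power of $\kappa$ in the final statement). The required number of power-method iterations at each mini-batch becomes $T_{pow}=C\kappa^2\log(n/\delta_0)=C\log(n\kappa)$ up to constants, matching the corollary's setting. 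The initialization fails with probability at most $c_\approxgm + \exp(-L\psi(0.4-\tau, \exp(-c(n+q))+n^{-10})) + L\exp(-\tilde c\, \m q \delta_0^2/r^2\kappa^4)$, which under the sample complexity hypothesis is bounded by $L n^{-10}$ up to constants.

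Second, conditioned on the initialization success event, I would apply Theorem \ref{gdmin_thm_gmom} with step size $\eta = 0.5/\sigmax^2$. The theorem guarantees, at iteration $t$,
\[
\SD_F(\Ustar,\U_{t+1}) \leq \Bigl(1 - (\eta\sigmax^2)\frac{0.12}{\kappa^2}\Bigr)^{t+1} \delta_0 = (1-0.06/\kappa^2)^{t+1}\cdot \tfrac{0.1}{\kappa^2},
\]
and the per-iteration failure probability is $Ln^{-10} + \exp(-L\psi(0.4-\tau,n^{-10}))$, which I will bound by $CLn^{-10}$ for large enough $L$. Solving $(1-0.06/\kappa^2)^T\cdot 0.1/\kappa^2 \leq \eps$ gives $T = C\kappa^2 \log(1/\eps)$, matching the stated choice of $T$. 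The final bound $\|\x_k-\xstar_k\|\leq \eps\|\xstar_k\|$ per column follows directly from the corresponding conclusion in Theorem \ref{gdmin_thm_gmom} at the final iteration.

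Third, I would account for sample-splitting: the hypotheses of Theorem \ref{gdmin_thm_gmom} require the stated per-iteration sample complexity $m q \geq C_1 \tL \kappa^4 \mu^2 (n+r) r^2$ to hold with fresh measurements at each iteration. Since we perform $T = C\kappa^2\log(1/\eps)$ iterations, the total measurements used across all iterations multiplies by $T$, producing the $\log(1/\eps)$ factor in the final sample complexity $m q \geq C\tL \kappa^4\mu^2 (n+q)r^2 \log(1/\eps)$ (the $(n+r)$ is absorbed into $(n+q)$ via the initialization requirement). The constraint $\m > C\kappa^2\max(\log q,\log n)\log(1/\eps)$ similarly comes from multiplying the per-iteration constraint $\m > C\max(\log q,\log n)$ by $T$. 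Finally, a union bound over the initialization event and the $T$ GDmin iterations yields a total failure probability of at most $T\cdot (c_\approxgm + CLn^{-10} + \exp(-L\psi(0.4-\tau,n^{-10})))$, which simplifies to $TLn^{-10}$ (up to constants) under the assumption that $L$ and the exponent in $\psi$ are such that the Byzantine-node failure term is dominated by the polynomial term.

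The main non-routine part is bookkeeping the union bound and verifying that the per-iteration failure probabilities compose cleanly over $T$ iterations, which is straightforward once the initialization and GDmin guarantees are in hand; no new probabilistic argument is required beyond those already proved in Corollary \ref{u_init_gmom} and Theorem \ref{gdmin_thm_gmom}.
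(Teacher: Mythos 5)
Your proposal follows the paper's own proof essentially verbatim: set $\delta_0 = 0.1/\kappa^2$, invoke Corollary \ref{u_init_gmom} for initialization, invoke Theorem \ref{gdmin_thm_gmom} for the GDmin iterations, solve the geometric decay for $T = C\kappa^2\log(1/\eps)$, and multiply the per-iteration sample complexity by $T$ to account for sample-splitting, then union-bound over the $T$ iterations. The looseness you flag regarding the $\kappa$-powers (the initialization requirement $\kappa^6/\delta_0^2 = \kappa^{10}$ versus the stated $\kappa^4$) and the silent dropping of the constant $c_\approxgm$ from the final failure probability are present in the paper's own statement and two-sentence proof as well, so no new gap is introduced.
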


The above result shows that, under exactly one assumption (Assumption \ref{right_incoh}), if each node has enough samples $\m$ ($\m$ is of order $(n+q)r^2 (\tL/L)$ times log factors); if the number of Byzantine nodes is less than $(0.4 \tL/L)$ times the total number of nodes, then our algorithm can recover each column of the LR matrix $\Xstar$ to $\eps$ accuracy whp. To our best knowledge, the above is the first guarantee for Byzantine resiliency for any type of low rank matrix recovery problems studied in a federated setting.

Observe that the above result needs total sample complexity that is only $\tL$ times that for basic AltGDmin \cite{lrpr_gdmin}.

\begin{algorithm}[t]
	\caption{\small{Byz-AltGDmin: Complete GMoM based algorithm}} 
	\begin{algorithmic}[1]\label{alg:gmom_init} \label{byzaltgdmin}
\STATE \textbf{Input:} Batch $\vartheta:\quad\{(\A_k)_\ell,\Y_\ell, k \in [q]\}$, $\ell\in[L]$  \\ 
\STATE \textbf{Parameters:} $T_{pow},T_\gm$,

\STATE {\bf \em Initialization using Subspace MoM: }

\STATE \underline{\textbf{Nodes $\ell=1,...,L$}}
		\STATE Compute $\alpha_\ell \leftarrow \frac{9 \kappa^2\mu^2}{\m q}\sum_{k} \|(\y_k)_\ell\|^2$.  	
		\STATE \underline{\textbf{Central Server}}
		\STATE $\alpha \leftarrow \text{Median}\{\alpha_{(\vartheta)}\}_{\vartheta=1}^{\tL}$, where $\alpha_{(\vartheta)} = \sum_{\ell=1}^\rho \alpha_{(\vartheta,\ell)} /\rho$ 
		\STATE \underline{\textbf{Central Server}}
		\STATE Let $\U_0 = \U_{rand}$ where $\U_{rand}$ is an $n \times r$ matrix with i.i.d standard Gaussian entries.
		\FOR {$\tau \in [T_{pow}]$ }
		\STATE \underline{\textbf{Nodes $\ell=1,...,L$}}
		\STATE Compute  $\V_{\ell} \leftarrow (\Xhat_0)_\ell^\top (\U_{(\vartheta)})_{\tau-1}$ for $\ell \in (\vartheta - 1)\rho +[\rho]$, $\vartheta \in [\tL]$. Push to center.
		\STATE \underline{\textbf{Central Server}}
		\STATE  Compute $\V_{(\vartheta)} \leftarrow \sum_{\ell=1}^{\rho}\V_{(\vartheta - 1)\rho + \ell}$
        \STATE Push $\V_{(\vartheta)}$ to nodes $\ell \in (\vartheta - 1)\rho +[\rho]$.
		\STATE \underline{\textbf{Nodes $\ell=1,...,L$}}
		\STATE Compute $\U_\ell \leftarrow \sum_{k}(\Xhat_0)_\ell\V_{(\vartheta)}$ for $\ell \in (\vartheta - 1)\rho +[\rho]$, $\vartheta \in [\tL]$. Push to center.
		
		\STATE \underline{\textbf{Central Server}}
		\STATE  Compute $\U_{(\vartheta)} \leftarrow QR(\sum_{\ell=1}^{\rho}\U_{(\vartheta - 1)\rho + \ell} )$
		\STATE 
Push $\U_{(\vartheta)}$ to nodes $\ell \in (\vartheta - 1)\rho +[\rho]$.
		\ENDFOR
		\STATE  Apply Algorithm \ref{NEW_PCA_1} on $\{U_{\vartheta}\}_{\vartheta=1}^{\tL}$ to get $\U_{out}$.
		\STATE Set $\U_0 \leftarrow \U_{out}$ 

\STATE {\bf \em AltGDmin Iterations: }

		\FOR {$t =1$ to $T$}
		\STATE \underline{\textbf{Nodes $\ell=1,...,L$}}
		\STATE Set $\U \leftarrow \U_{t-1}$
		\STATE $(\b_k)_{\ell} \leftarrow ((\A_k)_\ell\U)^{\dagger}(\y_k)_{\ell}$, $\ \forall \ k \in [q]$
		\STATE $(\x_k)_{\ell} \leftarrow \U (\b_k)_{\ell}$, $\ \forall \ k \in [q]$
		\STATE $ (\nabla f)_{\ell} \leftarrow \sum_{k \in [q]}  (\A_k)_\ell^\top ((\A_k)_\ell \U (\b_k)_{\ell} - (\y_k)_{\ell}) (\b_k)_{\ell}{}^\top$, $\ \forall \ k \in [q]$ 
		
		\STATE Push $\nabla f_\ell $ to center 
		
		\STATE \underline{\textbf{Central Server}}
		\STATE Compute $\nabla f_{(\vartheta)} \leftarrow \sum_{\ell\in\vartheta}\nabla f_\ell$

		\STATE $\nabla f_\gm \leftarrow approxGM\left( \{ vec(\nabla f_{(\vartheta)}), \ \vartheta \in [\tL] \} \setminus \{\vartheta: \|\nabla f_{(\vartheta)}\|_F > \thresh\} \right)$ \label{GM_line}
\\ (Use \cite[Algorithm 1]{cohen2016geometric}  with $T_\gm$ iterations on the set of $\nabla f_{(\vartheta)}$s whose Frobenius norm is below $\thresh$)

		\STATE Compute $\U^+ \leftarrow QR (\U_{t-1}-\frac{\eta}{\rho\m} \nabla f_\gm )$  
		\RETURN Set $\U_t \leftarrow \U^{+}$. Push $\U_t$ to nodes.
		\ENDFOR
\STATE {\bf Output} $\U_T$.
	\end{algorithmic}
\end{algorithm}

\section{Proofs for Sec \ref{res_sub_estim} and \ref{res_pca}}\label{proofs1}

All the proofs given below rely on the lemma for using GM for robust estimation borrowed from \cite{dist_adversarial}.
We give these lemmas in the section below, followed by two corollaries that will be used in our proofs.

\subsection{Using GM for robust estimation}
The goal of robust estimation is to get a reliable estimate of a vector quantity $\tz$ using $L$ individual estimates of it, denoted $\z_\ell$, when most of the estimates are good, but a few can be arbitrarily corrupted or modified by Byzantine attackers. A good approach to do this is to use the GM.
The following lemma, which is a minor modification of \cite[Lemma 2.1]{dist_adversarial}, studies this\footnote{\cite[Lemma 2.1]{dist_adversarial} does not provide an algorithm for approximating the Geometric Median; we combine their result with Claim \ref{cohen_1} to provide this}

\begin{lemma}\label{gm_new}
Consider $\{\z_1, \z_2, \dots, \z_\ell, \dots \z_L\}$ with $\z_\ell \subseteq \Re^n$. Let  $\z_\gm^*$ denote their GM and let
$\z_\gm$  denote their $(1+\eps_\gm)$ approximate GM estimate computed using \cite[Algorithm 1]{cohen2016geometric}.
	Fix an $\alpha\in(0,1/2)$.
	Suppose that the following holds for at least $(1-\alpha)L$ $\z_\ell$s:
\[
\norm{\z_\ell-\tz  }\leq \epsilon \| \tz\|.
\]
Let $C_\alpha := \frac{2(1-\alpha)}{1-2\alpha}$. Then, w.p. at least $1-c_\approxgm$,
	\begin{align*}
\norm{\z_\gm-\tz} 		 \leq C_{\alpha}\epsilon \| \tz\| + \epsilon_\gm\frac{\sum_{\ell=1}^{L}\norm{\z_\gm^*-\z_\ell}}{(1-2\alpha)L}
		\leq C_{\alpha}\epsilon \| \tz\| + \epsilon_\gm\frac{\max_{\ell \in [L]} \norm{\z_\ell}}{1-2\alpha}  
	\end{align*}
The number of iterations needed for computing $\z_\gm$ is  $T_\gm = C \log(\frac{L}{\eps_\gm})$, and the time complexity is $O\left(n L \log^3(\frac{L}{\eps_\gm})\right)$.
\end{lemma}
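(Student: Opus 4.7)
The plan is to follow the classical Minsker-style robust estimation argument, but adapted for the approximate (rather than exact) geometric median. The starting point is the defining $(1+\eps_\gm)$-approximation property together with the optimality of $\z_\gm^*$:
\begin{align*}
\sum_{\ell=1}^L \|\z_\gm - \z_\ell\| \;\le\; (1+\eps_\gm)\sum_{\ell=1}^L \|\z_\gm^* - \z_\ell\| \;\le\; (1+\eps_\gm)\sum_{\ell=1}^L \|\tz - \z_\ell\|,
\end{align*}
which rearranges to $\sum_\ell(\|\z_\gm - \z_\ell\| - \|\tz - \z_\ell\|) \le \eps_\gm \sum_\ell \|\z_\gm^* - \z_\ell\|$.

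Next I would lower bound the left side using triangle-inequality splits on the good/bad indices. Let $G$ denote the set of indices on which $\|\z_\ell - \tz\| \le \eps \|\tz\|$ holds, so $|G| \ge (1-\alpha) L$, and write $\hat r := \|\z_\gm - \tz\|$. For $\ell \in G$, $\|\z_\gm - \z_\ell\| \ge \hat r - \|\tz - \z_\ell\|$ and $\|\tz - \z_\ell\| \le \eps\|\tz\|$, so the $\ell$-th summand is at least $\hat r - 2\eps\|\tz\|$. For $\ell \notin G$ we use the reverse triangle inequality $\|\z_\gm - \z_\ell\| - \|\tz - \z_\ell\| \ge -\hat r$. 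Summing and using $|G^c| \le \alpha L$ gives
\begin{align*}
(1-\alpha)L(\hat r - 2\eps\|\tz\|) - \alpha L\, \hat r \;\le\; \eps_\gm \sum_{\ell} \|\z_\gm^* - \z_\ell\|,
\end{align*}
i.e.\ $(1-2\alpha)L\, \hat r \le 2(1-\alpha)L\,\eps\|\tz\| + \eps_\gm \sum_\ell \|\z_\gm^* - \z_\ell\|$. Dividing by $(1-2\alpha)L$ yields the first stated bound with $C_\alpha = \tfrac{2(1-\alpha)}{1-2\alpha}$.

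The second (weaker) bound follows by noting that any geometric median lies in the convex hull of the input points, hence $\|\z_\gm^*\| \le \max_\ell\|\z_\ell\|$ and $\|\z_\gm^* - \z_\ell\| \le 2\max_k \|\z_k\|$, so $\tfrac{1}{L}\sum_\ell \|\z_\gm^* - \z_\ell\| \lesssim \max_\ell \|\z_\ell\|$. The probability-$(1-c_\approxgm)$ qualifier and the iteration count $T_\gm = C\log(L/\eps_\gm)$ (with the $O(nL\log^3(L/\eps_\gm))$ runtime) are imported verbatim from Claim~\ref{cohen_1}, which is the only place randomness enters the argument.

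The potential obstacle is cosmetic rather than structural: one must be careful that the triangle-inequality lower bound for the good indices does not introduce a spurious $|\hat r - \eps\|\tz\||$ versus $\hat r - \eps\|\tz\|$ distinction (i.e.\ the bound $\|\z_\gm - \z_\ell\| \ge \hat r - \|\tz-\z_\ell\|$ is valid without any sign assumption), and that the reverse-triangle bound on the bad indices is applied without assuming anything about their magnitudes (which is crucial since Byzantine outputs can be arbitrary). Everything else is direct algebra once the key inequality chain is set up.
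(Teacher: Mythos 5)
Your argument is structurally the same as the paper's (which is in turn the Minsker/Chen-et-al robust-GM argument): split into good/bad indices, reverse triangle inequalities to lower-bound the per-index terms, combine with the $(1+\eps_\gm)$-approximation property and the optimality of $\z_\gm^*$ against $\tz$, then divide out. The rearranged chain $\sum_\ell\|\z_\gm-\z_\ell\| \le (1+\eps_\gm)\sum_\ell\|\z_\gm^*-\z_\ell\| \le \sum_\ell\|\tz-\z_\ell\| + \eps_\gm\sum_\ell\|\z_\gm^*-\z_\ell\|$ is exactly what the paper does, just written with the $\|\tz-\z_\ell\|$ terms moved across first.

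There are, however, two small errors in your algebra. First, when you pass from $(2|G|-L)\hat r - 2|G|\eps\|\tz\| \le \eps_\gm\sum_\ell\|\z_\gm^*-\z_\ell\|$ to $(1-\alpha)L(\hat r - 2\eps\|\tz\|) - \alpha L\hat r \le \eps_\gm\sum_\ell\|\z_\gm^*-\z_\ell\|$, you are substituting $|G|\mapsto(1-\alpha)L$ in \emph{both} coefficients. Because $|G|\ge(1-\alpha)L$, this moves the $\hat r$-coefficient in the right direction, but it moves the $-2|G|\eps\|\tz\|$ term the wrong way (you are \emph{raising} a lower bound). The substitution is equivalent to requiring $(|G|-(1-\alpha)L)(\hat r - \eps\|\tz\|)\ge 0$, i.e., it is only valid when $\hat r\ge\eps\|\tz\|$. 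This is easily patched — if $\hat r<\eps\|\tz\|$ the target bound holds trivially since $C_\alpha>1$ — but as written the step is not a valid lower bound. The paper sidesteps this entirely by dividing by $(2|G|-L)$ \emph{before} replacing $|G|$ with its lower bound; both $\tfrac{2|G|}{2|G|-L}$ and $\tfrac{1}{2|G|-L}$ are decreasing in $|G|$, so $|G|=(1-\alpha)L$ is the worst case and no sign condition is needed. Second, your convex-hull argument for the last displayed inequality gives $\tfrac1L\sum_\ell\|\z_\gm^*-\z_\ell\|\le 2\max_\ell\|\z_\ell\|$, an extra factor of $2$ that is absent in the lemma; the paper instead uses the GM optimality of $\z_\gm^*$ against the candidate $\bm{0}$, which yields $\sum_\ell\|\z_\gm^*-\z_\ell\|\le\sum_\ell\|\z_\ell\|\le L\max_\ell\|\z_\ell\|$ and hence the exact stated constant.
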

The second inequality follows because, using the exact GM definition, $\sum_\ell \norm{\z_\gm^*-\z_\ell} \le \sum_\ell \norm{\bm{0} - \z_\ell} = \sum_\ell \norm{\z_\ell}$ and $\sum_\ell \norm{\z_\ell} \le L \max_\ell \norm{\z_\ell}$. 
To understand this lemma simply, fix the value $\alpha$ to 0.4. Then $C_\alpha = 6$. We can also fix $\eps_\gm = \eps$. Then, it says the following. If at least 60\% of the $L$ estimates are $\eps$ close to $\tz$, then, the $(1+\eps)$ approximate GM, $\z_\gm$, is  $ 11 \eps \max( \norm{\tz}, \max_{\ell \in [L]} \|\z_\ell\|)$ close to $\tz$.
The next lemma follows using the above lemma and is a minor modification of \cite[Lemma 3.5]{dist_adversarial}. 
It fixes $\alpha=0.4$ and considers the case when most estimates are good with high probability (w.h.p.). We provide a short proof of it in Appendix \ref{gm_prob_proof}.

\begin{lemma}\label{gm_new_1}
	Let $\z_\ell \subseteq \Re^n$, for $\ell\in[L]$ and let $\z_\gm$ denote a $(1+\eps_\gm)$ approximate GM computed using \cite[Algorithm 1]{cohen2016geometric}. 
	For a $\tau< 0.4$, suppose that, for at least $(1-\tau) L$ $\z_\ell$'s,
	\[
	\Pr\{\norm{\z_\ell-\tz}\leq \eps \|\tz\| \} \geq 1-p
	\]
	Then, w.p. at least $1- c_\approxgm -\exp(-L\psi(0.4-\tau,p))$,
	\[
	\norm{\z_\gm-\tz} \leq  6 \eps \| \tz\| + 5 \eps_\gm  \max_{\ell \in  [L]}\norm{\z_\ell}  
	\]
where $\psi(a, b)=(1-a)\log \frac{1-a}{1-b}+a\log \frac{a}{b}$.
The number of iterations needed for computing $\z_\gm$ is  $T_\gm = C \log(\frac{L}{\eps_\gm})$, and the time complexity is $O\left(n L \log^3(\frac{L}{\eps_\gm})\right)$.
\end{lemma}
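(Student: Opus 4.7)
The plan is to combine the deterministic Lemma~\ref{gm_new} (applied with $\alpha=0.4$, giving $C_\alpha = 6$ and $1/(1-2\alpha) = 5$) with a binary KL--divergence Chernoff tail bound to convert the per-node probabilistic closeness hypothesis into the deterministic ``enough-are-close'' hypothesis required by Lemma~\ref{gm_new}.

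First I would fix the good set $\Jgood \subseteq [L]$ with $|\Jgood| \geq (1-\tau)L$, and for each $\ell \in \Jgood$ let $B_\ell$ be the indicator of the failure event $\{\|\z_\ell - \tz\| > \eps \|\tz\|\}$; for $\ell \notin \Jgood$, set $B_\ell = 0$. Because the data at different good nodes are independent, the variables $\{B_\ell\}$ are mutually independent, with $\Pr(B_\ell = 1) \leq p$ for $\ell \in \Jgood$. The number of nodes that satisfy the closeness condition $\|\z_\ell - \tz\| \leq \eps\|\tz\|$ is then at least $|\Jgood| - N_{\mathrm{fail}} \geq (1-\tau)L - N_{\mathrm{fail}}$ where $N_{\mathrm{fail}} := \sum_\ell B_\ell$.

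Next I would apply the standard Chernoff bound for sums of independent Bernoullis in its KL form: for any $a > \bar p := \E[N_{\mathrm{fail}}]/L$,
\[
\Pr\!\left(\tfrac{N_{\mathrm{fail}}}{L} \geq a\right) \leq \exp\bigl(-L\,\psi(a,\bar p)\bigr).
\]
Since $\bar p \leq p$ and the map $b \mapsto \psi(a,b)$ is strictly decreasing on $(0,a)$ (by elementary differentiation, the derivative equals $(b-a)/[b(1-b)]$), taking $a = 0.4 - \tau$ yields
\[
\Pr\bigl(N_{\mathrm{fail}} \geq (0.4-\tau)L\bigr) \leq \exp\bigl(-L\,\psi(0.4-\tau,\,p)\bigr).
\]
On the complementary event, at least $(1-\tau)L - (0.4-\tau)L = 0.6\,L$ of the $\z_\ell$'s satisfy $\|\z_\ell - \tz\| \leq \eps\|\tz\|$, so the hypothesis of Lemma~\ref{gm_new} is met deterministically with $\alpha = 0.4$, and its conclusion produces the claimed bound with constants $6$ and $5$.

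Finally, a union bound over the approximate-GM algorithm's failure event of probability at most $c_\approxgm$ (from Claim~\ref{cohen_1}) and the Chernoff failure event above yields the stated high-probability guarantee; the iteration count and time complexity for computing $\z_\gm$ come verbatim from Claim~\ref{cohen_1}. The only mildly technical step is handling the non-identical Bernoulli parameters $p_\ell \leq p$ (with $p_\ell = 0$ on Byzantine indices) to get the KL-Chernoff bound with parameter $p$ rather than $\bar p$; the monotonicity of $\psi$ in its second argument takes care of this, and I do not anticipate other obstacles since the remainder is a direct invocation of results already in hand.
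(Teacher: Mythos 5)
Your proof is correct and follows essentially the same route as the paper's: a Chernoff/KL concentration argument to ensure that at least $0.6L$ of the estimates are close to $\tz$ on a high-probability event, followed by a direct application of Lemma~\ref{gm_new} with $\alpha = 0.4$ (giving $C_\alpha = 6$ and $1/(1-2\alpha) = 5$) and a union bound with the approximate-GM failure probability $c_\approxgm$ from Claim~\ref{cohen_1}. The only cosmetic difference is that the paper invokes first-order stochastic domination against a $\text{Binomial}(L,1-p)$ variable, whereas you work directly with the non-identical Bernoulli failure indicators and use monotonicity of $\psi(a,\cdot)$ to pass from the average parameter $\bar p$ to the uniform bound $p$; your treatment is arguably the cleaner way to phrase the same calculation, and both yield the identical tail bound $\exp(-L\psi(0.4-\tau,p))$.
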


Suppose that, for a $\tau  < 0.4$, at least $(1-\tau) L$ $\z_\ell$s are ``good'' (are $\eps$ close to $\tz$) whp. Let $\eps_\gm=\eps$ and suppose that all $\z_\ell$'s, including the corrupted ones, are bounded in 2-norm by $\norm{\tz}$. Then,  the $(1+\eps)$-approximate GM is about $11 \eps \norm{\tz}$ close to $\tz$ with at least constant probability.
If the GM is approximated with probability 1, i.e., if $c_\approxgm = 0$, then, the above result says that, for $p$ small enough and large $L$, the reliability of the GM is actually higher than that of the individual good estimates. For example, for a $p < 0.01$, the probability is at least  $1 - p^{ L (0.4 - \tau)}$. The increase depends on $(0.4 -\tau)$ and $L$, e.g., if $\tau \ge 0.2$ and $L \ge 10$, then, the probability is at least $1 - p^{0.2 L} \ge 1 - p^2$.

\subsubsection{Corollary for Bounded $z_\ell$s}
In settings where all $\z_\ell$'s are bounded, we have the following corollary of Lemma \ref{gm_new_1}.
\begin{cor}
In the setting of Lemma \ref{gm_new_1}, if $\max_\ell \norm{\z_\ell} \le \norm{\tz}$, then $\norm{\z_\gm-\tz} \le 11 \eps \norm{\tz}$ with above probability.
The number of iterations needed is  $T_\gm = C \log(\frac{L}{\eps_\gm})$, and the time complexity is $O\left(n L \log^3(\frac{L}{\eps_\gm})\right)$.
\end{cor}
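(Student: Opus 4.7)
My plan is to derive this corollary as an immediate specialization of Lemma \ref{gm_new_1}. That lemma already establishes, under the hypotheses inherited by this corollary, the bound
\[
\norm{\z_\gm - \tz} \le 6 \eps \norm{\tz} + 5 \eps_\gm \max_{\ell \in [L]} \norm{\z_\ell}
\]
with probability at least $1 - c_\approxgm - \exp(-L \psi(0.4 - \tau, p))$. So no new probabilistic argument is needed; everything reduces to bounding the two deterministic summands on the right.

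The first step is to invoke the added boundedness hypothesis $\max_\ell \norm{\z_\ell} \le \norm{\tz}$ to replace the $\max_\ell \norm{\z_\ell}$ factor in the second summand by $\norm{\tz}$. The second step is to couple the approximation parameter $\eps_\gm$ of the geometric median to the node-accuracy parameter $\eps$ by taking $\eps_\gm = \eps$; this is the choice implicit in the statement, because it is the only way the two terms combine into a single multiple of $\eps \norm{\tz}$. With these two substitutions the bound collapses to $6 \eps \norm{\tz} + 5 \eps \norm{\tz} = 11 \eps \norm{\tz}$, which is exactly the claimed conclusion. The iteration count $T_\gm = C \log(L/\eps_\gm)$ and per-iteration/total time complexities transfer verbatim from Claim \ref{cohen_1} and Lemma \ref{gm_new_1}, since fixing $\eps_\gm = \eps$ only sets the value of a parameter that already appears in those bounds.

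There is no real obstacle here: the result is a notational convenience for the frequently occurring regime in which the individual estimates share the same order of magnitude as the target $\tz$. The only mild care point is being explicit that $\eps_\gm$ has been tied to $\eps$; otherwise a reader could legitimately object that the right-hand side of Lemma \ref{gm_new_1} still carries the separate factor $\eps_\gm$. Since this corollary will be applied later with the approximate-GM subroutine \cite[Algorithm 1]{cohen2016geometric} run to accuracy $\eps$, the coupling $\eps_\gm = \eps$ is natural and imposes no extra algorithmic cost beyond what Claim \ref{cohen_1} already quantifies.
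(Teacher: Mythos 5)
Your proof is correct and follows exactly the paper's intended argument: specialize Lemma \ref{gm_new_1} by substituting $\max_\ell \norm{\z_\ell} \le \norm{\tz}$ and setting $\eps_\gm = \eps$, giving $6\eps\norm{\tz} + 5\eps\norm{\tz} = 11\eps\norm{\tz}$. The iteration and time complexity claims carry over verbatim from Claim \ref{cohen_1}, as you note.
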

We use this for analyzing the Subspace Median algorithm in which the $\z_\ell$s are vectorized projection matrices from the different nodes.

\subsubsection{Corollary for Unbounded $z_\ell$s} \label{gm_unbounded}
When some $\z_\ell$s may not be bounded, we need an extra thresholding step.
Observe that, from the assumption in Lemma \ref{gm_new_1}, w.p. at least $1 - Lp$, the good $\z_\ell$s are bounded by $(1 + \eps) \|\tz\|$. Thus, to get a set of $\z_\ell$'s that are bounded in norm, while not eliminating any of the good ones, we can create a new set that only contains $\z_\ell$'s with norm smaller than threshold $\thresh = (1 + \eps) \|\tz\|$. In other words, we compute the GM of the set $\{\z_1,...,\z_{L}\}\setminus\{ \z_\ell:\norm{\z_\ell}> (1 + \eps) \|\tz\| \}$ as the input to the GM computation algorithm \cite[Algorithm 1]{cohen2016geometric}.  More generally, $\omega$ can be set to $C\|\tz\|$ for any $C > 1$. In practice, to set the threshold, we only need to have an estimate of the norm of the unknown quantity $\tz$ that we are trying to estimate.

We have the following corollary of Lemma \ref{gm_new_2} for this setting.

\begin{cor}\label{gm_new_2}
	Let $\z_\gm$ denote a $(1+\eps_\gm)$ approximate GM of $\{\z_1,...,\z_{L}\}\setminus\{ \z_\ell:\norm{\z_\ell}>  \thresh \}$, all vectors are in $\Re^n$. Set $\thresh = (1 + \eps) \|\tz\|$.
	For a $\tau< 0.4$, suppose that, for at least $(1-\tau) L$ $\z_\ell$'s,
	\[
	\Pr\{\norm{\z_\ell-\tz}\leq \eps \|\tz\| \} \geq 1-p
	\]
Then, w.p. at least $1- c_\approxgm - L p -  \exp(-L\psi(0.4-\tau,p))$,
	\[
	\norm{\z_\gm-\tz} \leq  6 \eps \| \tz\| + 5 \eps_\gm   (1 + \eps) \|\tz\| < 14 \max(\eps,\eps_\gm) \|\tz\| 
	\]
The number of iterations needed is  $T_\gm = C \log(\frac{L}{\eps_\gm})$, and the time complexity is $O\left(n L \log^3(\frac{L}{\eps_\gm})\right)$.
\end{cor}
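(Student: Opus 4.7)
\medskip
\noindent\textbf{Proof proposal for Corollary \ref{gm_new_2}.}
The plan is to reduce the corollary to Lemma \ref{gm_new_1} applied to the thresholded set $S := \{\z_1,\dots,\z_L\}\setminus\{\z_\ell:\|\z_\ell\|>\thresh\}$. The two things I need to verify are (i) the hypothesis of Lemma \ref{gm_new_1} still holds on $S$ with the same $\tau$, and (ii) the $\max_\ell\|\z_\ell\|$ factor appearing in the conclusion of Lemma \ref{gm_new_1} is now controlled deterministically by $\thresh$.

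First I would observe the key triangle-inequality fact: any ``good'' index $\ell$ (one for which $\Pr\{\|\z_\ell-\tz\|\le \eps\|\tz\|\}\ge 1-p$ is assumed) satisfies $\|\z_\ell\|\le\|\tz\|+\|\z_\ell-\tz\|\le(1+\eps)\|\tz\|=\thresh$ on the event $\{\|\z_\ell-\tz\|\le\eps\|\tz\|\}$. Union-bounding this event over the at most $L$ good indices, with probability at least $1-Lp$ every good $\z_\ell$ is both $\eps$-close to $\tz$ and below the threshold, so every good index survives the thresholding and lies in $S$. This is the step where the $Lp$ term in the probability arises.

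On this event, since all of the (at least) $(1-\tau)L$ good indices are in $S$ and $|S|\le L$, the fraction of indices in $S$ that are ``good'' (in the sense of Lemma \ref{gm_new_1}) is at least $(1-\tau)L/|S|\ge 1-\tau$. Hence the hypothesis of Lemma \ref{gm_new_1} is satisfied for $S$ with the same parameters $\tau$ and $p$. Applying Lemma \ref{gm_new_1} to $S$ with the $(1+\eps_\gm)$ approximate GM $\z_\gm$ produces, with probability at least $1-c_\approxgm-\exp(-L\psi(0.4-\tau,p))$, the bound
\[
\|\z_\gm-\tz\|\ \le\ 6\eps\|\tz\|+5\eps_\gm\max_{\ell\in S}\|\z_\ell\|.
\]
By construction of $S$, $\max_{\ell\in S}\|\z_\ell\|\le\thresh=(1+\eps)\|\tz\|$, which yields the first inequality in the claim; the second inequality ($<14\max(\eps,\eps_\gm)\|\tz\|$) is then a trivial numerical simplification using $\eps<1$.

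The only subtle point (and the main thing worth writing down carefully) is the reduction in the previous paragraph — one has to argue that thresholding cannot hurt, i.e., that on the $1-Lp$ event no good index is removed so the ``good fraction'' hypothesis transfers from the original $L$-sized collection to the random subset $S$ with the same $\tau$. After that, Lemma \ref{gm_new_1} applies as a black box, the thresholding kills the unbounded-$\max$ issue, and combining the $1-Lp$ union bound with the $1-c_\approxgm-\exp(-L\psi(0.4-\tau,p))$ probability from Lemma \ref{gm_new_1} via a single union bound yields the stated success probability $1-c_\approxgm-Lp-\exp(-L\psi(0.4-\tau,p))$. The iteration/time complexity are inherited verbatim from Claim \ref{cohen_1} since the algorithm is just \cite[Algorithm 1]{cohen2016geometric} applied to $|S|\le L$ input vectors in $\Re^n$.
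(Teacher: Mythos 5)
Your proof is correct and follows essentially the same route as the paper's: union-bound over the good indices to ensure none are thresholded out (cost $Lp$), observe that the surviving set has at most $L$ elements so the good fraction only increases, then invoke Lemma \ref{gm_new_1} on the thresholded set with $\max_{\ell\in S}\|\z_\ell\|\le\thresh=(1+\eps)\|\tz\|$ replacing the unbounded max. The only cosmetic difference is that the paper tracks the union bound as $(1-\tau)Lp$ and then relaxes to $Lp$, and it explicitly names the effective $\tau'$ for the reduced set $S$; your presentation elides the latter but reaches the same conclusion.
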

We use this for analyzing ResPowMeth with $\z_\ell$ being the vectorized $\bPhi_\ell \U_\tau$. It is also used later for analyzing the GD step of the alternating GD and minimization (altGDmin) algorithm for solving the LRCS problem.

\subsection{Proof of Lemma \ref{key_lemma}} 
\label{proof_key_lemma}

Since $\SD_F(\U_\ell,\Ustar)=(1/\sqrt{2}) \norm{\pul - \pustar}_F$ \cite[Lemma 2.5]{spectral_dist}, thus, the lemma assumption implies that $ \max_{\ell \in \Jgood} \norm{\pul - \pustar}_F \le \sqrt{2}\delta$.

Observe that $\norm{\P_\U}_F\leq\sqrt{r}$ for any matrix $\U$ with orthonormal columns. Thus $\norm{\pul} \le \sqrt{r}$ for all $\ell$ including the Byzantine ones (recall that we orthonormalize the received $\Uhat_\ell$'s using QR at the center before computing $\pul$).
Hence,
 using GM Lemma \ref{gm_new_1}, 
	we have w.p. at least $1-c_\approxgm -\exp\left(-L \psi\left(0.4-\tau,p\right)\right)$
	\begin{equation}\label{eq:sd_GM_approx}
		\norm{\P_{GM}-\pustar}_F \leq 6\sqrt{2}\delta + 5\epsilon_\gm\sqrt{r}
	\end{equation}
Here $\P_{GM}= GM\{\pul,\ell \in [L]\}$. Thus, 
	\begin{align*}
		&  \max_{\ell \in \Jgood} \norm{\pul -\P_{GM}}_F \notag \\
		& \leq  \max_{\ell \in \Jgood} \norm{\pul - \pustar}_F + \norm{ \P_{GM} - \pustar}_F \notag \\
		& \leq  \sqrt{2}\delta + 6\sqrt{2} \delta + 5\eps_\gm\sqrt{r} = 7\sqrt{2}\delta + 5\eps_\gm\sqrt{r} 
	\end{align*}
w.p. at least $1-c_\approxgm-\exp(-L\psi(0.4-\tau,p))$.
	Next we bound the $\SD$ between $\P_{GM}$ and the node closest to it. This is denoted $\ell_{best}$ in the algorithm. 
	\begin{align*}
	\norm{\pulbst -\P_{GM}}_F & =	\min_{\ell}\norm{\pul - \P_{GM}}_F \\
		& \le \min_{\ell\in \Jgood}\norm{\pul - \P_{GM}}_F   \\
& \le \max_{\ell\in \Jgood}\norm{\pul - \P_{GM}}_F    \\
& \leq 7\sqrt{2}\delta + 5\epsilon_\gm\sqrt{r}
	\end{align*}
In this we used $\Jgood \subseteq [L]$ and hence the minimum value over all $L$ is smaller than that over all $\ell \in \Jgood$.
	We use this to bound the $\SD$ between $\U_{\ell_{best}}$ and $\Ustar$.
	\begin{align}
& \norm{\pulbst -\pustar}_F \notag \\
	& \le \norm{\pulbst -GM}_F + \norm{GM -\pustar}_F \notag \\
	& \leq 7\sqrt{2}\delta + 5\epsilon_\gm\sqrt{r} +  6\sqrt{2}\delta + 5\epsilon_\gm\sqrt{r} \notag \\
	& \leq  13\sqrt{2} \delta + 10\epsilon_\gm\sqrt{r}
\end{align}
Set $\eps_\gm = \delta\sqrt{2}/\sqrt{r}$. Thus,  we have that, w.p. at least $1-c_\approxgm-\exp(-L\psi(0.4-\tau,p))$,
\[\norm{\pulbst -\pustar}_F\leq  23 \sqrt{2}\delta\]
This then implies that $\SD_F(\U_{out},\Ustar) = \SD_F(\U_{\ell_{best}}, \Ustar) \leq 23\delta$ since $\U_{out} = \U_{\ell_{best}}$.

{\em Note: It is possible that $\ell_{best}$ is not a good node (we cannot prove that it is). This is why the above steps are needed to bound $\norm{\pulbst -\pustar}_F$.}

\subsection{Proof of Theorem \ref{th:ro_pca}, exact SVD at the nodes} \label{proof_main_res}
The version of Davis-Kahan $\sin \Theta$ theorem \cite{davis_kahan} stated next is taken from \cite[Corollary 2.8 ]{spectral_dist}.

\begin{claim}[Davis-Kahan  $\sin \Theta$ theorem \cite{davis_kahan}, \cite{spectral_dist}]\label{th:davis_kahan}
	Let $\bPhi^*,\bPhi$ be $n \times n$ symmetric matrices with $\Ustar\in\Re^{n\times r}$, $\U\in\Re^{n\times r}$ being the matrices of top $r$ singular/eigen vectors of $\bPhi^*,\bPhi$ respectively. Let $\sigmaonestar \geq ...\geq \sigma_n^*$ be the eigenvalues of $\bPhi^*$.
	If $\sigmarstar-\sigmarplusstar>0$ and $\norm{\bPhi-\bPhi^*}\leq\left(1-\frac{1}{\sqrt{2}}\right)(\sigmarstar-\sigmarplusstar)$
	then
	\[
\SD_F(\U,\Ustar)\leq \frac{2\sqrt{r}\norm{\bPhi-\bPhi^*}}{\sigmarstar-\sigmarplusstar}
\]
\end{claim}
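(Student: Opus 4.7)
The plan is to derive this as a direct consequence of the classical operator-norm Davis--Kahan $\sin\Theta$ theorem, combined with Weyl's inequality to lower-bound the separated spectral gap and the standard comparison $\|\M\|_F \le \sqrt{r}\|\M\|$ for matrices of rank at most $r$ to pass from operator norm to Frobenius norm.

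First, I would apply Weyl's inequality $|\sigma_i(\bPhi) - \sigma_i^*| \le \|\bPhi - \bPhi^*\|$ to the index $i = r$. Writing $\Delta := \sigma_r^* - \sigma_{r+1}^*$ and $\eta := \|\bPhi - \bPhi^*\|$, this yields $\sigma_r(\bPhi) \ge \sigma_r^* - \eta$. The bottom $n-r$ eigenvalues of $\bPhi^*$ are all at most $\sigma_{r+1}^*$ by assumption. Hence the top-$r$ eigenvalues of $\bPhi$ are separated from the bottom $(n-r)$ eigenvalues of $\bPhi^*$ by at least
\[
\delta := (\sigma_r^* - \eta) - \sigma_{r+1}^* = \Delta - \eta.
\]
Using the hypothesis $\eta \le (1 - 1/\sqrt{2})\Delta$, we get $\delta \ge \Delta/\sqrt{2} > 0$.

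Next, I would invoke the classical Davis--Kahan $\sin\Theta$ theorem in operator-norm form. Because the top-$r$ spectrum of $\bPhi$ and the bottom-$(n-r)$ spectrum of $\bPhi^*$ are separated by at least $\delta$, this standard result (taken as a black box from \cite{davis_kahan}) gives
\[
\|(\I - \U\U^\top)\Ustar\| \;\le\; \frac{\|\bPhi - \bPhi^*\|}{\delta} \;\le\; \frac{\sqrt{2}\,\eta}{\Delta}.
\]

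Finally, since $(\I - \U\U^\top)\Ustar$ has at most $r$ columns, $\|(\I - \U\U^\top)\Ustar\|_F \le \sqrt{r}\,\|(\I - \U\U^\top)\Ustar\|$. Combining,
\[
\SD_F(\U,\Ustar) \;\le\; \sqrt{r}\cdot\frac{\sqrt{2}\,\eta}{\Delta} \;=\; \frac{\sqrt{2r}\,\eta}{\Delta} \;\le\; \frac{2\sqrt{r}\,\eta}{\Delta},
\]
which is the claimed bound (with a small slack in the constant). The only real content is Step~2, the classical operator-norm $\sin\Theta$ theorem; reproving it from first principles would require manipulating the Sylvester-type equation $\bPhi\U - \U\bm{\Lambda} = (\bPhi - \bPhi^*)\U$ under the spectral gap condition, but since this is a textbook result, I would just cite it. The only ``care'' needed is the bookkeeping in Step~1: without the factor $1 - 1/\sqrt{2}$ on the perturbation, Weyl would not guarantee that the top-$r$ eigenvalues of $\bPhi$ stay above the bottom eigenvalues of $\bPhi^*$, and the operator-norm $\sin\Theta$ bound would be inapplicable.
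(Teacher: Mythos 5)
The paper does not prove this claim itself; it imports it verbatim from \cite[Corollary 2.8]{spectral_dist}, so there is no paper-side argument to compare against. Your derivation is a valid self-contained alternative: Weyl's inequality gives $\sigma_r(\bPhi) \ge \sigma_r^* - \|\bPhi-\bPhi^*\|$, so the cross-gap between the top-$r$ eigenvalues of $\bPhi$ and the bottom $n-r$ eigenvalues of $\bPhi^*$ is at least $\Delta - \|\bPhi-\bPhi^*\| \ge \Delta/\sqrt{2}$ under the stated restriction; the operator-norm $\sin\Theta$ theorem then gives $\SD_2(\U,\Ustar) \le \sqrt{2}\,\|\bPhi-\bPhi^*\|/\Delta$; and since $(\I-\U\U^\top)\Ustar$ has rank at most $r$, the conversion $\|\cdot\|_F \le \sqrt{r}\,\|\cdot\|$ finishes, yielding $\sqrt{2r}\,\|\bPhi-\bPhi^*\|/\Delta$, slightly sharper than the quoted $2\sqrt{r}$. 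Your route also makes explicit what the hypothesis $\|\bPhi-\bPhi^*\|\le(1-1/\sqrt{2})\Delta$ buys: it is exactly what keeps the Weyl-perturbed gap at least $\Delta/\sqrt{2}>0$ so the operator-norm $\sin\Theta$ theorem is applicable. One alternative worth knowing is the Yu--Wang--Samworth variant of Davis--Kahan, which bounds $\|\sin\Theta(\U,\Ustar)\|_F$ by $2\sqrt{r}\,\|\bPhi-\bPhi^*\|/\Delta$ directly in terms of the \emph{unperturbed} gap $\Delta$ alone, with no restriction on the perturbation size; that form gives the claim immediately and drops the hypothesis.
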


Suppose that, for all $\ell \in \Jgood$,
	\begin{align*}
		\Pr\{\norm{\bPhi_\ell-\bPhi^*}&\leq  b_0 \} \geq 1-p
	\end{align*}
Using Claim \ref{th:davis_kahan}, if $b_0 < (1-1/\sqrt{2})\Delta$, this implies that, for all $\ell \in \Jgood$, w.p. at least $1-p$,
	\begin{align*}
		\SD_F(\U_\ell,\Ustar) 
		& \leq \frac{2\sqrt{r}b_0 }{\Delta}
	\end{align*}

Using Lemma \ref{key_lemma} with $\delta \equiv \frac{2\sqrt{r}b_0 }{\Delta}$, this then implies that,
 w.p. at least $1 - c_\approxgm - \exp(-L \psi(0.4-\tau,p) )$,   
	\begin{equation*}\label{eq:gm_new_use}
		\SD_F(\U_{out},\Ustar)\leq 23 \frac{2\sqrt{r}b_0 }{\Delta} = 46\sqrt{r}  \frac{b_0}{\Delta}
	\end{equation*}
To get the right hand side $\le \eps$ we need $b_0 \le \frac{\eps}{46 \sqrt{r}} \Delta$.

\subsection{Proof of Theorem \ref{th:ro_pca}: SVD at nodes computed using power method}\label{proof_main_res_part2}
This proof also needs to use Claim \ref{pm_lemma_1} given below (this is \cite[Theorem 1.1]{npm_hardt}) that analyzes each iteration of what the author calls ``noisy power method'' (power method that is perturbed by a noise/perturbation $\G_t$ in each iteration $t$. 

\begin{claim}\label{pm_lemma_1}[Noisy Power Method \cite{npm_hardt}]
	Let $\Ustar$ ($n\times r$) denote top $r$ singular vectors of a symmetric $n \times n$ matrix $\bPhi^*$, and let $\sigma_i$ denote it's $i-$th singular value.
Consider the following algorithm (noisy PM).
	\begin{enumerate}
		\item Let $\U_{rand}$ be an $n\times r$ matrix with i.i.d. standard Gaussian entries. Set $\U_{t=0} = \U_{rand}$.
		\item For $t=1$ to $T_{pow}$ do,
		\begin{enumerate}
			\item $\Uhat_{t} \leftarrow \bPhi^* \U_{t-1} + \G_{t}$
			\item $\Uhat_{t} \leftarrow QR(\Uhat_{t})$
		\end{enumerate}
	\end{enumerate}
	If at every step of this algorithm, we have
\begin{align*}
&	5\norm{\G_{t}}\leq \eps_{pow}(\sigmarstar-\sigmarplusstar), \\
& 5\norm{\Ustar{}^\top \G_{t}}\leq (\sigmarstar-\sigmarplusstar)\frac{\sqrt{r}-\sqrt{r-1}}{\gamma\sqrt{n}}
 \end{align*}
	for some fixed parameter $\gamma$ and $\eps_{pow}<1/2$. Then w.p. at least $1-\gamma^{-C_1} - \exp^{-C_2n}$, there exists a  $T_{pow} \ge C\frac{\sigmarstar}{\sigmarstar -\sigmarplusstar}\log (\frac{n\gamma}{\eps_{pow}})$ so that after $T_{pow}$ steps we have that \[\norm{(I-\U_{T_{pow}}\U_{T_{pow}}^\top) \Ustar }\leq \eps_{pow}\]
\end{claim}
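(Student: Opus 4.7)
The plan is to track the evolution of the principal and orthogonal components of the iterate. Let $\Ustar_\perp$ be an $n\times(n-r)$ basis matrix orthogonal to $\Ustar$ and decompose $\bPhi^* = \Ustar \Lambda_1 \Ustar{}^\top + \Ustar_\perp \Lambda_2 \Ustar_\perp^\top$ with $\Lambda_1 = \mathrm{diag}(\sigma_1^*,\ldots,\sigma_r^*)$ and $\Lambda_2 = \mathrm{diag}(\sigma_{r+1}^*,\ldots,\sigma_n^*)$. Define $X_t := \Ustar{}^\top \U_t$ and $Y_t := \Ustar_\perp^\top \U_t$. The central quantity I would control is the ``tangent'' $\tau_t := \|Y_t X_t^{-1}\|$, together with the smallest singular value $\sigma_r(X_t)$. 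Note that the conclusion $\|(\I - \U_T \U_T^\top)\Ustar\| \le \eps_{pow}$ follows from $\tau_T \le \eps_{pow}/\sqrt{2}$ via a standard relation between principal angles and this quantity; so the whole proof reduces to (i) a good initialization of $\sigma_r(X_0)$ and (ii) a recursion on $\tau_t$.

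First I would handle the initialization. For $\U_{rand}$ with i.i.d.\ standard Gaussian entries, $\Ustar{}^\top \U_{rand}$ is itself an $r\times r$ standard Gaussian. A well-known anti-concentration bound for the smallest singular value of a Gaussian matrix (Edelman) gives $\sigma_r(\Ustar{}^\top \U_{rand}) \gtrsim (\sqrt{r}-\sqrt{r-1})/\gamma$ with probability at least $1-\gamma^{-C_1}$, and $\|\U_{rand}\|\lesssim \sqrt{n}$ with probability $1-e^{-C_2 n}$. After QR, this yields $\sigma_r(X_0) \gtrsim (\sqrt{r}-\sqrt{r-1})/(\gamma\sqrt{n})$ and $\|Y_0\|\le 1$, hence $\tau_0 \lesssim \gamma\sqrt{n}/(\sqrt{r}-\sqrt{r-1})$.

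Next, I would unroll one step of the noisy iteration. Pre-QR, $\Uhat_t = \bPhi^* \U_{t-1} + \G_t$, so $\Ustar{}^\top \Uhat_t = \Lambda_1 X_{t-1} + \Ustar{}^\top \G_t$ and $\Ustar_\perp^\top \Uhat_t = \Lambda_2 Y_{t-1} + \Ustar_\perp^\top \G_t$. The QR step post-multiplies both $X$ and $Y$ by the same invertible $r\times r$ matrix, so $\tau_t$ depends only on the pre-QR ratio:
\[
\tau_t \;\le\; \frac{\sigmarplusstar\, \|Y_{t-1}\| + \|\G_t\|}{\sigmarstar\, \sigma_r(X_{t-1}) - \|\Ustar{}^\top \G_t\|}.
\]
Using the hypotheses $5\|\G_t\|\le\eps_{pow}(\sigmarstar-\sigmarplusstar)$ and $5\|\Ustar{}^\top\G_t\|\le (\sigmarstar-\sigmarplusstar)(\sqrt{r}-\sqrt{r-1})/(\gamma\sqrt{n})$, together with the inductive invariant $\sigma_r(X_{t-1})\ge \tfrac12(\sqrt{r}-\sqrt{r-1})/(\gamma\sqrt{n})$, the denominator stays $\ge \sigmarstar\sigma_r(X_{t-1})(1-c)$ for a small $c$, and one gets either multiplicative contraction $\tau_t \le (\sigmarplusstar/\sigmarstar)(1+c)\tau_{t-1} + (\text{small})$, or a floor at $O(\eps_{pow})$. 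The induction preserving $\sigma_r(X_t)$ uses $\sigma_r(X_t)\ge\sigma_r(\Lambda_1 X_{t-1})-\|\Ustar{}^\top\G_t\|$ (pre-QR) and the fact that $\tau_t$ stays bounded so QR does not destroy well-conditioning of $X_t$.

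Combining, $\tau_t$ shrinks by a factor $1 - c(\sigmarstar-\sigmarplusstar)/\sigmarstar$ each step until it reaches $O(\eps_{pow})$. Starting from $\tau_0 \le \gamma\sqrt{n}/(\sqrt{r}-\sqrt{r-1})$, the required number of iterations is $T_{pow} = C \tfrac{\sigmarstar}{\sigmarstar-\sigmarplusstar}\log(n\gamma/\eps_{pow})$, matching the stated bound. The main obstacle, and the reason the second noise hypothesis has the unusual $(\sqrt{r}-\sqrt{r-1})/(\gamma\sqrt{n})$ form, is sustaining the lower bound on $\sigma_r(X_t)$ across all iterations: one cannot afford the noise projected onto $\Ustar$ to ever exceed roughly $\sigmarstar \sigma_r(X_{t-1})$, since otherwise the QR step could rotate the iterate into a rank-deficient direction and the tangent blows up. Getting this invariant to close requires carefully choosing constants so that the factor-of-$\tfrac12$ slack in $\sigma_r(X_t)$ is maintained despite up to $T_{pow}$ perturbations, which is exactly where the $\sqrt{r}-\sqrt{r-1}$ (i.e., $\Omega(1/\sqrt{r})$) scaling enters.
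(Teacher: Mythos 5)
This Claim is not proved in the paper at all: it is stated as a direct citation of \cite[Theorem 1.1]{npm_hardt} (Hardt and Price's noisy power method result), and the paper only \emph{uses} it in the proofs of Theorems \ref{th:ro_pca} and \ref{th:ro_pm}. So there is no in-paper proof to compare against; what you have written is a reconstruction of the Hardt--Price argument. Your high-level plan does match theirs: decompose $\U_t$ into $X_t = \Ustar{}^\top\U_t$ and $Y_t = \Ustar_\perp^\top\U_t$, track the largest principal-angle tangent $\tau_t=\|Y_tX_t^{-1}\|$ (exploiting that QR only post-multiplies and so cancels out of $Y_tX_t^{-1}$), control $\sigma_r(X_0)$ via the smallest-singular-value anti-concentration bound for a Gaussian $r\times r$ matrix, and iterate until $\tau_T\le\eps_{pow}$.

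However, your displayed one-step inequality, as written, does not close the induction. You bound the numerator by $\sigmarplusstar\|Y_{t-1}\|+\|\G_t\|$ and the denominator by $\sigmarstar\sigma_r(X_{t-1})-\|\Ustar{}^\top\G_t\|$, and then invoke $\|Y_{t-1}\|\le\tau_{t-1}$. That yields a contraction factor of order $\sigmarplusstar/(\sigmarstar\,\sigma_r(X_{t-1}))$, which at initialization is $\approx \sigmarplusstar\gamma\sqrt{n}/\sigmarstar \gg 1$; the recursion would then \emph{diverge} in the early iterations, exactly where it matters most. The fix, which is what Hardt--Price effectively exploit, is the CS-decomposition identity $\|Y_{t-1}\|=\sin\theta_r = \tau_{t-1}\cos\theta_r = \tau_{t-1}\,\sigma_r(X_{t-1})$. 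Plugging that in, the $\sigma_r(X_{t-1})$ factor cancels between numerator and denominator in the leading term, giving a true multiplicative coefficient of order $\sigmarplusstar/\sigmarstar$; the noise term $\|\G_t\|/\sigma_r(X_{t-1})$ then also scales like $\eps_{pow}(\sigmarstar-\sigmarplusstar)\tau_{t-1}$ while $\tau_{t-1}$ is large, and drops to an $O(\eps_{pow})$ floor once $\sigma_r(X_{t-1})$ is $\Omega(1)$. Without this identity (or an equivalent algebraic rearrangement that pulls $X_{t-1}^{-1}\Lambda_1^{-1}$ out of the inverse first), the ``multiplicative contraction'' you assert in the next sentence does not follow from the inequality you displayed. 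One further small point: $\|(\I-\U_T\U_T^\top)\Ustar\|=\sin\theta_r\le\tan\theta_r=\tau_T$, so the target is simply $\tau_T\le\eps_{pow}$; no $1/\sqrt{2}$ factor is needed.
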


 We state below a lower bound on $\sqrt{r}-\sqrt{r-1}$ based on Bernoulli's inequality.
	\begin{fact}\label{f:r_bound}
		Writing $\sqrt{r}-\sqrt{r-1}=\sqrt{r}\left(1-\sqrt{1-\frac{1}{r}}\right)$ and using Bernoulli's inequality $(1+r)^x\leq 1+xr$ for every real number $0\leq x\leq 1$ and $r\geq-1$ we have
	$\frac{1}{2\sqrt{r}}<\sqrt{r}-\sqrt{r-1}$
	\end{fact}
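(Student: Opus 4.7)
The plan is direct and essentially follows the hint embedded in the statement itself. I would begin by rewriting the difference $\sqrt{r} - \sqrt{r-1}$ by factoring out $\sqrt{r}$, namely
\[
\sqrt{r} - \sqrt{r-1} = \sqrt{r}\left(1 - \sqrt{1 - \tfrac{1}{r}}\right).
\]
Having isolated the factor $1 - \sqrt{1 - 1/r}$, the task reduces to showing that $1 - \sqrt{1 - 1/r} > \frac{1}{2r}$, or equivalently $\sqrt{1 - 1/r} < 1 - \frac{1}{2r}$.

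Next, I would invoke Bernoulli's inequality in the fractional-exponent form, $(1+x)^\alpha \le 1 + \alpha x$ valid for $x \ge -1$ and $0 \le \alpha \le 1$, with the choice $x = -1/r$ and $\alpha = 1/2$. This immediately yields $(1 - 1/r)^{1/2} \le 1 - \frac{1}{2r}$. For $r \ge 1$ (indeed, $r \ge 2$ suffices for the bound to be nontrivial, since $r=1$ trivializes the claim as both sides vanish or become degenerate), the inequality is strict because $-1/r \ne 0$ and $\alpha = 1/2$ lies strictly inside $(0,1)$.

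Multiplying both sides of $1 - \sqrt{1-1/r} > \frac{1}{2r}$ by $\sqrt{r}$ produces
\[
\sqrt{r} - \sqrt{r-1} = \sqrt{r}\left(1 - \sqrt{1 - \tfrac{1}{r}}\right) > \sqrt{r} \cdot \frac{1}{2r} = \frac{1}{2\sqrt{r}},
\]
which is exactly the desired bound. There is no real obstacle here, since the entire argument is a one-line application of the Bernoulli inequality followed by rearrangement; the only mild subtlety is verifying that the version of Bernoulli with fractional exponent is strict under our hypotheses, but this is standard.
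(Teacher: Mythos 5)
Your proof is correct and follows exactly the route the paper itself sketches inside the statement: factor out $\sqrt{r}$, apply Bernoulli's inequality $(1+u)^\alpha \le 1+\alpha u$ with $u=-1/r$, $\alpha=1/2$, and rearrange. You add the small but legitimate observation that Bernoulli is strict when $u\neq 0$ and $0<\alpha<1$, which the paper leaves implicit; the parenthetical remark about $r=1$ being degenerate is off (at $r=1$ the inequality reads $1/2 < 1$, which is fine) but does not affect the argument.
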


Suppose that, for all $\ell \in \Jgood$,
	\begin{align*}
		\Pr\{\norm{\bPhi_\ell-\bPhi^*}&\leq  b_0 \} \geq 1-p
	\end{align*}
Using Claim \ref{th:davis_kahan}, if $b_0 < (1-1/\sqrt{2})\Delta$, this implies that, for all $\ell \in \Jgood$, w.p. at least $1-p$,
	\begin{align}
		\SD_F(\U_\ell,\Ustar) 
		& \leq \frac{2\sqrt{r}b_0}{\Delta}
\label{kahan_bnd}
	\end{align}

Suppose that $\Uhat_\ell$ is an estimate of $\U_\ell$ computed using the power method. Next we use Claim \ref{pm_lemma_1} to help guarantee that $\SD_F(\Uhat_\ell,\U_l)$ is also bounded by $2\sqrt{r}b_0 /\Delta$.
Using Claim \ref{pm_lemma_1} with $\bPhi = \bPhi_\ell$, $\U = \U_\ell$, $\G_{\tau}=0$ for all $\tau$,  $\eps_{pow} = \frac{2b_0 }{\Delta}$, and $\gamma = n^{10}$,
we can conclude that if
$
T_{pow}>C\frac{\sigma_r(\bPhi_\ell)}{\sigma_r(\bPhi_\ell) -\sigma_{r+1}(\bPhi_\ell)}\log (\frac{n \cdot n^{10}}{\eps_{pow}}),
$
then $\SD_2(\Uhat_\ell, \U_\ell) \le \eps_{pow}=\frac{2b_0 }{\Delta}$ w.p. at least $1  -p- 1/n^{10}$. Here $\sigma_i = \sigma_i(\bPhi_\ell)$.
Using $\norm{\bPhi_\ell-\bPhi^*} \leq  b_0$ and Weyl's inequality, $\sigma_r -\sigma_{r+1} \ge \Delta - 2 b_0$ and $\sigma_r < \sigmarstar + b_0$.
Thus, if
\[
T_{pow} \ge C\frac{ \sigmarstar + b_0}{\Delta - 2b_0} \log ( n \frac{ \Delta}{ b_0})
\]
then
\[
\SD_2(\Uhat_\ell, \U_\ell) \le \eps_{pow} = \frac{2b_0}{\Delta}
\]
w.p. at least $1 -p - 1/n^{10}$. This then implies that $\SD_F(\Uhat_\ell, \U_\ell) \le \frac{2b_0 \sqrt{r}}{\Delta}$.

Combining this bound with the Davis-Kahan bound from \eqref{kahan_bnd}, we can conclude that, w.p. at least $1 -p - 1/n^{10}$,
\begin{align}
\SD_F(\Uhat_\ell,\Ustar) & \leq 2 \frac{2\sqrt{r}b_0}{\Delta} = 4 \sqrt{r} \frac{b_0}{\Delta}
\end{align}

Applying Lemma \ref{key_lemma} with $\delta \equiv  4 \sqrt{r}\frac{b_0}{\Delta}$, this then implies that,
 w.p. at least $1 - \exp(-L \psi(0.4-\tau,p + 1/n^{10}) )$,   
	\begin{equation}\label{eq:gm_new_use}
		\SD_F(\U_{out},\Ustar)\leq 23 \cdot  4 \sqrt{r} \frac{b_0}{\Delta} = 92 \sqrt{r}  \frac{b_0}{\Delta}
	\end{equation}
If we want the RHS of the above to be $\le \eps$, we need
\[
b_0 = \frac{\eps}{92 \sqrt{r}} \Delta
\]
and we need
$T_{pow} \ge C\frac{ \sigmarstar + b_0}{\Delta - 2b_0} \log ( n \frac{ \Delta}{ b_0}) $ with this choice of $b_0$.
By substituting for $b_0$ in the above expression, and upper bounding to simplify it, we get the following as one valid choice of $T_{pow}$
\[
T_{pow} = C (1 + 6\eps) \frac{\sigmarstar}{\Delta} \log ( n \frac{92\sqrt{r}}{\eps})
\]
This used $(1+\eps)(1-2 \eps)^{-1} < (1+\eps)(1 + 4 \eps) < 1+6 \eps$ for $\eps < 1$.
Since we are using $C$ to include all constants, and using $\eps < 1$, this further  simplifies to $T_{pow} = C  \frac{\sigmarstar}{\Delta} \log ( \frac{nr}{\eps}) $

\subsection{Proof of Theorem \ref{th:ro_pm}} \label{proof_respm}

We use Claim \ref{pm_lemma_1} with $\G_{t}=\bPhi^*\U - GM\{\bPhi_\ell\U\}_{\ell=1}^{L}$ and output $\U_{T_{pow}}\in\Re^{n\times r}$.
To apply it, we need $\|\G_t\|$ to satisfy the two bounds given in the claim. We use Lemma \ref{gm_new_2} to bound it.

Suppose that, for at least $\left(1-\tau\right) L$, $\bPhi_\ell$'s,
	\begin{align*}
		\Pr\{\norm{\bPhi_\ell-\bPhi^*}&\leq  b_0 \sigmax \} \geq 1-p
	\end{align*}
Since $\|\U\|_F = \sqrt{r}$, this implies
	\[
	\Pr\{	\norm{\bPhi_\ell\U-\bPhi^*\U}_F \leq  b_0 \sqrt{r}  \sigmax  \} \geq 1-p.
	\]
	We use this and apply Lemma \ref{gm_new_2} with $\z_\ell \equiv vec(\bPhi_\ell \U)$ and $\tz \equiv vec(\bPhi^* \U)$ so that $\|\tz\| = \|\bPhi^* \U\|_F \le \sigmax \sqrt{r}$. Setting
$\eps_\gm=b_0$ and applying the lemma, we have
	w.p. at least $1-c_\approxgm - Lp-\exp\left(-L\psi\left(0.4-\tau,p\right)\right)$
	\[
\norm{\G_t} \le \norm{\G_t}_F =	\norm{GM\{\bPhi_\ell\U\}_{\ell=1}^{L}-\bPhi^*\U}_F \leq  14  b_0 \sqrt{r}  \sigmax 
	\]
Recall that   $\sigmarstar-\sigmarplusstar \ge \Delta$.	We thus need $5\norm{\G_t}\leq \eps \Delta$ to hold. This will hold with high probability if $b_0 \sqrt{r}  \sigmax  \le \frac{\eps \Delta}{70}$.
Using Fact \ref{f:r_bound} and $\gamma=c$, for the second condition of Claim \ref{pm_lemma_1} to hold, we need
$\norm{\G_t} \leq\Delta\frac{1}{10c\sqrt{nr}}$. This then implies that we need $b_0\sqrt{r}  \sigmax  \le \frac{\Delta}{140c\sqrt{nr}}$.

Thus we can set $b_0 = \min\left( \frac{\eps}{70 \sqrt{r}}, \frac{1}{140c\sqrt{n}r}  \right)  \frac{\Delta}{\sigmax}$.

We also need $T_{pow} > C\frac{\sigmarstar}{\sigmarstar-\sigmarplusstar} \log (\frac{n\gamma}{\eps})$. This holds if we set $T_{pow} = C\frac{\sigmarstar}{\Delta} \log (\frac{nc}{\eps})$.

Hence 
w.p. at least $1-c_\approxgm Lp-\exp\left(-L\psi\left(0.4-\tau,p\right)\right)-c -e^{-C_2n}\geq 1-c_\approxgm - c-Lp-\exp\left(-L\psi\left(0.4-\tau,p\right)\right)$
\[\SD_F(\U_{out},\Ustar)\leq \epsilon\]

\subsection{Proof of Corollary \ref{th:ro_pca_A}}\label{proof_ro_pca_A}
The first part is a corollary of Theorem \ref{th:ro_pca} and \cite[Theorem 4.7.1]{versh_book} stated next. It gives a high probability bound on the error between an empirical covariance matrix estimate, $\hat{\Sig} = \D \D^\top/\q$, with the $\q$ columns of $\D$ being independent sub-Gaussian random vectors $\dstar_k$, and the true one, $\Sigmastar$.
\begin{claim}[\cite{versh_book}]\label{th:sigma_hpb}
Suppose that the matrix $\D$ is as defined in Sec. \ref{res_pca}. 
With probability at least $1-2\exp(-n)$,
	\[\norm{\hat{\Sig}-\Sigmastar}\leq CK^2\sqrt{\frac{n}{\q}}\norm{\Sigmastar}.\]
Here $K$ is the maximum sub-Gaussian norm of $\Sigmastar{}^{-1/2}\dstar_k$ over $k$.
\end{claim}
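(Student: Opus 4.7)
\textbf{Proof proposal for Claim \ref{th:sigma_hpb}.} The plan is to follow the standard whitening plus $\epsilon$-net proof of sub-Gaussian covariance concentration. First I would reduce to the isotropic case: define $\bm{w}_k := \Sigmastar^{-1/2} \dstar_k$ so that the $\bm{w}_k$'s are i.i.d. isotropic (identity-covariance), zero-mean, sub-Gaussian vectors with $\|\bm{w}_k\|_{\psi_2} \le K$ by the hypothesis on the sub-Gaussian norm of $\Sigmastar{}^{-1/2} \dstar_k$. Writing $\hat\Sig_w := \frac{1}{\q}\sum_k \bm{w}_k \bm{w}_k^\top$, we have the factorization $\hat\Sig - \Sigmastar = \Sigmastar^{1/2}(\hat\Sig_w - \I) \Sigmastar^{1/2}$, hence $\|\hat\Sig - \Sigmastar\| \le \|\Sigmastar\| \cdot \|\hat\Sig_w - \I\|$. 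It therefore suffices to prove that, w.p. at least $1 - 2e^{-n}$,
\[
\|\hat\Sig_w - \I\| \le C K^2 \sqrt{n/\q}.
\]

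Next, since $\hat\Sig_w - \I$ is symmetric, $\|\hat\Sig_w - \I\| = \sup_{\x \in S^{n-1}} |\x^\top (\hat\Sig_w - \I) \x|$, and a standard discretization (e.g.\ \cite[Lemma 4.4.1]{versh_book}) gives
\[
\|\hat\Sig_w - \I\| \le 2 \max_{\x \in \mathcal{N}_{1/4}} \Big| \tfrac{1}{\q} \sum_{k=1}^{\q} (\bm{w}_k^\top \x)^2 - 1 \Big|,
\]
where $\mathcal{N}_{1/4}$ is a $1/4$-net of $S^{n-1}$ with $|\mathcal{N}_{1/4}| \le 9^n$. Fix $\x \in \mathcal{N}_{1/4}$. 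Then $\bm{w}_k^\top \x$ is a mean-zero sub-Gaussian scalar with $\|\bm{w}_k^\top \x\|_{\psi_2} \le K$, so by the centering and product rules for Orlicz norms, $(\bm{w}_k^\top \x)^2 - 1$ is a mean-zero sub-exponential random variable with $\psi_1$-norm bounded by $C K^2$. By Bernstein's inequality for sub-exponentials,
\[
\Pr\Big( \Big|\tfrac{1}{\q}\sum_{k=1}^{\q}(\bm{w}_k^\top \x)^2 - 1 \Big| > t \Big) \le 2\exp\!\Big( -c\q \min\!\big(\tfrac{t^2}{K^4}, \tfrac{t}{K^2}\big) \Big).
\]

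Finally, I would choose $t = C K^2 \sqrt{n/\q}$ with the constant $C$ large enough that $c\q \cdot t^2 / K^4 = c C^2 n \ge 3 n \log 9$ and such that $t \le 1$ (which holds in the regime $\q \gtrsim K^4 n$; outside this regime the claimed bound exceeds $\|\Sigmastar\|$ after possibly enlarging $C$ and the conclusion becomes vacuous/trivial). A union bound over the $9^n$ points of $\mathcal{N}_{1/4}$ then yields
\[
\Pr\Big( \max_{\x \in \mathcal{N}_{1/4}} \big|\tfrac{1}{\q}\textstyle\sum_{k}(\bm{w}_k^\top \x)^2 - 1 \big| > t \Big) \le 2 \cdot 9^n \exp(-c C^2 n) \le 2 e^{-n},
\]
which, combined with the net bound and the factorization above, gives $\|\hat\Sig - \Sigmastar\| \le 2t \cdot \|\Sigmastar\| \le C K^2 \sqrt{n/\q}\, \|\Sigmastar\|$ after absorbing constants. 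The only mildly delicate step is tracking the Orlicz-norm manipulation that turns $(\bm{w}_k^\top \x)^2 - 1$ into a sub-exponential of norm $\lesssim K^2$, but this is entirely standard (see the product and centering properties of $\|\cdot\|_{\psi_2}$ and $\|\cdot\|_{\psi_1}$ in \cite{versh_book}). The bound is actually directly quoted as \cite[Theorem 4.7.1]{versh_book}, so this sketch simply re-traces the proof there.
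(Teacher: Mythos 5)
Your sketch is correct and is exactly the standard whitening $+$ $\epsilon$-net $+$ Bernstein argument behind \cite[Theorem 4.7.1]{versh_book}; the paper does not supply its own proof of this claim but simply cites Vershynin, so there is no ``paper route'' to compare against beyond the one you re-traced. One small imprecision worth flagging: Vershynin's bound actually reads $\norm{\hat\Sig-\Sigmastar} \lesssim K^2(\sqrt{n/\q}+n/\q)\norm{\Sigmastar}$ (after setting $u=n$), and the linear term $n/\q$ dominates when $\q \lesssim n$; the paper drops that term because its downstream sample-complexity assumptions always force $\q \gtrsim nr \ge n$, not because the bound is literally vacuous otherwise. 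Your parenthetical about the out-of-regime case being ``trivial'' is therefore slightly off as stated, but it does not affect the validity of the claim in the regime where it is used.
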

	Using Theorem \ref{th:ro_pca} with $\bPhi_\ell \equiv\hat{\Sig}_{\ell} = \D_\ell \D_\ell^\top / \q$, $\bPhi^*\equiv \Sigmastar$, in order to guarantee $\SD(\U_{out},\Ustar) \le \eps$ w.h.p.,
we need
	\[
\norm{\hat{\Sig}_{\ell}-\Sigmastar}\leq \frac{\epsilon\Delta}{92\sqrt{r}}
\]
By Claim \ref{th:sigma_hpb},
	\[\Pr\{\norm{\hat{\Sig}_{\ell}-\Sigmastar}\leq CK^2\sqrt{\frac{n}{\q}}\norm{\Sigmastar}\}\geq 1-2\exp(-n)\]
The above bound will be less than 	$\frac{\epsilon\Delta}{92\sqrt{r}}$ if
$
\q\geq \frac{92^2CK^4nr\norm{\Sigmastar}^2}{\Delta^2\epsilon^2}.
$

\subsection{Proof of Corollary \ref{th:sub_gmom}}
Corollary \ref{th:sub_gmom} is again a direct corollary of Theorem \ref{th:ro_pca} and \cite[Theorem 4.7.1]{versh_book}. We now apply both results on $\bPhi_{(\vartheta)}:= \sum_{\ell=1}^\rho \D_{(\vartheta,\ell)} \D_{(\vartheta,\ell)}^\top / (\q\rho)$, $\vartheta \in [\tL]$.
The reason this proof follows exactly as that for subspace median is because we assume that the set of Byzantine nodes is fixed across all iterations of this algorithm and the number of such nodes is lower by a factor of $L/\tL$. Consequently, for the purpose of the proof one can assume that no more than $\tau \tL$ mini-batches are Byzantine. With this, the proof remains the same once we replace $\q$  by $\q\rho$ and $L$ by $\tL$.

\section{Simulation Experiments} \label{sims}

All numerical experiments were performed using MATLAB on Intel(R)Xeon(R) CPU E3-1240 v5 @ 3.50GHz processor with 32.0 GB RAM.

\subsection{PCA experiments}

\subsubsection{Data generation}
We generated $\bPhi^* = \Ustar_{full} \S_{full} \Ustar_{full}{}^\top$, with $\Ustar_{full}$ generated by orthogonalizing an $n \times n$ standard Gaussian matrix; $\S_{full}$ is a diagonal matrix of singular values which are set as described below. This was generated once. The model parameters $n$, $r, q$, $L, L_{byz}$, and entries of $\S_{full}$ are set as described below in each experiment.

In all our experiments in this section, we averaged over 1000 Monte Carlo runs.
In each run, we sampled $q$ vectors from the Gaussian distribution, $\mathcal{N}(\mathbf{0}, \bPhi^*)$ to form the data matrix $\D$. This is split into $L$ columb sub-matrices, $\D_1, \D_2, \dots, \D_L$ with each containing $\q = q/L$ columns. $q,L$ are set so that $q/L$ is an integer.
Each run also generated a new $\U_{rand}$ to initialize the power method used by the nodes in case of SubsMed and used by the center in case of ResPowMeth. The same one was also used by the power methods for SubsMoM. Note: since SubsMed and SubsMoM run $\L$ and $\tL$ different power methods, ideally each could use a different $\U_rand$ and that would actually improve their performance. To be fair to all three methods, we generated $U_{rand}$ this way.

Let $L_{byz}=\tau L$. In all our experiments, we fixed $n = 1000$ and varied $r, q$, $L, L_{byz}$, and $\S_{full}$. In all experiments we used a large singular value gap (this ensures that a small value of $T_{pow}$ suffices).  We  experimented with three types of attacks described next.

\subsubsection{Attacks}
To our best knowledge, the PCA problem has not been studied for Byzantine resiliency, and hence, there are no known difficult attacks for it.
It is impossible to simulate the most general Byzantine attack. We focused on three types of attacks.
Motivated by reverse gradient (rev) attack \cite{rev_attack}, we generated the first one by colluding with other nodes to set $\U_{corrupt}$ as a matrix in the subspace orthogonal to that spanned by $\sum_{\ell}\hat{\U_\ell}$ at each iteration.
This is generated as follows. Let $\U =\sum_{\ell}\hat{\U_\ell}$ (in case of SubsMed, SubsMoM) and $\U = \sum_\ell\bPhi_\ell\U_t$ (for ResPowMeth). Orthonormalize it $\tilde{\U}=orth(\U)$ and let $\tilde{\M}=I-\tilde{\U}\tilde{\U}^\top$, obtain its QR decomposition $\tilde{\M} \stackrel{QR}{=} \U_{perp} R$ and set $\U_{corrupt}= (\thresh/\sqrt{r})  \U_{perp}(:,1:r)$. We call this \textit{Orthogonal attack}. Since SubsMed runs all its iterations locally, this is generated once for SubsMed, but it is generated at each iteration for ResPowMeth and SubsMoM.

The second attack that we call the  \textit{ones attack} consists of an $n\times r$ matrix of $-1$ multiplied by a large constant $C_{attack}$. The third attack that we call the  \textit{Alternating attack} is an  $n\times r$ matrix of alternating $+1$, $-1$ multiplied by a large constant $C_{attack}>0$. Values of $C_{attack}$ were chosen so that they do not get filtered out, essentially $0.9\thresh/\sqrt{nr}$.

\subsubsection{Algorithm Parameters}
For all geometric median (GM) computations, we used Weiszfeld's algorithm initialized using the average of the input data points. We set $T_\gm=10$. We vary $T_{pow}$.

\subsubsection{Experiments}
In all experiments, we compare ResPowMeth and SubsMed. In some of them, we also compare SubsMoM.  We also report results for the basic power method in the no attack setting.  To provide a baseline for what error can be achieved for a given value of $n,q,r$, we also report results for using ``standard power method'' in the no-attack setting; with this being implemented using power method with $T_{pow}$ iterations.
{\em Our reporting format is "$\max \SD_F(\text{mean} \SD_F), \text{mean time}$" in the first table and just "$\max \SD_F(\text{mean} \SD_F)$" in the others.} Here $\max \SD_F$ is the worst case error over all 1000 Monte Carlo runs, while $\text{mean} \SD_F$ is its mean over the runs.

In our first experiment, we let $n=1000$, $r=60$, $q= 1800$, $L=3, L_{byz}=1$, and we let $\S_{full}$ be a full rank diagonal matrix with first $r$ entries set to 15, the $r+1$-th entry to 1, and the others generated as $1-(1/n), 1-(2/n), \dots$.  Next, we simulated an approximately low rank $\Sigmastar$ by setting its  first $r$ entries set to 15, the $r+1$-th entry to 1, and the other entries to zero. We report results for both these experiments in Table \ref{exp1}.
As can be seen, from the first to the second experiment, the error reduces for both SubsMed and ResPowMeth, but the reduction is much higher for SubsMed.
Notice also that, for $Tpow = 1$, both ResPowMeth and SubsMed have similar and large errors with that of SubsMed being very marginally smaller. For $T_{pow}=10$, SubsMed has significantly smaller errors than ResPowMeth for reasons explained in the paper.  ResPowMeth has lower errors for the Orthogonal attack than for the other two; we believe the reason is that the Orthogonal attack changes at each iteration for ResPowMeth.

We also did some more experiments with (i) $L=3, L_{byz}=1$, $r=2,q=360$,  (ii) $L=6, L_{byz}=2$, $r=2,q=720$, and (iii) $L=6, L_{byz}=2$, $r=60,q=3600$.  All these results are reported in Table \ref{exp2}. Similar trends to the above are observed for these too.

In a third set of experiments, we used $L=18$, $r=60,q=3600$ and two values of  $L_{byz}$, $L_{byz}=2, L_{byz}=4$. For this one, we also compared SubsMoM with using $\tL=6$ minibatches. In the $L_{byz}=2$ case, SubsMoM has the smallest errors, followed by SubsMed. Error of ResPowMeth is the largest.
In the $L_{byz}=4$ case, ResPowMeth still has the largest errors. But in this case SubsMoM with $\tL=6$ also fails (when taking the GM of 6 points, 4 corrupted points is too large. SubsMed has the smallest errors in this case. We report results for these experiments in Table \ref{mom1}

\subsection{LRCS experiments}

In all experiments, we used $n=600$, $q=600$, $r=4$, $m=198$, and $L=18$ so that $\m=m/L=11$ and two values of $L_{byz}=1,2$. We simulated $\Ustar$ by orthogonalizing an $n\times r$ standard Gaussian matrix; and the columns $\bstar_k$ were generated i.i.d. from $\mathcal{N}(0,I_r)$. We then set $\Xstar=\Ustar\Bstar$. This was done once (outside Monte Carlo loop). For $100$ Monte Carlo runs, we generated matrices $\A_k,k\in[q]$ with each entry being i.i.d. standard Gaussian and we set $\y_k = \A_k \xstar_k$, $k \in [q]$.
In the figures we plot Error vs Iteration where $Error=\frac{\SD_F(\Ustar,\U)}{\sqrt{r}}$. We simulated the Reverse gradient (Rev) attack for the gradient step. In this case, malicious gradients are obtained by finding the empirical mean of the gradients from all nodes: $\nabla\leftarrow \frac{1}{L}\sum_{\ell=1}^{L}\nabla_\ell$ and set $\nabla_{mal}=-C\nabla$ where
$C = 10$. This forces the GD step to move in the reverse direction of the true gradient. We used step size $\eta=\frac{0.5}{\sigmax^2}$. We used Weiszfeld’s method to approximately compute geometric median.

We compare Byz-AltGDmin (Median) with Byz-AltGDmin (MoM) for both values of $L_{byz}$. We also provide results for the baseline algorithm - basic AltGDmin in the no attack setting. All these are compared in Figure \ref{byzaltgd}(b). We also compare the initialization errors in Figure \ref{byzaltgd}(a). As can be seen SubsMoM based initialization error is quite a bit lower than that with SubsMed. The same is true for the GDmin iterations.

\subsection{How to tune or set parameters for a real application}
Consider the PCA problem and Subspace Median.  Suppose that the user of the algorithm specifies the desired dimension $r$ and the desired final estimation error $\eps$. 
Our theoretical guarantee specifies that we need $T_{pow}=C\frac{\sigma_r^*}{\Delta}\log(\frac{n}{\eps})$, and $T_\gm=C\log(Lr/\eps)$. Given $r$ and $\eps$ (desired error), for setting $T_\gm$ all values are known. For $T_{pow}$ we need $\sigma_{r}^*$ and $\Delta$. To estimate these, at each node $\ell$, we compute the $r$-th and $(r+1)$-th singular values of $\bPhi_\ell = \D_\ell \D_\ell^\top/q_\ell$. Denote these by  $\hat{\sigma}_{\ell,r}$, and $\hat{\Delta}_\ell=\hat{\sigma}_{\ell,r}-\hat{\sigma}_{\ell,r+1}$ . We use $\max_\ell\{\hat{\sigma}_{\ell,r}\}$, $\min_{\ell}\{\hat{\Delta}_\ell\}$ in $T_{pow}$.
The constants $C$ in various expressions will typically need to be experimentally tuned for a given application. It should be noted that sufficiently large values of $T_\gm,T_{pow}$, e.g., setting both t0 10, works well for all algorithms without much change in final error. 
If the user does not specify $r$ we can  set $r$ using the well known 90 or 99\% energy threshold heuristic. We find $r_\ell$ as the smallest value of $r$ for which the sum of the top $r$ singular values is at least 90\% (or 99\% or similar) of the sum of all singular values of $\bPhi_\ell$. Instead of setting $\eps$ and $T_{pow}$, we can set a stopping criterion for the power method being implemented at each node: exit the loop if the estimates do not change much in subspace distance.

Consider LRCS.  For Byz-AltGDmin we set $\thresh$ as $\tilde{m}14\sqrt{r}\delta_{0}\max_\ell\{\hat{\sigma}_{\ell,1}\}$. The idea is to set the threshold $\thresh$ sufficiently large to ensure that non-Byzantine updates are not filtered out. For other parameters in Byz-AltGDmin $\tC=9\kappa^2\mu^2$, we set $\kappa=\max_{\ell}\{\kappa_\ell\}$, and take $\mu\geq 2$. We set $T=C\max_{\ell}\{\kappa_\ell^2\}\log(1/\eps)$, and $\eta=\frac{0.5}{\max_\ell\{\hat{\sigma}_{\ell,1}\}}$.

 \begin{table*}[t]
\begin{center}
\resizebox{0.9\linewidth}{!}{
	\subfloat[full rank $\Sigmastar$]{
\begin{tabular}{|c|c|c|}
			\hline
			\hline
			\textbf{Attacks}                      & \textbf{Methods}     & $T_{pow}=10$                          \\ \hline\hline
			\multirow{2}{*}{\textbf{Alternating}} & \textbf{SubsMed} & 0.375(0.348),0.680                      \\ \cline{2-3}
			& \textbf{ResPowMeth}   & 1.000(0.972),0.475                      \\ \hline\hline
			
			\multirow{2}{*}{\textbf{Ones}}        & \textbf{SubsMed} & 0.369(0.349),0.704                      \\ \cline{2-3}
			& \textbf{ResPowMeth}   & 0.999(0.990),0.513                      \\ \hline\hline
			\multirow{2}{*}{\textbf{Orthogonal}}  & \textbf{SubsMed} & 0.365(0.348),0.689                      \\ \cline{2-3}
			& \textbf{ResPowMeth}   & 0.999(0.366),0.500                      \\ \hline\hline
			\textbf{No Attack}                    & \textbf{Power(Baseline)}       & \multicolumn{1}{l|}{0.187(0.182),0.529} \\ \hline\hline
			
	\end{tabular}
}
\quad
	\subfloat[rank-$(r+1)$ $\Sigmastar$]{
\begin{tabular}{|c|c|c|c|}
			\hline\hline
			\textbf{Attacks}                      & \textbf{Methods}    & $T_{pow}=10$                           & $T_{pow}=1$     \\ \hline\hline
			\multirow{2}{*}{\textbf{Alternating}} & \textbf{SubsMed}    & 0.110(0.091),0.689                      & 0.999(0.614),0.326 \\ \cline{2-4}
			& \textbf{ResPowMeth} & 0.971(0.898),0.497                      & 1.000(0.991),0.049 \\ \hline\hline
			\multirow{2}{*}{\textbf{Ones}}        & \textbf{SubsMed}    & 0.111(0.091),0.669                      & 0.999(0.607),0.331 \\ \cline{2-4}
			& \textbf{ResPowMeth} & 0.992(0.952),0.477                      &  0.999(0.990),0.052 \\ \hline\hline
			\multirow{2}{*}{\textbf{Orthogonal}}  & \textbf{SubsMed}    & 0.106(0.091),0.672                      & 0.999(0.609),0.319\\ \cline{2-4}
			& \textbf{ResPowMeth} & 0.223(0.208),0.475                      & 0.999(0.993),0.048 \\ \hline\hline
			\textbf{No Attack}                    & \textbf{Power(Baseline)}      & \multicolumn{1}{l|}{0.063(0.050),0.505} & 0.999(0.605),0.050 \\ \hline\hline
	\end{tabular}
}
}
\caption{$n=1000$, $L=3$, $L_{byz}=1$, $r=60$, $q=1800$.
We report "$\max \SD_F(\text{mean} \SD_F), \text{mean time}$" in each column.
}
\label{exp1}
\end{center}
\end{table*}

\begin{table*}[t]
	\begin{center}
		\resizebox{0.9\linewidth}{!}{	\subfloat[$L=3, L_{byz}=1$, $r=2,q=360$]{\begin{tabular}{|c|c|c|l|}
						\hline	\hline
						\textbf{Attacks}                      & \textbf{Methods}    & $T_{pow}=10$                           &  $T_{pow}=1$     \\ \hline\hline
						\multirow{2}{*}{\textbf{Alternating}} & \textbf{SubsMed}    & 0.062(0.030)                      &  0.997(0.289) \\ \cline{2-4}
						& \textbf{ResPowMeth} & 0.177(0.084)                      &  0.999(0.424) \\ \hline\hline
						\multirow{2}{*}{\textbf{Ones}}        & \textbf{SubsMed}    & 0.080(0.030)                     &  0.989(0.236)  \\ \cline{2-4}
						& \textbf{ResPowMeth} & 0.196(0.087)                      &  0.972(0.311) \\ \hline\hline
						\multirow{2}{*}{\textbf{Orthogonal}}  & \textbf{SubsMed}    & 0.067(0.033)                      &  0.999(0.228) \\ \cline{2-4}
						& \textbf{ResPowMeth} & 0.125(0.066)                      &  0.999(0.375)  \\ \hline\hline
						\textbf{No Attack}                    & \textbf{Power(Baseline)}      & \multicolumn{1}{l|}{0.038(0.018)} &  0.968(0.211) \\ \hline\hline
			\end{tabular}}
			\quad
			\subfloat[$L=6, L_{byz}=2$, $r=2,q=720$]{
					\begin{tabular}{|c|c|c|l|}
						\hline	\hline
						\textbf{Attacks}                      & \textbf{Methods}    & $T_{pow}=10$                           &  $T_{pow}=1$     \\ 	\hline	\hline
						\multirow{2}{*}{\textbf{Alternating}} & \textbf{SubsMed}    & 0.045(0.020)                      &  0.986(0.227) \\ \cline{2-4}
						& \textbf{ResPowMeth} & 0.178(0.080)                      &  0.997(0.336) \\ 	\hline	\hline
						\multirow{2}{*}{\textbf{Ones}}        & \textbf{SubsMed}    & 0.048(0.020)                      &  0.999(0.275)\\ \cline{2-4}
						& \textbf{ResPowMeth} &  0.157(0.081)                      & 0.999(0.383) \\ 	\hline	\hline
						\multirow{2}{*}{\textbf{Orthogonal}}  & \textbf{SubsMed}    & 0.049(0.019)                      &  0.999(0.204)\\ \cline{2-4}
						& \textbf{ResPowMeth} & 0.102(0.057)                     &  0.999(0.339) \\	\hline	\hline
						\textbf{No Attack}                    & \textbf{Power}      & \multicolumn{1}{l|}{0.033(0.012)} &  0.975(0.203) \\ 	\hline	\hline
			\end{tabular}}}
		\end{center}
\qquad\qquad\qquad\qquad\qquad\qquad\qquad\qquad\qquad
			\subfloat[$L=6, L_{byz}=2$, $r=60,q=3600$]{	\begin{tabular}{|c|c|c|l|}
					\hline	\hline
					\textbf{Attacks}                      & \textbf{Methods}    & $T_{pow}=10$                           &  $T_{pow}=1$    \\ 	\hline	\hline
					\multirow{2}{*}{\textbf{Alternating}} & \textbf{SubsMed}    & 0.098(0.085)                      &  0.999(0.642) \\ \cline{2-4}
					& \textbf{ResPowMeth} & 0.992(0.853)                     &  1.000(0.988) \\ 	\hline	\hline
					\multirow{2}{*}{\textbf{Ones}}        & \textbf{SubsMed}    & 0.099(0.084)                      &  0.999(0.625) \\ \cline{2-4}
					& \textbf{ResPowMeth} &0.998(0.905)                      &  0.999(0.989)  \\ 	\hline	\hline
					\multirow{2}{*}{\textbf{Orthogonal}}  & \textbf{SubsMed}    & 0.103(0.084)                     &  0.999(0.610)\\ \cline{2-4}
					& \textbf{ResPowMeth} & 0.223(0.184)                     &  0.999(0.993) \\ 	\hline	\hline
					\textbf{No Attack}                    & \textbf{Power}      & \multicolumn{1}{l|}{ 0.043(0.036)} & 0.995(0.604) \\	\hline	\hline
		\end{tabular}}
\caption{Additional Experiments. We report "$\max \SD_F(\text{mean} \SD_F)$" in each column.
}\label{exp2}	
\end{table*}

\begin{table*}[t]
	\begin{center}
		\resizebox{0.8\linewidth}{!}{
			\subfloat[$L_{byz}=2$]{
				\begin{tabular}{|c|c|c|}
					\hline
					\textbf{Attacks}                      & \textbf{Methods}                         & $T_{pow}=10$                          \\ \hline\hline
					\multirow{3}{*}{\textbf{Alternating}} & \textbf{SubsMoM}                         & \multicolumn{1}{l|}{0.101(0.085)} \\ \cline{2-3}
					& \textbf{SubsMed}                         & 0.175(0.150)                      \\ \cline{2-3}
					& \textbf{ResPowMeth}                      & 0.522(0.463)                      \\ \hline\hline
					\multirow{3}{*}{\textbf{Ones}}        & \textbf{SubsMoM}                         & 0.098(0.085)                      \\ \cline{2-3}
					& \textbf{SubsMed}                         & 0.172(0.150)                      \\ \cline{2-3}
					& \multicolumn{1}{l|}{\textbf{ResPowMeth}} & \multicolumn{1}{l|}{0.542(0.502)} \\ \hline\hline
					\multirow{3}{*}{\textbf{Orthogonal}}  & \textbf{SubsMoM}                         & 0.104(0.086)                      \\ \cline{2-3}
					& \textbf{SubsMed}                         & \multicolumn{1}{l|}{0.172(0.151)} \\ \cline{2-3}
					& \textbf{ResPowMeth}                      & 0.223(0.191)                      \\ \hline\hline
					\textbf{No Attack}                    & \textbf{Power(Baseline)}                           & \multicolumn{1}{l|}{0.042(0.035)} \\ \hline
				\end{tabular}
			}
			\quad
			\subfloat[$L_{byz}=4$]{
				\begin{tabular}{|c|c|c|}
					\hline
					\textbf{Attacks}                      & \textbf{Methods}                         & $T_{pow}=10$                          \\ \hline\hline
					\multirow{3}{*}{\textbf{Alternating}} & \textbf{SubsMoM}                         & \multicolumn{1}{l|}{0.999(0.872)} \\ \cline{2-3}
					& \textbf{SubsMed}                         & 0.182(0.152)                      \\ \cline{2-3}
					& \textbf{ResPowMeth}                      & 0.987(0.894)                      \\ \hline\hline
					\multirow{3}{*}{\textbf{Ones}}        & \textbf{SubsMoM}                         & 1.000(1.000)                      \\ \cline{2-3}
					& \textbf{SubsMed}                         & 0.160(0.147)                     \\ \cline{2-3}
					& \multicolumn{1}{l|}{\textbf{ResPowMeth}} & \multicolumn{1}{l|}{0.999(0.948)} \\ \hline\hline
					\multirow{3}{*}{\textbf{Orthogonal}}  & \textbf{SubsMoM}                         & 1.000(1.000)                     \\ \cline{2-3}
					& \textbf{SubsMed}                         & \multicolumn{1}{l|}{0.183(0.151)} \\ \cline{2-3}
					& \textbf{ResPowMeth}                      & 0.216(0.179)                      \\ \hline\hline
					\textbf{No Attack}                    & \textbf{Power(Baseline)}                           & \multicolumn{1}{l|}{0.040(0.036)} \\ \hline
				\end{tabular}
			}
		}
		\caption{$L=18$, rank-$(r+1)$ $\Sigmastar$, $r=60$, $q=3600$, $T_\gm=10$, $\tL=6$ for SubsMoM.
We report "$\max \SD_F(\text{mean} \SD_F)$" in each column.
}\label{mom1}
	\end{center}
\end{table*}

\begin{figure*}[t]
	\begin{center}
		\subfloat[Initialization errors]{
\begin{tabular}{|c|c|c|}
				\hline
				\textbf{Method}  & $L_{byz}=1$  & $L_{byz}=2$  \\ \hline
				\textbf{SubsMed} & 0.716(0.665) & 0.717(0.667) \\ \hline
				\textbf{SubsMoM} & 0.477(0.457) & 0.475(0.459) \\ \hline
		\end{tabular}}
	\begin{subfigure}[Byz-AltGDmin]{0.45\textwidth}
				\includegraphics[width=4in]{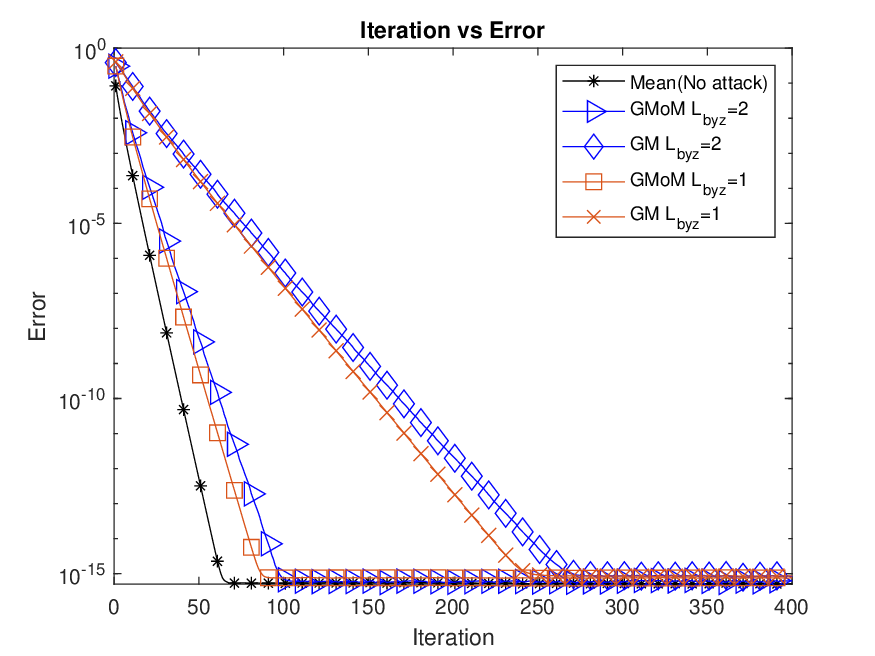}
		\end{subfigure}
	\end{center}
	\caption{Byz-AltGDmin (Median) vs Byz-AltGDmin (MoM) for $L_{byz}=1,2$; $L=18$}\label{byzaltgd}
\end{figure*}

\section{Conclusions, Extensions and Open Questions} \label{discuss}
Our work introduced a novel and well-motivated solution to Byzantine-resilient federated subspace estimation, and PCA, that is both communication-efficient and sample-efficient. We refered to this as ``Subspace-Median''. Its guarantee is provided in Theorem \ref{main_res} and Corollary \ref{th:ro_pca_A}.
We showed how the Subspace Median can be used to provably solve two practically useful problems: (i) Byzantine resilient federated PCA, and (ii) the initialization step of Byzantine-resilient horizontal federated LRCS.
We also developed Subspace Median-of-Means (MoM) extensions for both problems. These help improve the sample complexity at the cost of reduced Byzantine/outlier tolerance. For all these algorithms, Theorem \ref{main_res} helps prove sample, communication, and time complexity bounds for $\eps$-accurate subspace recovery. Extensive simulation experiments corroborate our theoretical results.
Our second important contribution is a provable communication-efficient and sample-efficient alternating GD and minimization (altGDmin) based solution to horizontally federated LRCS, obtained by using the Subspace Median to initialize the alternating GD and minimization (altGDmin) algorithm for solving it. 
Our proposed algorithms and proof techniques are likely to be of independent interest for many other problems. We describe some extensions next.

\subsection{Extensions}\label{extend}

One component that is missing in most existing work on Byzantine resilient federated GD, and stochastic GD, based solutions is how to initialize the GD algorithm in such a way that the problem becomes restricted strongly convex in the vicinity of the desired/true solution. 
Most existing works either assume  strongly convex cost functions or prove convergence to a local minimizer of a cost function. However, good initialization of the GD algorithm is a critical component for correctly solving a large number of practical problems. The spectral initialization approach has been extensively used for developing provably correct centralized iterative solutions to many non-convex optimization problems in signal processing and ML.  It involves computing the top, or top few, singular vectors of an appropriately defined matrix. This can be made Byzantine resilient and communication-efficient in a federated setting, by using the Subspace Median and Subspace MoM algorithms introduced in this work.
 Examples include LRCS, LR matrix completion, robust PCA using the LR plus sparse model, phase retrieval (PR), sparse PR, and low rank PR.

The overall approach developed here for modifying the altGDmin algorithm can also be widely used in other settings. In particular, vertically federated LRCS can be analyzed using easy extensions of our current work. It would require assuming that the Frobenius norm of the difference between column-sub matrices of $\Xstar$ that are sensed at the different nodes is bounded. This assumption is needed to ensure bounded heterogeneity of the different nodes' partial gradient estimates; a common assumption used in all past work on federated ML with heterogeneous nodes. Similar ideas can also extend for LR matrix completion, which also involves dealing with heterogeneous gradients. 
Vertically federated LRCS is the model that is relevant for federated sketching, and for multi-task representation learning when data for different tasks  is obtained at different nodes. 

Our Byzantine resilient PCA result can be generalized to extend it to various other PCA-based problems. Some examples are described in Remark \ref{gens} (PCA for non-i.i.d. data,  PCA for approximately LR datasets,
 PCA with missing data). Other examples include online PCA, subspace tracking, robust subspace tracking, differentially private PCA \cite{npm_hardt}.

\subsection{Open Questions}
In the current work, we treated the geometric median computation as a black box. Both for its accuracy and its time complexity, we relied on results from existing work. However, notice that in the Subspace Median algorithm (which is used in all other algorithms in this work), we need a ``median" of $r$-dimensional subspaces in $\Re^n$. These are represented by their $n \times r$ basis matrices. To find this though, we are computing the geometric median (GM) of vectorized versions of the subspace projection matrices $\P_\U:= \U \U^\top$ which are of size $n \times n$. Eventually we need to find the subspace whose projection matrix is closest to the GM. An open question is can we develop a more efficient algorithm to do this computation that avoids having to compute the GM of $n^2$ length vectors. We will explore the use of power method type ideas for modifying the GM computation algorithm in order to do this.
Alternatively, can we define a different notion of ``median'' for subspaces that can be computed more efficiently than Subspace Median.
Another related question is whether the computation can be federated to utilize the parallel computation power of the various nodes. In its current form, the entire GM computation is being done at the center.
A third open question is whether we can improve the guarantees for the Subspace Median of Means algorithms by using more sophisticated proof techniques, such as those used in \cite{yin2018byzantine}.

\appendices  \renewcommand\thetheorem{\Alph{section}.\arabic{theorem}}

\section{Proof of LRCS Initialization, Corollary \ref{u_init_gmom}}\label{init_lrccs_proof}

We prove this result for the $\tL=L$ setting below. The extension for the $\tL < L$ setting is straightforward. We explain this in Appendix \ref{gmom_extension_idea} below.

\subsection{Lemmas for proving Corollary \ref{u_init_gmom} for $\tL = L$} \label{proof_u_init}

We first state the lemmas from \cite{lrpr_gdmin} that are used in the proof and then provide the proof.
\begin{lemma}\label{lem:threshold}
	Define the set
	\[\mathcal{E}:=\left\{\tC(1-\epsilon_1)\frac{\norm{\Xstar}_F^2}{q}\leq \alpha_\ell\leq\tC(1+\epsilon_1)\frac{\norm{\Xstar}_F^2}{q}\right\}\]
	then $\Pr(\alpha\in\mathcal{E})\geq 1-L\exp(-\tilde{c}\m q\epsilon_1^2)$ where $\alpha = \text{Median}\{\alpha_\ell\}_{\ell=1}^{L}$
\end{lemma}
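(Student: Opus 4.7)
The plan is to show three things in sequence: (a) $\E[\alpha_\ell]$ at any good node equals the midpoint of $\mathcal{E}$; (b) each good $\alpha_\ell$ concentrates around that midpoint with the claimed exponential tail; and (c) a majority-of-good-values argument transfers membership in $\mathcal{E}$ from the individual good $\alpha_\ell$'s to the scalar median $\alpha$, regardless of what the Byzantine nodes report.

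For (a), at any good node $(\y_k)_\ell = (\A_k)_\ell\xstar_k$ with $(\A_k)_\ell$ having i.i.d.\ $\mathcal{N}(0,1)$ entries, so each scalar $(y_\ell)_{ki}\sim\mathcal{N}(0,\|\xstar_k\|^2)$ and $\E[(y_\ell)_{ki}^2]=\|\xstar_k\|^2$. Summing over $k\in[q]$ and $i\in\cS_\ell$ with $|\cS_\ell|=\m$ gives $\E[\alpha_\ell]=\tC\|\Xstar\|_F^2/q$, which is exactly the center of $\mathcal{E}$.

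For (b), I would apply the sub-exponential Bernstein inequality to the sum $S_\ell := \sum_{k,\,i\in\cS_\ell}(y_\ell)_{ki}^2$, whose $\m q$ independent summands are of the form $\|\xstar_k\|^2 Z_{ki}$ with $Z_{ki}\sim\chi^2_1$. The variance proxy is $\m\sum_k\|\xstar_k\|^4$ and the sub-exponential scale is $\max_k\|\xstar_k\|^2$. Invoking Assumption \ref{right_incoh} gives $\|\xstar_k\|^2=\|\bstar_k\|^2\le \mu^2 r\sigmax^2/q$, and combining with $\|\Xstar\|_F^2\ge r\sigmarstar{}^2=r\sigmax^2/\kappa^2$, both of Bernstein's exponents, evaluated at the deviation $t=\eps_1\m\|\Xstar\|_F^2$, are bounded below by a constant multiple of $\eps_1^2\m q/(\mu^2\kappa^2)$ (for $\eps_1\le 1$ the quadratic term dominates). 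Absorbing the $1/(\mu^2\kappa^2)$ factor into $\tilde c$ yields
\[
\Pr\bigl(|\alpha_\ell-\E[\alpha_\ell]|>\eps_1\,\E[\alpha_\ell]\bigr)\;\le\;2\exp\bigl(-\tilde c\,\m q\,\eps_1^2\bigr).
\]

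For (c), a union bound over $\ell\in\Jgood$ (at most $L$ nodes) gives that, with probability at least $1-2L\exp(-\tilde c\,\m q\eps_1^2)$, every good $\alpha_\ell$ lies in $\mathcal{E}$. Since Assumption \ref{byzassu} (with $\tau\le 0.4$) yields $|\Jgood|>(1-\tau)L>L/2$, strictly more than half of the $L$ scalars $\alpha_\ell$ lie inside the interval $\mathcal{E}$, which forces the scalar median $\alpha$ itself to lie in $\mathcal{E}$ regardless of the adversarial values reported by the Byzantine nodes. The only mildly nontrivial step is the Bernstein bookkeeping in (b); the median argument in (c) is immediate from the elementary fact that a scalar median of $L$ values, strictly more than half of which lie in a given interval, must itself lie in that interval.
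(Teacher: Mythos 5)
Your proposal is correct and follows essentially the same structure as the paper's own argument: concentration of each good $\alpha_\ell$, a majority-of-good-values argument to transfer membership in $\mathcal{E}$ to the scalar median, and a union bound. The only substantive difference is that the paper cites Fact 3.7 of \cite{lrpr_gdmin} for the per-node concentration, whereas you re-derive it from first principles via Bernstein's inequality for sub-exponential sums, using the incoherence bound $\|\xstar_k\|^2 \le \mu^2 r\sigmax^2/q$ and $\|\Xstar\|_F^2 \ge r\sigmin^2$ to get both Bernstein exponents down to order $\eps_1^2\m q/(\mu^2\kappa^2)$. That derivation is sound, and it makes the lemma self-contained, which is a mild improvement over the paper's one-line citation. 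One caveat worth flagging: you ``absorb the $1/(\mu^2\kappa^2)$ factor into $\tilde c$,'' but $\mu$ and $\kappa$ are model parameters rather than numerical constants, so $\tilde c$ in the lemma statement must be understood as depending on them; this is consistent with how the paper subsequently uses the lemma (the $\kappa^4$ dependence resurfaces in $p_\alpha$ inside Corollary \ref{u_init_gmom}), so it is a notational sloppiness shared with the paper rather than an error on your part. Your fraction-of-good-nodes accounting ($|\Jgood| > (1-\tau)L > L/2$ for $\tau < 0.4$) is also cleaner than the paper's, which asserts ``more than 75\%'' even though Assumption \ref{byzassu} only guarantees 60\%.
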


\begin{proof}
	Threshold computation: From \cite{lrpr_gdmin} Fact 3.7 for all $\ell\in\Jgood$
	\[\Pr\{\alpha_\ell\in\mathcal{E}\}\geq 1-\exp(-\tc\m q\eps_1^2)\] Since more than 75\% of $\alpha_\ell$'s are \textit{good} and the median is same as the 50th percentile for a set of scalars. This then implies $\text{Median}\{\alpha_\ell\}_{\ell=1}^{L}$ will be upper and lower bounded by \textit{good} $\alpha_\ell$'s. Taking union bound over  \textit{good} $\alpha_\ell$'s w.p. at least $1-L\exp(-\tc\m q\eps_1^2)=1-p_{\alpha}$
	\[\Pr\{\alpha\in\mathcal{E}\}\geq 1-L\exp(-\tc\m q\eps_1^2)\]
\end{proof}

\begin{lemma}[\cite{lrpr_gdmin}]\label{lm:init}
	Define $(\X_0)_\ell := \sum_{k} (\A_k)_\ell{}^\top ((\y_k)_\ell)_\trnc \e_k^\top, \ \y_{k,trnc}:=(\y_k \circ \bm{1}_{|\y_k| \le  \sqrt{\alpha}})$. Conditioned on $\alpha$, we have the following conclusions.
	\begin{enumerate}
		\item  Let $\zeta$ be a scalar standard Gaussian r.v.. Define,
		\[\beta_k(\alpha):=\E[\zeta^2\indic_{\{\norm{\xstar_k}^2\zeta^2\leq\alpha\}}]\]
		Then,
		\[\E[(\X_0)_\ell|\alpha] = \Xstar\D(\alpha)\]
		where $\D(\alpha)=diagonal(\beta_k(\alpha),k\in[q])$, i.e., $\D(\alpha)$ is a diagonal matrix of size $q\times q$ with diagonal entries $\beta_k$ defined above.
		\item Fix $0<\epsilon_1<1$. Then w.p. at least $1-\exp[(n+q)-c\eps_1^2\m q/ \mu^2\kappa^2]$
		\[\norm{(\X_0)_\ell  - \E[(\X_0)_\ell|\alpha]}\leq 1.1 \epsilon_1\norm{\Xstar}_F\]
		\item For any $\epsilon_1\leq 0.1$, $\min_k\E\left[\zeta^2\indic_{\left\{|\zeta|\leq\tC\frac{\sqrt{1-\eps_1}\norm{\Xstar}_F}{\sqrt{q}\norm{\xstar_k}}\right\}}\right]\geq 0.92$
	\end{enumerate}	
\end{lemma}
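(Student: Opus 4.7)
My plan is to address the three claims of Lemma~\ref{lm:init} separately, exploiting the Gaussian structure of the rows of $\A_k$ together with the symmetric truncation at level $\sqrt\alpha$. For Part~1 (conditional mean), I would work one column at a time. The $k$-th column of $(\X_0)_\ell$ is $\sum_{i\in\cS_\ell}\a_{k,i}\,(y_{ki})_{\trnc}$, where $\a_{k,i}$ is the $i$-th row of $\A_k$ and $y_{ki}=\a_{k,i}^\top\xstar_k$. Decomposing $\a_{k,i}=\zeta_{ki}\,\xstar_k/\|\xstar_k\|+\bm w_{ki}$, with $\bm w_{ki}$ perpendicular to $\xstar_k$ and independent of $\zeta_{ki}$, makes $y_{ki}=\zeta_{ki}\|\xstar_k\|$ and the truncation event $\{|y_{ki}|\le\sqrt\alpha\}$ a function of $\zeta_{ki}$ alone; since $\E[\bm w_{ki}\cdot f(\zeta_{ki})\mid\alpha]=\bm 0$, the conditional mean of the column reduces to $\xstar_k\,\E[\zeta^2\indic_{\zeta^2\|\xstar_k\|^2\le\alpha}\mid\alpha]=\xstar_k\beta_k(\alpha)$. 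Stacking columns gives $\Xstar\D(\alpha)$ (up to the $|\cS_\ell|$ scaling absorbed in the convention of~\cite{lrpr_gdmin}).

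For Part~2 (operator-norm concentration), the plan is a standard $\epsilon$-net plus Bernstein argument. Fix unit $\bm u\in\Re^n$ and $\bm v\in\Re^q$ and expand
\[
\bm u^\top\bigl((\X_0)_\ell-\E[(\X_0)_\ell|\alpha]\bigr)\bm v=\sum_{k=1}^q\sum_{i\in\cS_\ell}\Bigl[(\bm u^\top\a_{k,i})\,(y_{ki})_{\trnc}\,v_k-\E[\,\cdot\,\mid\alpha]\Bigr].
\]
Each summand is the product of a Gaussian $(\bm u^\top\a_{k,i})$ with a quantity bounded by $\sqrt\alpha\,|v_k|$, hence sub-exponential with Orlicz norm $\lesssim\sqrt\alpha\,|v_k|$. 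Summing the variances over $k,i$ (using $\sum_k v_k^2=1$ and the explicit value of $\alpha$) yields total variance of order $\m\,\mu^2\kappa^2\|\Xstar\|_F^2$. Bernstein's inequality then gives the pointwise bound, and an $\epsilon$-net of cardinality $\exp(C(n+q))$ promotes it to the uniform spectral-norm estimate, with failure probability $\exp[(n+q)-c\epsilon_1^2\m q/\mu^2\kappa^2]$.

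For Part~3 (lower bound on $\beta_k$), I would plug $\tC=9\kappa^2\mu^2$ and the right-singular incoherence $\|\xstar_k\|\le\mu\sqrt{r/q}\,\sigmax$ into the lower end of the good range $\alpha\ge\tC(1-\epsilon_1)\|\Xstar\|_F^2/q$ to obtain $\sqrt{\alpha}/\|\xstar_k\|\ge 3\sqrt{1-\epsilon_1}$. A direct $\chi^2_1$-tail computation then gives $\beta_k(\alpha)\ge 1-\E[\zeta^2\indic_{|\zeta|>3\sqrt{1-\epsilon_1}}]\ge 0.92$ for $\epsilon_1\le 0.1$. The main obstacle is Part~2: carefully tracking how $\sqrt\alpha$, $\mu$, and $\kappa$ propagate through the sub-exponential Orlicz norms and verifying that the variance sum collapses cleanly enough for the Bernstein exponent $c\epsilon_1^2\m q/(\mu^2\kappa^2)$ to dominate the $(n+q)$ cost of the net. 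Parts~1 and~3 reduce to essentially deterministic Gaussian calculations once Part~2 is in hand.
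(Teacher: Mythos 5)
This lemma is cited from \cite{lrpr_gdmin} and the present paper does not reprove it; your sketch faithfully reconstructs the argument from that source --- the orthogonal Gaussian decomposition $\a_{k,i}=\zeta_{ki}\xstar_k/\|\xstar_k\|+\bm{w}_{ki}$ for the conditional mean (Part~1), a double $\eps$-net over $S^{n-1}\times S^{q-1}$ with a sub-exponential Bernstein bound for the operator-norm deviation giving precisely the $\exp[(n+q)-c\eps_1^2\m q/(\mu^2\kappa^2)]$ failure probability (Part~2), and the combination of $\tC=9\kappa^2\mu^2$, $\|\Xstar\|_F\geq\sqrt{r}\,\sigmin$, and Assumption~\ref{right_incoh} forcing $\sqrt\alpha/\|\xstar_k\|\geq 3\sqrt{1-\eps_1}$ and hence $\beta_k(\alpha)\geq 0.92$ (Part~3). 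Your flag about the $|\cS_\ell|$ normalization is also correct: the definition in~\eqref{Xhat0_ell} omits the $1/\m$ factor under which $\E[(\X_0)_\ell|\alpha]=\Xstar\D(\alpha)$ holds literally, an inconsistency the paper silently absorbs later by rescaling $\bPhi_\ell$.
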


\begin{fact}\label{Dbound}
	For any $t > 0$, $\E[\zeta^2\indic_{\{\zeta^2\leq t\}}]\leq 1$, this then implies $\norm{\D(\alpha)}\leq 1$
\end{fact}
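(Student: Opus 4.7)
The proof decomposes cleanly into two independent steps: establishing the scalar inequality $\E[\zeta^2 \indic_{\{\zeta^2 \leq t\}}] \leq 1$, and then passing from it to the operator-norm bound on the diagonal matrix $\D(\alpha)$. My plan is to dispatch the scalar bound by pointwise domination and then read off the matrix bound from the diagonal structure.

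For the first step, I would observe that the indicator is pointwise bounded by $1$, so $\zeta^2 \indic_{\{\zeta^2 \leq t\}} \leq \zeta^2$ almost surely. Taking expectations and invoking $\E[\zeta^2] = \operatorname{Var}(\zeta) + (\E\zeta)^2 = 1$ for a standard Gaussian $\zeta$ yields the inequality. Monotone convergence (or dominated convergence with $\zeta^2$ as the envelope) justifies the expectation step without any subtlety, since $\zeta^2 \in L^1$.

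For the second step, recall from Lemma \ref{lm:init} that $\D(\alpha)$ is the $q \times q$ diagonal matrix with entries $\beta_k(\alpha) := \E[\zeta^2 \indic_{\{\|\xstar_k\|^2 \zeta^2 \leq \alpha\}}]$. For each $k$, setting $t = \alpha / \|\xstar_k\|^2$ (with the convention that the event is the whole sample space if $\xstar_k = 0$), the event $\{\|\xstar_k\|^2 \zeta^2 \leq \alpha\}$ coincides with $\{\zeta^2 \leq t\}$, so the first-step bound gives $0 \leq \beta_k(\alpha) \leq 1$, where non-negativity is immediate because the integrand is non-negative. The singular values of a diagonal matrix are the absolute values of its diagonal entries, hence $\|\D(\alpha)\| = \max_k |\beta_k(\alpha)| = \max_k \beta_k(\alpha) \leq 1$.

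There is essentially no obstacle here; the entire argument is one line of pointwise domination plus the explicit formula for the spectral norm of a diagonal matrix. The only item worth flagging is the degenerate case $\xstar_k = 0$, in which $\beta_k(\alpha)$ reduces to $\E[\zeta^2] = 1$ and the bound still holds with equality. No ingredients beyond the Gaussian second moment and standard properties of diagonal matrices are required.
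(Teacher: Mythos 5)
Your proof is correct and is the natural argument; the paper states this as an unproved ``Fact,'' and your pointwise-domination plus diagonal-spectral-norm chain is exactly what the authors intend the reader to supply. One small point of imprecision worth noting: you invoke monotone/dominated convergence to ``justify the expectation step,'' but no limit theorem is required there --- you are not passing a limit through an integral, you are just using monotonicity of expectation ($X \le Y$ a.s. with $Y$ integrable implies $\E[X] \le \E[Y]$), which is elementary. Everything else, including the treatment of the $\xstar_k = 0$ edge case and the identification of the operator norm of a diagonal matrix with the maximum absolute diagonal entry, is handled correctly.
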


\subsection{Proof of Corollary \ref{u_init_gmom} for $\tL=L$} \label{proof_u_init_1}
	We will apply Theorem \ref{th:ro_pca} with $\bPhi_\ell \equiv (\X_0)_\ell(\X_0)_\ell^\top$, $\bPhi^* \equiv \E[(\X_0)_\ell|\alpha]\E[(\X_0)_\ell|\alpha]^\top = \Xstar\D(\alpha)^2\Xstar{}^\top$, $\eps=\delta_0$ and $\Delta= \sigma_{r}(\Xstar\D(\alpha)^2\Xstar{}^\top)-\sigma_{r+1}(\Xstar\D(\alpha)^2\Xstar{}^\top)$. For this we need to bound $\norm{(\X_0)_\ell(\X_0)_\ell^\top- \E[(\X_0)_\ell|\alpha]\E[(\X_0)_\ell|\alpha]^\top}$. We can write
	\begin{align}
		&(\X_0)_\ell(\X_0)_\ell^\top -\E[(\X_0)_\ell|\alpha]\E[(\X_0)_\ell|\alpha]^\top \notag\\
		&= (\X_0)_\ell((\X_0)_\ell - \E[(\X_0)_\ell|\alpha])^\top \notag \\
		&+ ((\X_0)_\ell - \E[(\X_0)_\ell|\alpha])\E[(\X_0)_\ell|\alpha]^\top\label{eq:quad}
	\end{align}
	To bound \eqref{eq:quad} we need the bounds on $(\X_0)_\ell$, $\E[(\X_0)_\ell|\alpha]$, and $(\X_0)_\ell - \E[(\X_0)_\ell|\alpha]$
	\begin{enumerate}
		\item From Lemma \ref{lm:init} part 2, letting $\eps_1 = \eps_3/\sqrt{r}$,
w.p. $1-\exp[(n+q)-c\eps_1^2\m q/ r \mu^2\kappa^2]$
		\begin{equation}\label{eq:init_1}
			\norm{(\X_0)_\ell  - \E[(\X_0)_\ell|\alpha]}\leq 1.1 \eps_3 \sigmax
		\end{equation}
		\item From Lemma \ref{lm:init} part 1, Fact \ref{Dbound}
		\begin{equation}\label{eq:init_2}
			\norm{\E[(\X_0)_\ell|\alpha]}\leq \sigmax
		\end{equation}
		\item Thus, for $\eps_3 < 0.1$,  w.p. $1-\exp[(n+q)-c\eps_3^2\m q/ r \mu^2\kappa^2]$
		\begin{align}
			\norm{(\X_0)_\ell}&=\norm{(\X_0)_\ell - \E[(\X_0)_\ell|\alpha]} +\norm{\E[(\X_0)_\ell|\alpha]}\notag\\
			&\leq 1.1 (1+\eps_3 )\sigmax < 1.3 \sigmax  \label{eq:init_3}
		\end{align}.
	\end{enumerate}
	
	From \eqref{eq:quad}, \eqref{eq:init_1},\eqref{eq:init_2} and \eqref{eq:init_3}, and $\E[(\X_0)_\ell|\alpha] = \Xstar\D(\alpha)$, we have w.p. at least $1-2\exp[(n+q)-c\eps_3^2\m q/ r \mu^2\kappa^2]$	
	\begin{align}
		&\norm{(\X_0)_\ell(\X_0)_\ell^\top - \Xstar\D(\alpha)^2\Xstar{}^\top} \leq  \notag \\
		& \norm{(\X_0)_\ell}\norm{(\X_0)_\ell - \Xstar\D(\alpha)} + \norm{(\X_0)_\ell - \Xstar\D(\alpha)}\norm{\D(\alpha)\Xstar} \notag \\
		& \leq 1.1 . 1.1 \eps_3 \sigmax^2 + 1.1 \eps_3 \sigmax^2 < 2.5 \eps_3 \sigmax^2
\label{bound_Phi_ell}
	\end{align}

To apply Theorem \ref{th:ro_pca} to get $\SD_F(\U_{out},\Ustar) < \delta_0$, we need $\norm{(\X_0)_\ell(\X_0)_\ell^\top-\Xstar\D(\alpha)^2\Xstar{}^\top} < \frac{\delta_0}{26\sqrt{r}}  \Delta$.
By Lemma \ref{lm:init} part 3, Fact \ref{Dbound}, and the fact that $\Xstar$ is rank $r$ we get a lower bound on $\Delta$
	\begin{equation}\label{eq:init_4}
		\Delta\geq 0.92^2\sigmin^2 -0 > 0.8 \sigmin^2
	\end{equation}
Using \eqref{eq:init_4} and \eqref{bound_Phi_ell}, the required condition for Theorem \ref{th:ro_pca} holds if
\[
2.5 \eps_3 \sigmax^2 \le 0.8 \frac{\delta_0}{26\sqrt{r}}  \sigmin^2
\]
This will hold if we set $\eps_3 = \frac{c}{\sqrt{r} \kappa^2} \delta_0 $.
With this choice of $\eps_3$, the bounds hold  w.p. at least $1-2\exp[(n+q)-c\delta_0^2\m q/ r^2 \mu^2\kappa^6]$

Thus, by Theorem \ref{th:ro_pca},	
	\[
	\Pr\{SD_F(\U_{out},\Ustar)\leq \delta_0|\alpha\}  \geq 1-c_0 - \exp(-L\psi(0.4-\tau,p + n^{-10}))
	\]
where $p = 2\exp[(n+q)-c \delta_0^2 \m q/ r^2 \mu^2\kappa^6]$.
	
	Following the same argument as given in proof of  \cite[Theorem 3.1]{lrpr_gdmin} and using Lemma \ref{lem:threshold} to remove the conditioning on $\alpha$, we get
	\begin{align*}
& \Pr\{SD_F(\U_{out},\Ustar)\leq \delta_0\} \\
& \geq 1-c_0 - \exp(-L\psi(0.4-\tau,p+ n^{-10})) -p_{\alpha}
\end{align*}
	where $p_{\alpha}=L\exp(-\tilde{c}\m q\delta_0^2/r^2\kappa^4)$.

If $\m q\geq C \kappa^6\mu^2(n+q)r^2/\delta_0^2$, then $p < e^{-c(n+q)}$, and $p_{\alpha} < e^{-c(n+q)}$.

Thus, the good event holds w.p. at least $1-c_0 - \exp(-L\psi(0.4-\tau, e^{-c(n+q)} + n^{-10} ) ) - e^{-c(n+q)}$.

\subsection{Proof of Corollary \ref{u_init_gmom} for a $\tL < L$}\label{gmom_extension_idea}
In this case, we apply Theorem \ref{th:ro_pca} on $\bPhi_{(\vartheta)} = \sum_{\ell=1}^\rho (\Xhat_0)_{(\vartheta,\ell)} (\Xhat_0)_{(\vartheta,\ell)}^\top / (\m \rho)^2$ and $\bPhi^* = \E[(\Xhat_0)_\ell |\alpha] \E[(\Xhat_0)_\ell |\alpha]^\top /\m^2$ for $\vartheta \in [\tL]$. To obtain the bounds needed to apply Theorem \ref{th:ro_pca}, we use the bounds from above.

\section{Proofs for LRCS AltGDmin iterations} \label{proof_gdmin_byz}

We prove this for the simple GM setting because that is notation-wise simpler. This is the $\tL=L$ setting. The extension to GMoM is straightforward once again.

{\em All expected values used below are expectations conditioned on past estimates (which are functions of past measurement matrices and measurements, $\A_k,\y_k$). For example, $\E[(\nabla_U f)_{\ell}]$ conditions on the values of $\U, \B_\ell$ used to compute it. This is also the reason why $\E[(\nabla_U f)_{\ell}]$ is different for different nodes; see Lemma \ref{new_term_lem}.
}

\subsection{Lemmas for proving Theorem \ref{gdmin_thm} for $\tL=L$: LS step bounds}

The next lemma bounds the 2-norm error between $(\b_k)_\ell$ and an appropriately rotated version of $\bstar_k$, $\g_k:= \U^\top \xstar_k = (\U^\top \Ustar) \bstar_k$; followed by also proving various important implications of this bound.
Here and below $\U$ denotes the subspace estimate at iteration $t$.

\begin{lemma}[Lemma 3.3 of \cite{lrpr_gdmin}]
	Assume that $\SD_F(\U,\Ustar) \le \delta_t $. 
Consider any  $\ell \in \Jgood$.
Let $\g_k:= \U^\top \xstar_k = (\U^\top \Ustar) \bstar_k$.

If $\delta_t \le 0.02 / \kappa^2$,
	and	if $\m \gtrsim \max(\log q, \log n, r)$, for $\epsilon_1<0.1$ then,
\\
w.p. at least, $1-\exp(\log q +r -c\epsilon_1^2\m)$
	\ben
	\item $\|(\b_k)_\ell - \g_k\| \le 1.7\epsilon_1 \delta_t \|\bstar_k\|$	
	\item $\|(\b_k)_\ell\|  \le 1.1 \|(\bstar_\ell)_k\|$	
	\item $\|\B_\ell - \G\|_F  \le 1.7\epsilon_1  \delta_t \sigmax $	
	\item $\|(\x_\ell)_k - \xstar_k\| \le 1.4  \delta_t \|\bstar_k\|$	
	\item $\|\X_\ell - \Xstar\|_F \le 1.4 \delta_t \sigmax$	
	\item $\sigma_{r}(\B_\ell) \ge  0.9 \sigmin$	
	\item $\sigma_{\max}(\B_\ell) \le  1.1 \sigmax$
	\een
	(only the last two bounds require the upper bound on $\delta_t$).
	\label{Blemma_new}	
\end{lemma}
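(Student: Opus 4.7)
The plan is to reduce all seven bounds to the single perturbation bound in Item 1, and then obtain Item 1 via the standard ``good conditioning of $\A_k\U$ plus orthogonal decomposition'' argument that is by now fairly standard for Gaussian column-wise sensing.

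First I would rewrite the least squares output in a convenient form. Fix $\ell \in \Jgood$, abbreviate $A := (\A_k)_\ell$ (an $\m \times n$ matrix with i.i.d.\ standard Gaussian entries) and recall $(\b_k)_\ell = (A\U)^\dag A \xstar_k$. Decompose $\xstar_k = \Ustar \bstar_k = \U \g_k + (\I - \U\U^\top)\xstar_k$. Substituting and using $(A\U)^\dag (A\U) = \I_r$ gives the key identity
\begin{equation*}
(\b_k)_\ell - \g_k \;=\; \bigl((A\U)^\top(A\U)\bigr)^{-1}(A\U)^\top A (\I - \U\U^\top)\xstar_k.
\end{equation*}
Because $\U$ has orthonormal columns, rotational invariance of Gaussians implies that $A\U$ is an $\m \times r$ matrix with i.i.d.\ standard Gaussian entries. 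Moreover, since $\U$ is orthogonal to $(\I - \U\U^\top)\xstar_k$, the matrix $A\U$ and the vector $A(\I - \U\U^\top)\xstar_k$ are independent. This independence is what makes the bound sharp.

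Next I would bound the two factors. For the first factor, a standard singular value concentration bound for tall Gaussian matrices (e.g.\ \cite[Thm.\ 4.6.1]{versh_book}) gives $\sigma_{\min}((A\U)^\top(A\U)) \ge 0.9\m$ w.p.\ at least $1-2\exp(-c\eps_1^2 \m)$, once $\m \gtrsim r/\eps_1^2$. For the second factor, condition on $A\U$; then $A(\I-\U\U^\top)\xstar_k \sim \mathcal{N}(0,\sigma^2 \I_\m)$ with $\sigma := \|(\I-\U\U^\top)\xstar_k\|$, so $(A\U)^\top A(\I-\U\U^\top)\xstar_k$ is a Gaussian vector with covariance $\sigma^2 (A\U)^\top(A\U)$. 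Its $\ell_2$-norm is a (weighted) chi-squared variable with mean $\sigma^2\mathrm{tr}((A\U)^\top(A\U)) \le 1.1 \sigma^2 \m r$ and the usual chi-squared tail yields $\|(A\U)^\top A(\I-\U\U^\top)\xstar_k\| \le 1.1\sigma\sqrt{\m r}(1+\eps_1)$. Finally $\sigma \le \SD_F(\U,\Ustar)\|\bstar_k\| \le \delta_t\|\bstar_k\|$. Putting the pieces together and rescaling $\eps_1$ yields Item 1 at a fixed $k$; union-bounding over $k\in[q]$ adds the $\log q$ term.

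The remaining items are purely algebraic consequences of Item 1 (and of the incoherence bound $\|\g_k\| \le \|\bstar_k\|$). Item 2 follows from the triangle inequality and $1.7 \eps_1 \delta_t < 0.1$ using $\eps_1 < 0.1$ and $\delta_t \le 0.02/\kappa^2$. Item 3 is obtained by squaring Item 1, summing over $k$, and using $\sum_k \|\bstar_k\|^2 = \|\Bstar\|_F^2 \le r\sigmax^2$; in fact $\sum_k\|\bstar_k\|^2 \le \sigmax^2$ after noting $\|\Bstar\|_F = \|\X^*\|_F$, giving the clean $1.7\eps_1\delta_t \sigmax$ bound. For Item 4, use the Pythagorean identity
\begin{equation*}
\|(\x_k)_\ell - \xstar_k\|^2 = \|\U((\b_k)_\ell - \g_k)\|^2 + \|(\I-\U\U^\top)\xstar_k\|^2 \le (1.7\eps_1\delta_t)^2\|\bstar_k\|^2 + \delta_t^2\|\bstar_k\|^2,
\end{equation*}
yielding $1.4\delta_t\|\bstar_k\|$ for $\eps_1 \le 0.1$; Item 5 is the Frobenius sum. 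For Items 6--7, apply Weyl's inequality to $\B_\ell = \G + (\B_\ell-\G)$: combine the Item 3 bound on $\|\B_\ell-\G\|$ with $\sigma_r(\G) \ge \sqrt{1-\delta_t^2}\sigmin \ge 0.99\sigmin$ and $\sigma_{\max}(\G)\le \sigmax$, which requires only that $\delta_t \le 0.02/\kappa^2$.

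The only non-routine step is the conditional chi-squared/independence argument in paragraph two; everything else is bookkeeping. I expect the tightest part of the argument to be tracking constants carefully enough that Item 2 and Items 6--7 come out with the stated prefactors $1.1$, $0.9$, and $1.1$, which is the reason the lemma imposes the quantitative smallness assumption $\delta_t \le 0.02/\kappa^2$ and $\eps_1 < 0.1$.
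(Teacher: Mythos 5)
First, note that the paper does not supply its own proof of this lemma: it is cited verbatim as Lemma 3.3 of the reference \texttt{lrpr\_gdmin}, so there is no in-paper argument to compare against. Your decomposition $(\b_k)_\ell - \g_k = \bigl((A\U)^\top A\U\bigr)^{-1}(A\U)^\top A(\I-\U\U^\top)\xstar_k$ and the observation that $A\U$ and $A(\I-\U\U^\top)\xstar_k$ are independent are correct and are indeed the standard starting point for this type of bound. The algebraic reductions for Items 2, 4, 6, 7 via triangle inequality / Pythagoras / Weyl are also fine in outline.

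However, there are two genuine gaps. (i) The concentration step for Item 1 is not the right tool. Conditioned on $A\U$, writing $g:=(A\U)^\top A(\I-\U\U^\top)\xstar_k$, you apply the chi-squared tail to get $\|g\|\le 1.1\sigma\sqrt{\m r}(1+\eps_1)$. That concentration \emph{around the mean} $\sqrt{\operatorname{tr}(M)}\approx\sqrt{\m r}$ holds with failure probability of order $\exp(-c\eps_1^2 r)$, not $\exp(r-c\eps_1^2\m)$; after the union bound over $k\in[q]$ this gives $\exp(\log q - c\eps_1^2 r)$, which is useless precisely in the regime the lemma is deployed (the paper later takes $\eps_1\sim 1/(\sqrt{r}\kappa^2)$, making $\eps_1^2 r$ a constant). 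The correct target is the \emph{looser} bound $\|g\|\lesssim \eps_1\m\sigma$, obtained either via an $\eps$-net over $S^{r-1}$ together with a Bernstein bound for the sum of $\m$ mean-zero sub-exponential products, or by Hanson--Wright aimed at the absolute threshold $\eps_1^2\m^2\sigma^2$ rather than the relative one $(1+\eps_1)^2\operatorname{tr}(M)\sigma^2$; either route yields exactly the $\exp(r-c\eps_1^2\m)$ failure probability that the statement asserts. (ii) The ``in fact'' claim that $\sum_k\|\bstar_k\|^2 = \|\Bstar\|_F^2 = \|\Xstar\|_F^2 \le \sigmax^2$ is false: $\|\Xstar\|_F^2 = \sum_{j\le r}\sigma_j^{*2}\ge\sigmax^2$, so the correct bound is $\|\Bstar\|_F\le\sqrt{r}\,\sigmax$. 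Summing the per-column bound therefore gives $\|\B_\ell-\G\|_F\le 1.7\eps_1\delta_t\sqrt{r}\,\sigmax$ (and similarly for Item 5), and in fact the paper uses exactly this $\sqrt{r}$-inflated bound later in the proof of its Lemma on $\norm{\nabla f_\ell - \E[\nabla f_{\ell_1}]}_F$, so the $\sqrt{r}$ should be retained rather than argued away.
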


All the lemmas given below for analyzing the GD step use Lemma \ref{Blemma_new} in their proofs.

\subsection{Lemmas for proving Theorem \ref{gdmin_thm} for $\tL=L$: GD step bounds} 
The main goal here is to bound $\SD_F(\U^+,\Ustar)$, given that $\SD_F(\U,\Ustar) \le \delta_t$. Here $\U^+$ is the subspace estimate at the next, $(t+1)$-th iteration.
We will show that $\SD_F(\U^+,\Ustar) \le (1 - (\eta \sigmax^2) \frac{c}{\kappa^2}) \delta_t$.
In our previous work \cite{lrpr_gdmin, lrpr_gdmin_2}, we obtained this by bounding the deviation of the gradient, $\nabla f = \sum_{k \in [q]} \nabla f_k$ from its expected value, $\E[\nabla f] = m (\X - \Xstar) \B^\top$ and then using this simple expression for the expected gradient to obtain the rest of our bounds. In particular notice that $\P_{\Ustar,\perp} \E[\nabla f] = m \P_{\Ustar,\perp} \U \B \B^\top$.

In this work, to use the same proof structure, we need a proxy for $\E[\nabla f]$. For this, we can use $\E[(\nabla f)_\ell]$ for any $\ell \in \Jgood$. We let $\ell_1 \in \Jgood$ be one such node. In what follows, we will use $\E[\nabla f_{\ell_1}] =  m (\X_{\ell_1} - \Xstar) \B_{\ell_1}^\top$ at various places.

\begin{lemma}(algebra lemma)\label{sd_pop}
Let
\[
\Err=\nabla f_\gm-\E[\nabla f_{\ell_1}(\U,\BGM)].
\]
Recall that $\U^+ = QR(\U - (\eta/\m) \nabla f_\gm)$.  We have
	\begin{align*}
		\SD_F(\Ustar,\U^+)
\leq \frac{ \| \I_{r}-\eta\B_{\ell_1} \B_{\ell_1}^\top \| \SD_F(\Ustar,\U)  + \frac{\eta}{\m} \| \Err\|_F }{ 1-\frac{\eta }{\m}\|\E[\nabla f_{\ell_1}(\U,\BGM)] \|  -\frac{\eta }{\m} \|\Err\|}
	\end{align*}
\end{lemma}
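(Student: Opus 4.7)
\medskip

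\noindent\textbf{Proof proposal for Lemma \ref{sd_pop}.}

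The plan is to expand the unnormalized update $\U-(\eta/\m)\nabla f_\gm$ using the decomposition $\nabla f_\gm = \E[\nabla f_{\ell_1}(\U,\B_{\ell_1})] + \Err$, apply the projector $\P_{\Ustar,\perp}:=\I-\Ustar\Ustar{}^\top$ from the left so that all terms proportional to $\Xstar=\Ustar\Bstar$ vanish, and then account for the QR normalization by lower-bounding $\sigma_{\min}(\U-(\eta/\m)\nabla f_\gm)$ via Weyl's inequality.

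First I would substitute the closed form for the conditional expectation at the good node $\ell_1$, namely $\E[\nabla f_{\ell_1}(\U,\B_{\ell_1})] = \m(\U\B_{\ell_1}-\Xstar)\B_{\ell_1}^\top$, which follows from the sample-splitting convention and the $\A_k$'s being i.i.d. Gaussian (already used in \cite{lrpr_gdmin,lrpr_gdmin_2}). Letting $M:=\U-(\eta/\m)\nabla f_\gm$, this gives
\begin{equation*}
M = \U\bigl(\I_r-\eta\B_{\ell_1}\B_{\ell_1}^\top\bigr) + \eta\,\Xstar\B_{\ell_1}^\top - \frac{\eta}{\m}\,\Err.
\end{equation*}
Since $M = \U^+ R^+$ is the QR decomposition of $M$, we have $\U^+ = M (R^+)^{-1}$, so $\SD_F(\Ustar,\U^+) = \|\P_{\Ustar,\perp}\U^+\|_F \leq \|\P_{\Ustar,\perp}M\|_F \cdot \|(R^+)^{-1}\|$.

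Next I would bound the numerator and denominator separately. For the numerator, using $\P_{\Ustar,\perp}\Xstar=0$ and the submultiplicativity inequalities $\|AB\|_F\leq \|A\|_F\|B\|$ and $\|\P_{\Ustar,\perp}\|\leq 1$,
\begin{equation*}
\|\P_{\Ustar,\perp}M\|_F \leq \|\P_{\Ustar,\perp}\U\|_F\,\|\I_r-\eta\B_{\ell_1}\B_{\ell_1}^\top\| + \frac{\eta}{\m}\|\Err\|_F = \SD_F(\Ustar,\U)\,\|\I_r-\eta\B_{\ell_1}\B_{\ell_1}^\top\| + \frac{\eta}{\m}\|\Err\|_F.
\end{equation*}
For the denominator, $\|(R^+)^{-1}\| = 1/\sigma_{\min}(M)$, and since $\U$ has orthonormal columns ($\sigma_{\min}(\U)=1$), Weyl's inequality yields
\begin{equation*}
\sigma_{\min}(M) \geq 1 - \frac{\eta}{\m}\bigl\|\E[\nabla f_{\ell_1}(\U,\B_{\ell_1})]\bigr\| - \frac{\eta}{\m}\|\Err\|.
\end{equation*}
Combining these two bounds reproduces the claimed inequality exactly.

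The only subtle point is the sign/ordering: the good-node identity $\E[\nabla f_{\ell_1}] = \m(\U\B_{\ell_1}-\Xstar)\B_{\ell_1}^\top$ must be invoked \emph{before} projecting, because the cancellation of $\Xstar$ under $\P_{\Ustar,\perp}$ is what creates the contractive factor $\|\I-\eta\B_{\ell_1}\B_{\ell_1}^\top\|$ multiplying $\SD_F(\Ustar,\U)$. Everything else is a routine application of triangle and submultiplicativity inequalities, so I do not anticipate a significant obstacle in the proof itself; the real work (done in Lemma \ref{new_term_lem} and the subsequent GM bound via Lemma \ref{gm_new_2}) is to later show that $\|\Err\|_F$ is of order $\delta_t\,\m\sigmax^2$ with high probability, so that the contractive term $\|\I-\eta\B_{\ell_1}\B_{\ell_1}^\top\|\leq 1-c\eta\sigmin^2$ dominates.
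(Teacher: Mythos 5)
Your proposal is correct and takes essentially the same route as the paper's proof: decompose $\nabla f_\gm$ into $\E[\nabla f_{\ell_1}] + \Err$, use $\E[\nabla f_{\ell_1}] = \m(\U\B_{\ell_1}-\Xstar)\B_{\ell_1}^\top$, project by $\I-\Ustar\Ustar^\top$ so the $\Xstar$ term vanishes, bound the numerator by submultiplicativity, and bound $\|(R^+)^{-1}\| = 1/\sigma_{\min}(\Uhat^+)$ via Weyl's inequality. The algebra, the order of operations, and the observation you flag as "the only subtle point" all match what the authors do.
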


\begin{proof} See Appendix \ref{sd_pop_proof}. \end{proof}  

\begin{lemma}\label{diff_terms}
	Assume $\SD_F(\Ustar,\U)\leq \delta_t<\delta_0$. 
	\begin{enumerate}
		\item If $\eta \le 0.5/\sigmax^2$,
		{\small
			$
			\lambda_{\min}(\I_r - \eta \B_{\ell_1} \B_{\ell_1}^\top) = 1 - \eta \|\B_{\ell_1}\|^2 > 1 - \eta 1.3 \sigmax^2 > 0
			$
		}
		and so this matrix is p.s.d. and hence,
		{\small
			$
			\|\I_r - \eta \B_{\ell_1} \B_{\ell_1}^\top \| = \lambda_{\max}(\I_r - \eta \B_{\ell_1} \B_{\ell_1}^\top) = 1 - \eta \sigmamin(\B_{\ell_1})^2 \le 1 - \eta 0.8 \sigmin^2
			$
		}
		\item  For all $\ell \in \Jgood$,
		\[
		\E[\nabla f_{\ell}(\U,\B_\ell)] =\m (\X_\ell-\Xstar)\B_\ell^\top
		\]
		
		\item For all $\ell \in \Jgood$, 
		\[
		\norm{\E[\nabla f_{\ell}(\U,\B_\ell)]}_F \leq \m 1.6 \delta_t\sigmax^2
		\]
		
	\end{enumerate}
\end{lemma}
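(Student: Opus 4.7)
The plan is to dispatch the three parts in order, each being a short computation that leans almost entirely on Lemma \ref{Blemma_new} and on the Gaussian measurement model together with sample-splitting.

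For part 1, I would just read off from Lemma \ref{Blemma_new} parts 6 and 7 that $\sigma_{\min}(\B_{\ell_1})^2 \ge 0.81\sigmin^2$ and $\|\B_{\ell_1}\|^2 \le 1.21\sigmax^2$. Combined with $\eta \le 0.5/\sigmax^2$, this gives $\eta\|\B_{\ell_1}\|^2 \le 0.605 < 1$, so $\I_r - \eta \B_{\ell_1}\B_{\ell_1}^\top$ is positive definite, and its two-norm equals its top eigenvalue $1-\eta\sigma_{\min}(\B_{\ell_1})^2 \le 1 - 0.8\eta\sigmin^2$. Rounding $1.21$ up to $1.3$ and $0.81$ down to $0.8$ gives the stated numbers.

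For part 2, the key point is sample-splitting. By the altGDmin description, the measurement matrices $(\A_k)_\ell$ used in the current GD step are independent of those used to compute $\B_\ell$ in the previous minimization step; hence, conditional on past estimates $(\U, \B_\ell)$, the matrices $(\A_k)_\ell$ are still i.i.d.\ with standard Gaussian entries. Substituting $(\y_k)_\ell = (\A_k)_\ell \xstar_k$ into the gradient expression,
\[
\nabla f_\ell(\U,\B_\ell) = \sum_{k=1}^q (\A_k)_\ell^\top (\A_k)_\ell\bigl(\U(\b_k)_\ell - \xstar_k\bigr)(\b_k)_\ell^\top,
\]
so using $\E[(\A_k)_\ell^\top (\A_k)_\ell] = \m \I_n$ and the fact that $(\b_k)_\ell$ is deterministic given past, each term in the sum has conditional expectation $\m(\U(\b_k)_\ell - \xstar_k)(\b_k)_\ell^\top$. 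Summing over $k$ and writing $\U\B_\ell = \X_\ell$ gives $\E[\nabla f_\ell(\U,\B_\ell)] = \m(\X_\ell - \Xstar)\B_\ell^\top$.

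For part 3, I apply the submultiplicative bound $\|\M \N\|_F \le \|\M\|_F \|\N\|$ with $\M = \X_\ell - \Xstar$ and $\N = \B_\ell^\top$, and then invoke Lemma \ref{Blemma_new} part 5 ($\|\X_\ell - \Xstar\|_F \le 1.4\delta_t \sigmax$) and part 7 ($\|\B_\ell\| \le 1.1 \sigmax$). This yields $\|\E[\nabla f_\ell(\U,\B_\ell)]\|_F \le \m \cdot 1.4 \cdot 1.1 \cdot \delta_t \sigmax^2 = 1.54\,\m \delta_t \sigmax^2 < 1.6\,\m \delta_t \sigmax^2$.

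No step looks genuinely hard here; the only subtlety worth flagging is that part 2 silently uses sample-splitting to make $\B_\ell$ independent of the fresh $(\A_k)_\ell$, which is why the conditional expectation is a clean linear expression in $\B_\ell$ rather than something that would require us to differentiate through the pseudo-inverse in the LS step.
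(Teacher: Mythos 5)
Your proof is correct and takes essentially the same route as the paper, which simply cites Lemma \ref{Blemma_new} for parts 1 and 3 and calls part 2 immediate; you have just filled in the short computations, including the correct appeal to sample-splitting that makes $\E[(\A_k)_\ell^\top(\A_k)_\ell]=\m\I_n$ conditionally on $\U,\B_\ell$.
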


\begin{proof}
The first item follows using the bounds on $\sigmax(\B_\ell)$ and $\sigmin(\B_\ell)$ from Lemma \ref{Blemma_new}. Second item is immediate. Third item follows from item two and bounds on $\sigmax(\B_\ell)$, $\|\X_\ell - \Xstar\|_F$ given in Lemma \ref{Blemma_new}.
\end{proof}

The next lemma is an easy consequence of Lemmas \ref{new_term_lem} and Lemma \ref{gm_new_2}.
\begin{lemma}\label{err_bound_lemma}
	Let $p_1=\exp \Big((n+r)-c\epsilon_1^2\frac{\m q}{r\mu^2}\Big)+2\exp(\log q +r -c\epsilon_1^2\m)$. 
	If  $\tau< 0.4$, then, w.p. at least $1-Lp_1-\exp ( -L\psi(0.4-\tau,p_1) )$,	
	\[
	\norm{\Err}_F \leq 14 . 12.5 \m  \sigma_{\max}^{\ast 2}\epsilon_1\sqrt{r}\delta_t
	\]	
\end{lemma}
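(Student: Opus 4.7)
The proof will follow the standard two-step median-of-good-estimates template that the excerpt has already set up. First, produce an event on which every good node's gradient concentrates near $\E[\nabla f_{\ell_1}(\U,\B_{\ell_1})]$; then invoke the GM concentration bound (Corollary \ref{gm_new_2}) on the vectorized gradients to transfer the concentration from the $|\Jgood|$ good nodes to the approximate geometric median $\nabla f_\gm$.

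Concretely, my plan is to set $\z_\ell := \mathrm{vec}(\nabla f_\ell(\U,\B_\ell))$ for $\ell \in [L]$ and $\tz := \mathrm{vec}(\E[\nabla f_{\ell_1}(\U,\B_{\ell_1})])$, so that $\|\Err\|_F = \|\z_\gm - \tz\|$. Lemma \ref{new_term_lem} supplies, for each fixed $\ell \in \Jgood$, the bound $\|\z_\ell - \tz\| \leq 12.5\,\epsilon_1 \delta_t \m \sigmax^2$ with probability at least $1-p_1$. A union bound over the (at most $L$) good nodes gives this inequality simultaneously for all $\ell \in \Jgood$ with probability at least $1-L p_1$. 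I then need to check that the algorithm's threshold $\thresh = C \m \sigmin^2$ (set in Theorem \ref{gdmin_thm_gmom}) keeps all good gradients in the filtered set: using Lemma \ref{diff_terms}(3), $\|\tz\| \leq 1.6\,\m \delta_t \sigmax^2$, so the triangle inequality yields $\|\z_\ell\| \lesssim \m \delta_t \sigmax^2 \ll \m \sigmin^2 \leq \thresh$ whenever $\delta_t < \delta_0$ is small enough, which is guaranteed by the hypothesis $\delta_t < \delta_0$.

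With these two pieces in hand, I apply Corollary \ref{gm_new_2} to $\{\z_\ell\}_{\ell=1}^L$: on the intersection of the good events above and the GM-concentration event (which holds with probability at least $1 - \exp(-L\psi(0.4-\tau, p_1))$ since $\tau < 0.4$), we get $\|\z_\gm - \tz\| \le 6\epsilon\|\tz\| + 5\epsilon_\gm (1+\epsilon)\|\tz\|$, where $\epsilon$ comes from rewriting the absolute per-node bound as a relative one using $\|\tz\|$ (and $\epsilon_\gm$ is the GM approximation parameter, made small by the $T_\gm = C\log(Lr/\eps)$ iterations of \cite[Algorithm~1]{cohen2016geometric}). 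Substituting $\|\tz\| \leq 1.6\,\m \delta_t \sigmax^2$, and carefully tracking constants (absorbing the factor $\sqrt r$ that appears when one wants the bound to be homogeneous with the companion $\SD_F$ estimates used later in Lemma \ref{sd_pop}), delivers the claimed $14 \cdot 12.5\,\m\sigmax^2 \epsilon_1 \sqrt r\, \delta_t$ upper bound.

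I do not expect any serious technical obstacle: Lemma \ref{new_term_lem} is precisely engineered to give the uniform individual bound on good nodes, and Corollary \ref{gm_new_2} was formulated with potentially unbounded inputs in mind (handled by the $\thresh$ pre-filter). The only mild bookkeeping is (i) verifying that the threshold does not excise any good node, which reduces to the small-$\delta_t$ regime ensured by the inductive hypothesis $\delta_t < \delta_0$, and (ii) combining the $1 - L p_1$ union bound for the good nodes with the GM failure probability $\exp(-L\psi(0.4-\tau, p_1))$ to produce the probability stated in the lemma. The $c_\approxgm$ term from the approximate GM solver will sit inside the global probability budget tracked by Theorem \ref{gdmin_thm_gmom}, and is not repeated in this lemma's statement.
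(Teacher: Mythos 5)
Your proposal follows essentially the same route as the paper's own proof: set $\z_\ell = \mathrm{vec}(\nabla f_\ell)$, $\tz = \mathrm{vec}(\E[\nabla f_{\ell_1}])$, invoke Lemma \ref{new_term_lem} plus a union bound over the good nodes to get a uniform per-node bound with probability $\geq 1-Lp_1$, verify that $\thresh = C\m\sigmin^2$ does not filter out any good node in the small-$\delta_t$ regime, and then apply Corollary \ref{gm_new_2} to pass the bound through the approximate GM. One small correction to your narration: the $\sqrt r$ factor is not a post hoc normalization chosen to match the $\SD_F$ analysis in Lemma \ref{sd_pop}; it comes directly out of the per-node concentration bound proved in Appendix \ref{new_term_lem_proof} (the $\sqrt r$ is visible there even though it was inadvertently dropped from the displayed inequality in Lemma \ref{new_term_lem}'s statement), so it is already present before you ever touch the GM step. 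Your observation that the $c_\approxgm$ probability term is absorbed in Theorem \ref{gdmin_thm_gmom} rather than repeated in the lemma's statement is a fair reading of how the paper tracks that budget.
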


\begin{proof} See Appendix \ref{err_bound_lemma_proof}. \end{proof}

\subsection{Proof of Theorem \ref{gdmin_thm}}
The proof is an easy consequence of the above lemmas. Using the bounds from Lemma \ref{err_bound_lemma}, \ref{diff_terms} and the $\SD_F$ bound from Lemma \ref{sd_pop}, setting $\eps_1 = 0.3/175\sqrt{r}\kappa^2$, and using $\delta_t \le \delta_0 = 0.1/\kappa^2$ in the denominator terms, we conclude the following: if in each iteration, $\m q\geq C_1 \kappa^4\mu^2(n+r)r^2$, $\m>C_2\max(\log q, \log n)$, then, w.p. at least $1-Lp_1-\exp(-L\psi(0.4-\tau,p_1))$, where $p_1=\exp \Big((n+r)-c\frac{\m q}{r^2\kappa^4\mu^2}\Big)+2\exp(\log q +r -c\m/\kappa^4)$ 
\[
\SD_F(\Ustar, \U^+)  \le \frac{(1 - \frac{0.8 \eta \sigmax^2}{\kappa^2} ) \delta_t + \frac{0.3 \eta \sigmax^2}{\kappa^2} \delta_t}{1 - \frac{1.6 \cdot 0.1 \eta \sigmax^2}{\kappa^2} - \frac{0.1 \cdot 0.3 \eta \sigmax^2}{\kappa^2}}
\]
$\le (1 - (\eta \sigmax^2) \frac{c}{\kappa^2}) \delta_t: = \delta_{t+1}$
Applying this bound at each $t$ proves the theorem.

The numerical constants may have minor errors in various places.%

\subsection{Proof of Algebra lemma, Lemma \ref{sd_pop}} \label{sd_pop_proof}
Recall that
$\Err=\nabla f_\gm-\E[ \nabla f_{\ell_1} (\U,\B_{\ell_1})]$. Let $\P:=\I-\U^*\U^{*T}$.

GD step is given as
$
\Uhat^+= \U-\frac{\eta }{\m} \nabla f_\gm.
$

Adding and subtracting $\E[ \nabla f_{\ell_1} (\U,\B_{\ell_1})]  = \m(\X_{\ell_1}-\Xstar)\B_{\ell_1}^\top$, we get
\begin{align}
	\Uhat^+ 
	& = \U -\frac{\eta}{\m} \m (\U \B_{\ell_1}-\Xstar)\B_{\ell_1}^\top - \frac{\eta }{\m} \Err
\end{align}
Multiplying both sides by $\P:= \I-\Ustar \Ustar{}^\top$, 
\begin{align}
	\P\Uhat^+ 
	&= \P\U -\eta \P\U\B_{\ell_1}\B_{\ell_1}^\top -\frac{\eta }{\m}\P \Err \notag \\
	&= \P\U (\I_r - \eta \B_{\ell_1}\B_{\ell_1}^\top ) -\frac{\eta }{\m}\P \Err
	\label{eq:gd_pop_p}
\end{align}

Taking Frobenius norm and using $\norm{M_1M_2}_F\leq \norm{M_1}_F\norm{M_2}$ we get
\begin{equation}
	\| \P\Uhat^+ \|_F \leq \norm{\P\U}_F \norm{I_r - \eta \B_{\ell_1}\B_{\ell_1}^\top} + \frac{\eta }{\m}\norm{\P \Err}_F
\end{equation}

Now $\Uhat^+ \stackrel{QR}{=} \U^{+}R^{+}$ and since $\| M_1M_2 \|_F \leq \| M_1\|_F \| M_2\|$, this means that $\SD(\U^*,\U^+)\leq \| (I-\U^*\U^{*T})\Uhat^+\|_F\| (R^+)^{-1}\|$. Since $\| (R^+)^{-1}\|=1/\sigma_{min}(R^+)=1/\sigma_{min}(\Uhat^+)$, 
\begin{gather*}
	\| (R^+)^{-1}\|=\frac{1}{\sigma_{min}(\U-\frac{\eta }{\m}(\E[\nabla f_{\ell_1} (\U,\B_{\ell_1})] + \Err))} \\
	\leq \frac{1}{1-\frac{\eta }{\m}\norm{\E[\nabla f_{\ell_1}(\U,\B_{\ell_1})]} - \frac{\eta }{\m} \norm{\Err}}
\end{gather*}
Combining the last two bounds proves our result.

\subsection{Bounding $\|\nabla f_\ell(U,B_{\ell})  - \E[ \nabla f_{\ell_1} (U,B_{\ell_1})] \|_F$: Proof of Lemma \ref{new_term_lem}} \label{new_term_lem_proof}
From the proof of \cite[Lemma 3.5 item 1 ]{lrpr_gdmin} we can write w.p. at least $1-\exp \Big((n+r)-c\epsilon_1^2\frac{\m q}{r\mu^2}\Big)-2\exp(\log q +r -c\epsilon_1^2\m)$
\begin{equation}\label{eq:gd_diff}
	\norm{\nabla f_{\ell}-\E[\nabla f_{\ell}]}_F\leq 1.5\epsilon_1\sqrt{r}\delta_t\m\sigma_{\max}^{\ast 2}
\end{equation}

Using \eqref{eq:gd_diff} and Lemma \ref{diff_terms}, item 2,
\begin{align}
	&\norm{\nabla f_{\ell}-\E[\nabla f_{\ell_1}]}_F \leq  \notag \\
	& \norm{\nabla f_{\ell}-\E[\nabla f_{\ell}]}_F + \norm{\E[\nabla f_{\ell}]-\E[\nabla f_{\ell_1}]}_F \leq  \notag \\
	& 1.5\epsilon_1\sqrt{r}\delta_t\m\sigma_{\max}^{\ast 2} \notag \\
	&+ \norm{\m(\X_{\ell}-\Xstar)\B_\ell^{\top}-\m(\X_{\ell_1}-\Xstar)\B_{\ell_1}^{\top}}_F \label{eq:b_1}	
\end{align}
Using the bounds from Lemma \ref{Blemma_new},
\begin{align*}
	&\norm{\m(\X_{\ell}-\Xstar)\B_{\ell}^{\top}-\m(\X_{\ell_1}-\Xstar)\B_{\ell_1}^{\top}}_F\\
	&=\m\norm{\U(\B_\ell\B_{\ell}^{\top}-\B_{\ell_1}\B_{\ell_1}^{\top})-\Xstar(\B_{\ell}^{\top}-\B_{\ell_1}^{\top})}_F\\
	&=\m\norm{\U(\B_\ell\B_{\ell}^{\top}-\B_{\ell_1}\B_{\ell_1}^{\top}\pm \B_\ell B_{\ell_1}^{\top}) - \Xstar(\B_{\ell}-\B_{\ell_1})^{\top}}_F \\
	& = \m\norm{\U\B_{\ell}(\B_{\ell}^{\top}-\B_{\ell_1}^{\top})-\U(\B_{\ell_1}-\B_{\ell})\B_{\ell_1}^{\top}- \Xstar(\B_{\ell}^{\top}-\B_{\ell_1}^{\top})}_F \\
	&\leq \m(1.1\sigmax+1.1\sigmax + \sigmax)\norm{\B_\ell-\B_{\ell_1}}_F 	\\
	&= \m3.2\sigmax\norm{\B_\ell-\B_{\ell_1}\pm\G}_F\\
	&\leq \m3.2\sigmax(\norm{\B_\ell-\G}_F + \norm{\B_{\ell_1}-\G}_F)  \leq \m11\sigmax^2\epsilon_1\sqrt{r}\delta_t
\end{align*}
Using this in \eqref{eq:b_1}
\begin{align}
	\norm{\nabla f_{\ell}-\E[\nabla f_{\ell_1}]}_F
	& \leq 1.5\epsilon_1\sqrt{r}\delta_t\m \sigmax^2+ 	 \m11\sigmax^2 \epsilon_1\sqrt{r}\delta_t \notag \\
	& \leq \m12.5 \sigmax^2 \epsilon_1\sqrt{r}\delta_t
\end{align}
w.p. at least $1-\exp \Big((n+r)-c\epsilon_1^2\frac{\m q}{r\mu^2}\Big)-2\exp(\log q +r -c\epsilon_1^2\m)$.

\subsection{Bounding $Err$: Proof of Lemma \ref{err_bound_lemma}} \label{err_bound_lemma_proof}
Recall that $\Err = \nabla f_\gm - \E[ \nabla f_{\ell_1} (\U,\B_{\ell_1})]$.
This bound follows from the Lemma \ref{gm_new_2} and Lemma \ref{new_term_lem}. 
We apply  Lemma \ref{gm_new_2}  with $z_\ell \equiv \nabla f_\ell$ and $\tilde{z} \equiv  \E[ \nabla f_{\ell_1} (\U,\B_{\ell_1})]$, $\alpha = 0.4$, $\tau < 0.4$, $\epsilon\equiv 7.8\epsilon_1 =\eps_\gm$ and $\thresh$ set to a constant $C$ times an upper bound on $\|\E[ \nabla f_{\ell_1}]\|_F$. From Lemma \ref{diff_terms}, $\|\E[ \nabla f_{\ell_1}]\|_F \le 2 \m \delta_t \sigmax^2 \le 2 \m \delta_0 \sigmax^2 $. The Theorem needs $\delta_0 = c/\kappa^2$. Thus, we can set $\thresh = C \m \sigmin^2$.

To apply Lemma \ref{gm_new_2}, we need a high probability bound on
$
\max_{\ell \in \Jgood} \| \ \nabla f_\ell - \E[ \nabla f_{\ell_1} (\U,\B_{\ell_1})]\|_F. 
$

By  Lemma  \ref{new_term_lem} and union bound and using $|\Jgood|=(1-\tau)L \le L$, we can show that 
\begin{align}
	\max_{\ell \in \Jgood} \|\nabla f_\ell(\U,\B_\ell)  - \E[ \nabla f_{\ell_1} (\U,\B_{\ell_1})] \|_F
	&\le \m12.5\sigma_{\max}^{\ast 2}\epsilon_1\sqrt{r}\delta_t
\end{align}
w.p. at least $1- L \left( \exp \Big((n+r)-c\epsilon_1^2\frac{\m q}{r\mu^2}\Big) + 2 \exp(\log q +r -c\epsilon_1^2\m) \right):=1-p_1$.

Thus, using Lemma \ref{gm_new_2} w.p. at least $1-Lp_1-\exp ( -L\psi(0.4-\tau,p_1) )$,	
\[
\|\Err\|_F \le 14\m12.5\sigma_{\max}^{\ast 2}\epsilon_1\sqrt{r}\delta_t
\]

\section{One step Analysis of ResPowMeth} \label{direct_respow}
If we want to analyze ResPowMeth directly, we need to bound $\SD(\Uhat, \Ustar)$ at each iteration. Consider its first iteration.

Let $\P_{\Ustar,\perp}=\I-\Ustar\U^{\ast\top}$, $GM=GM\{\bPhi_\ell\U_{rand}\}_{\ell=1}^{L}$.
Then,
\[
\SD_F(\Ustar,\hat{\U})\leq \| \P_{\Ustar,\perp} GM\|_F \| (R^+)^{-1}\| \le  \frac{\| \P_{\Ustar,\perp} GM\|_F}{\sigma_{min}(GM)}
\]
Here $GM\stackrel{QR}{=} \hat{\U}R^{+}$. This follows since  $\| (R^+)^{-1}\|=1/\sigma_{min}(R^+)=1/\sigma_{min}(GM)$.

To bound both numerator and denominator, we use the fact that $GM$ is an approximation of $\bPhi^* \U_{rand}$.

Suppose that
\begin{align*}
	\norm{\bPhi_\ell-\bPhi^*}&\leq  b_0.
\end{align*}
Using \cite[Theorem 4.4.5]{versh_book}, $\|\U_{rand}\|\le 1.1 (\sqrt{n} + \sqrt{r})$, and so,
\begin{align*}
\norm{\bPhi_\ell\U_{rand}-\bPhi^*\U_{rand}}&\leq  b_0 \|\U_{rand}\| \le  2.2  b_0 \sqrt{n}  
\end{align*}
where we used $r \le n$.
Using this and applying Lemma \ref{gm_new_2} with $\eps_\gm=b_0 $, we have
w.p. at least $1-c_0 -Lp-\exp\left(-L\psi\left(0.4-\tau,p\right)\right)$
\[
\norm{GM -\bPhi^*\U_{rand}} \leq  31 b_0 \sqrt{n} 
\]
Then,
\begin{align}
	&\norm{\P_{\Ustar,\perp} GM}_F\notag \\&=\norm{\P_{\Ustar,\perp}(GM-\bPhi^*\U_{rand}+\bPhi^*\U_{rand})}_F\notag \\
	&=\norm{\P_{\Ustar,\perp}(GM-\bPhi^*\U_{rand})}_F \notag  + \norm{\P_{\Ustar,\perp} \bPhi^*\U_{rand})}_F  \notag \\
	&\leq  \norm{GM-\bPhi^*\U_{rand}}_F + \sigma_{r+1}^* \norm{\U_{rand}} \sqrt{r} \notag \\
	&=31 b_0 \sqrt{n} +  \sigma_{r+1}^* \cdot 2.2  \sqrt{n} \cdot \sqrt{r}   \label{eq:pu_init_new}
\end{align}
where we used $\norm{\P_{\Ustar,\perp} \bPhi^*\U_{rand})}_F  \le \norm{\P_{\Ustar,\perp} \bPhi^*} \norm{\U_{rand})}_F  \le  \sigma_{r+1}^* \norm{\U_{rand}} \sqrt{r}$.
Also,
\begin{align}
	&\sigma_{min}(GM) \\
	&\geq \sigma_{min}(\bPhi^*\U_{rand}) - \norm{\bPhi^*\U_{rand}-GM} \notag \\
	&\geq \sigma_{min}(\bPhi^*\U_{rand}) - 31b_0 \sqrt{n}    \notag \\
	&\geq \sigma_{min}(\P_{\Ustar} \bPhi^*\U_{rand}) - \norm{\P_{\Ustar,\perp} \bPhi^*\U_{rand}}  - 31b_0 \sqrt{n}  \notag \\
	&\geq \sigma_{min}(\P_{\Ustar} \bPhi^*\U_{rand}) - 2.2 \sigma_{r+1}^* \sqrt{n}  - 31b_0 \sqrt{n}
\label{eq:sig_min}
\end{align}
where we used Weyl's inequality and $\bPhi^*\U_{rand} = \P_{\Ustar} \bPhi^*\U_{rand} + \P_{\Ustar,\perp} \bPhi^*\U_{rand}$
Finally,
\begin{align}
	\sigma_{min}(\P_{\Ustar} \bPhi^*\U_{rand})& = \sigma_{min}(\Ustar\Sig\Ustar{}^\top\U_{rand})  \notag \\
	&\geq \sigma_{\min}(\Ustar)\sigmarstar\sigma_{min}\left(\U^{*\top}\U_{rand}\right) \label{eq:t}
\end{align}

We bound $\sigma_{min}\left(\U^{*\top}\U_{rand}\right)$ using \cite[Theorem 1.1]{smallest_singular} which helps bound the minimum singular value of square matrices with i.i.d. zero-mean sub-Gaussian entries.  $i,j$-th entry of $\U^{*\top}\U_{rand}$ is the inner product of $i$-th column of $\Ustar$ and $j-$th column of $\U_{rand}$. $\Ustar$ has orthonormal columns and hence each entry of $\U^{*\top}\U_{rand}$ is mean-zero, unit variance Gaussian r.v. Thus, by  \cite[Theorem 1.1]{smallest_singular}, w.p., at least $1-(C\epsilon) - \exp^{-cr}$
\begin{align*}
	\sigma_{min}(\U^{*\top}\U_{rand})&\geq \epsilon(\sqrt{r}-\sqrt{r-1}) \\
	& = \epsilon\sqrt{r}\left(1-\sqrt{1-\frac{1}{r}}\right) \\
	& \geq \epsilon\sqrt{r}\left(1-\left(1-\frac{1}{2r}\right)\right) \\
	& = \epsilon\frac{1}{2\sqrt{r}}
\end{align*}
In the above, we used Bernoulli inequality $(1+x)^n\leq 1+nx$, where $0\leq n\leq 1$, $x\geq-1$ for $\sqrt{1-\frac{1}{r}}$. Use $\eps=0.1$.

Thus,  w.p. at least $1- 0.1 - \exp^{-cr}$,
\[
\sigma_{min}(GM) \ge \sigmarstar 0.1 \frac{1}{2\sqrt{r}} -  2.2 \sigma_{r+1}^* \sqrt{n}  - 31b_0 \sqrt{n}
\]

Together this implies
\begin{align}
\SD_F(\Ustar,\hat{\U})
& \le \frac{31 b_0 \sqrt{n} +  \sigma_{r+1}^* \cdot 2.2   \sqrt{n} \cdot \sqrt{r}}{ \sigmarstar 0.1 \frac{1}{2\sqrt{r}} -   2.2 \sigma_{r+1}^* \sqrt{n}  - 31b_0 \sqrt{n} } \\
& \le \frac{62 b_0 \sqrt{n r} +  4.4 \sigma_{r+1}^* \sqrt{n} \cdot r}{ \sigmarstar 0.1 -   4.4 \sigma_{r+1}^* \sqrt{nr}  - 62b_0 \sqrt{nr} }
\end{align}
To get this bound below $\eps_1$, we need $b_0 \le c \eps_1 / \sqrt{nr}$ and we need $\sigma_{r+1}^* < c/(\sqrt{n}r)$. Thus even to get $\eps_1 =0.99 $ (any value less than one), we need $b_0$ to be of order $1/\sqrt{nr}$.

\section{Geometric Median computation algorithms} \label{gm_append}

The geometric median cannot be computed exactly. We describe below two algorithms to compute it.
The first is the approach developed in Cohen et al \cite{cohen2016geometric}. This comes with a near-linear computational complexity bound. However, as we briefly explain below this is very complicated to implement and needs too many parameters. No numerical simulation results have been reported using this approach even in \cite{cohen2016geometric} itself and not in works that cite it either (to our best knowledge).

The practically used GM approach is Weiszfeld's algorithm \cite{beck2015weiszfeld}, which is a form of iteratively re-weighted least squares algorithms. It is simple to implement and works well in practice. However it either comes with an asymptotic guarantee, or with a finite time guarantee for which the bound on the required number of iterations is not easy to interpret. This bound depends upon the chosen initialization for the algorithm.
Because of this, we cannot provide an easily interpretable bound on its computational complexity.

\subsection{Cohen et al \cite{cohen2016geometric}'s algorithm: Nearly Linear Time GM}
\begin{algorithm}[ht]
	\caption{$\mathtt{AccurateMedian}(\epsilon_{GM})$}	
	\label{alg:nearlylinearmedian}	\label{cohen}
	\hspace*{\algorithmicindent} \textbf{Input}: points $z_1,...,z_L\in\mathbb{R}^{d}$	
	\hspace*{\algorithmicindent}	\textbf{Input}: desired accuracy $\epsilon_{GM}\in(0,1)$
	\begin{algorithmic}[1]
		\STATE\emph{Compute a 2-approximate geometric median and use it to center}\\	
		\hspace*{\algorithmicindent}Compute $x^{(0)}:=\frac{1}{L}\sum_{i\in[L]}z_i$ and $\widetilde{f}_{\ast}:=f(x^{(0)})$\\
		\COMMENT{Here $f(x)=\sum_{i\in[L]}\norm{x-z_i}_2$}\\
		\hspace*{\algorithmicindent}Let $t_{i}=\frac{1}{400\widetilde{f}_{\ast}}(1+\frac{1}{600})^{i-1}$, $\tilde{\epsilon}_{*}=\frac{1}{3}\epsilon_{GM}$,
		and $\tilde{t}_{*}=\frac{2L}{\tilde{\epsilon}_{*}\cdot\tilde{f}_{*}}$	
		\hspace*{\algorithmicindent}Let $\epsilon_{v}=\frac{1}{8}(\frac{\tilde{\epsilon}_{*}}{7L})^{2}$
		and let $\epsilon_{c}=(\frac{\epsilon_{v}}{36})^{\frac{3}{2}}$ 		
		\STATE $x^{(1)}=\mathtt{LineSearch}(x^{(0)},t_{1},t_{1},0,\epsilon_{c})$\\		
		\hspace*{\algorithmicindent}\emph{Iteratively improve quality of approximation}\\		
		\hspace*{\algorithmicindent}Let $T_{GM}=\max_{i\in\mathbb{Z}}t_{i}\leq\tilde{t}_{*}$		
		\FOR {$i \in [1,T_{GM}]$}	
		\item[]		
		\hspace*{\algorithmicindent}\emph{Compute $\epsilon_{v}$-approximate minimum eigenvalue
			and eigenvector of $\nabla^2 f_{t_{i}}(x^{(i)})$}			
		\STATE $(\lambda^{(i)},u^{(i)})=\mathtt{ApproxMinEig}(x^{(i)},t_{i},\epsilon_{v})$
		\item[]		
		\hspace*{\algorithmicindent}\emph{Line search to find $x^{(i+1)}$ such that $\norm{x^{(i+1)}-x_{t_{i+1}}}_{2}\leq\frac{\epsilon_{c}}{t_{i+1}}$}		
		\STATE$x^{(i+1)}=\mathtt{LineSearch}(x^{(i)},t_{i},t_{i+1},u^{(i)},\epsilon_{c})$		
		\ENDFOR
		\STATE \textbf{Output: }$\epsilon_{GM}$-approximate geometric median $x^{(T_{GM}+1)}$.
	\end{algorithmic}	
\end{algorithm}
The function $\mathtt{ApproxMinEig}$ in Algorithm \ref{alg:nearlylinearmedian} calculates an approximation of the minimum eigenvector of $\nabla^2 f_{t}(x)$. This approximation is obtained using the well-known power method, which converges rapidly on matrices with a large eigenvalue gap. By leveraging this property, we can obtain a concise approximation of $\nabla^2 f_{t}(x)$.
The running time of $\mathtt{ApproxMinEig}$ is $\mathcal{O}\left(Ld\log\left(\frac{L}{\epsilon_{GM}}\right)\right)$. This time complexity indicates that the algorithm's execution time grows linearly with $L$ and $d$ and logarithmically with $L/\epsilon_{GM}$.
The function $\mathtt{LineSearch}$ in Algorithm \ref{alg:nearlylinearmedian} performs a line search on the function $g_{t,y,v}(\alpha)$, as defined in Equation \ref{lem:quotient-func}. The line search aims to find the minimum value of $g_{t,y,v}(\alpha)$, subject to the constraint $|\mathbf{x} - (\mathbf{y} + \alpha \mathbf{v})|_2 \leq \frac{1}{49t}$, where $\mathbf{x}$ is the variable being optimized.
\begin{equation}
	g_{t,y,v}(\alpha)=\min_{\norm{x-(y+\alpha v)}_{2}\leq\frac{1}{49t}}f_{t}(x)\label{lem:quotient-func}
\end{equation}
To evaluate $g_{t,y,v}(\alpha)$ approximately, an appropriate centering procedure is utilized. This procedure allows for an efficient estimation of the function's value.
The running time of $\mathtt{LineSearch}$ is $\mathcal{O}\left(Ld\log^2\left(\frac{L}{\epsilon_{GM}}\right)\right)$. The time complexity indicates that the algorithm's execution time grows linearly with $L$ and $d$, while the logarithmic term accounts for the influence of $\frac{L}{\epsilon_{GM}}$ on the running time.

\subsection{Practical Algorithm:  Weiszfeld's method}

Weiszfeld's algorithm, Algorithm \ref{WZ}, provides a simpler approach for approximating the Geometric Median (GM). It is easier to implement compared to Algorithm \ref{alg:nearlylinearmedian}. It is an iteratively reweighted least squares algorithm. It iteratively refines the estimate by giving higher weights to points that are closer to the current estimate, effectively pulling the estimate towards the dense regions of the point set. The process continues until a desired level of approximation is achieved, often determined by a tolerance parameter, $\epsilon_{GM}$. While the exact number of iterations needed cannot be determined theoretically (as we will see from its guarantees below), the algorithm typically converges reasonably quickly in practice. 

We provide here the two known guarantees for this algorithm. 

\begin{theorem}\label{th:Weiszfeld's_asymp}[Corollary 7.1 \cite{beck2015weiszfeld}]
	Suppose that there is no optimal $\z\in\mathcal{A}=\{\z_1,...,\z_L\}$ such that it minimizes $\sum_{\ell=1}^{L}\left\| \z-\z_\ell \right\|$. Let $\{z_t\}_{t\geq 0}$ be the sequence generated by Weiszfeld's Algorithm \ref{WZ} with $z_0$ as given in the initialization of Algorithm \ref{WZ}. Then, for any $t\geq 0$, we have $\z_t\notin\mathcal{A}$ and $\z_t\rightarrow\z^*$ as $t\rightarrow\infty$. Here $z^*$ is the true GM.
\end{theorem}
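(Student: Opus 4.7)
\textbf{Proof Plan for Theorem \ref{th:Weiszfeld's_asymp}.}
The plan is to treat Weiszfeld's iteration as a majorization--minimization (MM) / iteratively reweighted least squares algorithm applied to the convex, coercive objective $f(\z) = \sum_\ell \|\z - \z_\ell\|$, and to conclude convergence via the standard descent-method machinery, with the nontrivial ingredient being that the iterates must be shown to avoid the singular set $\mathcal{A}$.

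First, I would build the surrogate. For any $\z_t \notin \mathcal{A}$, define
\[
h_t(\z) := \sum_{\ell=1}^L \frac{\|\z - \z_\ell\|^2 + \|\z_t - \z_\ell\|^2}{2\,\|\z_t - \z_\ell\|}.
\]
Since $\|\z-\z_\ell\|\|\z_t-\z_\ell\| \le \tfrac12(\|\z-\z_\ell\|^2 + \|\z_t-\z_\ell\|^2)$, one checks $h_t(\z) \ge f(\z)$ with equality at $\z = \z_t$. The minimizer of the strictly convex quadratic $h_t$ is exactly the Weiszfeld update
$\z_{t+1} = \big(\sum_\ell \z_\ell/\|\z_t-\z_\ell\|\big)\big/\big(\sum_\ell 1/\|\z_t-\z_\ell\|\big)$. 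Thus $f(\z_{t+1}) \le h_t(\z_{t+1}) \le h_t(\z_t) = f(\z_t)$, giving a monotone decreasing bounded-below sequence $\{f(\z_t)\}$; moreover, by strict convexity of $h_t$ one gets a quantitative ``sufficient decrease'' of the form $f(\z_t) - f(\z_{t+1}) \ge c_t \|\z_{t+1}-\z_t\|^2$ for an explicit $c_t > 0$, which will later force $\z_{t+1}-\z_t \to 0$.

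Second, I would handle boundedness and avoidance of $\mathcal{A}$. Coercivity of $f$ together with the monotone descent shows $\{\z_t\}$ lies in the sublevel set $\{\z : f(\z) \le f(\z_0)\}$, which is compact. The main obstacle is to show $\z_t \notin \mathcal{A}$ for every $t \ge 0$. Here the hypothesis ``no $\z_\ell$ minimizes $f$'' is essential: for any $\z_j \in \mathcal{A}$, a direct computation of the subdifferential of $f$ at $\z_j$ yields $\partial f(\z_j) = \{\sum_{\ell \neq j}(\z_j-\z_\ell)/\|\z_j-\z_\ell\| + \bm{v} : \|\bm{v}\| \le 1\}$, and non-optimality of $\z_j$ is equivalent to $0 \notin \partial f(\z_j)$, which says $\|\sum_{\ell \neq j}(\z_j-\z_\ell)/\|\z_j-\z_\ell\|\| > 1$. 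One then shows, by a local expansion of the Weiszfeld map near any $\z_j$, that this strict inequality guarantees a uniform repulsion: there exists $\rho_j > 0$ such that $\|\z - \z_j\| < \rho_j$ and $\z \neq \z_j$ implies $\|T(\z)-\z_j\| > \|\z-\z_j\|$. Combined with the descent of $f$ along the trajectory (which precludes infinitely many re-entries into any such ball, since such re-entries would contradict sufficient decrease), this yields the quantitative bound $\inf_{t,\ell}\|\z_t-\z_\ell\| \ge \delta > 0$. This step is the hard part of the argument and is the content of Kuhn's classical analysis sharpened by Beck--Sabach.

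Finally I would close the convergence argument. On the compact region $\{\z : f(\z) \le f(\z_0),\ \min_\ell \|\z-\z_\ell\| \ge \delta\}$ the Weiszfeld map $T$ is continuous and $f$ is smooth. The sufficient-decrease inequality combined with summability of $f(\z_t)-f(\z_{t+1})$ gives $\|\z_{t+1}-\z_t\| \to 0$. Extracting a convergent subsequence $\z_{t_k} \to \bar\z$, continuity of $T$ forces $T(\bar\z) = \bar\z$, and on the feasible region any fixed point of $T$ satisfies the first-order optimality condition $\sum_\ell (\bar\z - \z_\ell)/\|\bar\z-\z_\ell\| = 0$, i.e., $\bar\z = \z^*$ (uniqueness is standard when the $\z_\ell$ are not collinear; otherwise $\z^*$ is the unique minimizer on the affine hull). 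Since every subsequential limit equals $\z^*$ and the sequence lies in a compact set, $\z_t \to \z^*$, completing the proof.
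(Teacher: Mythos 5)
The paper does not itself prove this theorem; it is quoted verbatim from Beck and Sabach \cite{beck2015weiszfeld} (their Corollary 7.1), so the only possible comparison is against that reference. Your majorization--minimization setup is correct: the surrogate $h_t$ dominates $f$, touches it at $\z_t$, its minimizer is the Weiszfeld update, and strict convexity of $h_t$ gives a sufficient-decrease inequality of the form $f(\z_t)-f(\z_{t+1}) \ge \tfrac12 \mathcal{L}(\z_t)\|\z_{t+1}-\z_t\|^2$. Boundedness by coercivity and the fixed-point endgame (subsequential limits are fixed points of $T$, and a fixed point away from $\mathcal{A}$ is the minimizer) are also standard and fine. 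The gap is concentrated in the anchor-avoidance step, and as written it is fatal.

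Your route to $\inf_{t,\ell}\|\z_t-\z_\ell\|\ge\delta>0$ rests on a local repulsion property of $T$ near each non-optimal anchor plus sufficient decrease. Local repulsion alone does not deliver a uniform positive lower bound on the distance to $\mathcal{A}$ (it controls neither the number of passages through a fixed small ball nor the infimum over $t$), and in fact the statement you are implicitly proving --- convergence of Weiszfeld from an arbitrary $\z_0\notin\mathcal{A}$ --- is false: there exist explicit initial points from which the iterates reach a non-optimal anchor after finitely many steps (Chandrasekaran--Tamir; this is the well-known defect in Kuhn's 1973 argument). That is precisely why the theorem as stated, and Corollary 7.1 of \cite{beck2015weiszfeld}, require the specific initialization $\z_0 = \z_p + t_p d_p$ of Algorithm \ref{WZ}: the step length $t_p = (\|R_p\|-1)/\mathcal{L}(\z_p)$, which is positive exactly because $\|R_p\|>1$ (i.e.\ $\z_p$ is not optimal), is calibrated so that $f(\z_0) < f(\z_p) = \min_j f(\z_j)$. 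Once that strict inequality holds, monotone descent gives $f(\z_t)\le f(\z_0) < \min_j f(\z_j)$ for every $t$, so $\z_t\neq\z_j$ for all $t,j$; continuity of $f$ and compactness of the sublevel set then yield the uniform $\delta>0$ in one line. Your proposal never invokes the initialization, which is the decisive hypothesis of the theorem. Replacing the repulsion machinery by this initialization-based argument is what closes the gap (and also shortens the proof considerably).
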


\begin{algorithm}[ht]
	\caption{ Weiszfeld's Algorithm}\label{WZ}	
	\hspace*{\algorithmicindent} \textbf{Input} $\mathcal{A}=\{z_1,z_2,...,z_L\}$  \\
	\hspace*{\algorithmicindent} \textbf{Parameters} $T$, $\epsilon_{GM}$ \\
	\hspace*{\algorithmicindent} \textbf{Output} $z_{GM}$	
	\begin{algorithmic}
		\STATE \underline{\textbf{Initialization}}
		\STATE $z_0 = z_p + t_p d_p$, where
		\item[] $p\in\arg\min\{f(z_i):1\leq i\leq L\}$, $f(z)=\sum_{i=1}^{L}\left\| z-z_i \right\|$ and $d_p=\frac{R_p}{\norm{R_p}}$, $t_p=\frac{\norm{R_p}-1}{\mathcal{L}(z_p)}$
		\[\mathcal{L}(z)= \begin{cases}
			\sum_{i=1}^{L}\frac{1}{\lVert z-z_i\rVert} &\quad\text{if}\quad z\notin\mathcal{A}\\
			\sum_{i=1,i\neq j}^{L} \frac{1}{\lVert z_j-z_i\rVert}&\quad\text{if}\quad z=z_j\quad (1\leq j\leq L) \\
		\end{cases}\]\\
		\[R_{j}=\sum_{i=1,i\neq j}^{L}\frac{z_j-z_i}{\lVert z_i-z_j\rVert}\]
		\item[]
		\STATE \underline{\textbf{Iterative step}}
		\item[]
		\STATE $z_{t+1} = \left( \sum_{i=1}^{L}\frac{z_i}{\left\lVert z_i-z_t \right\rVert} \right)\bigg/\left( \sum_{i=1}^{L}\frac{1}{\left\lVert z_i-z_t \right\rVert} \right)$
		\item[]
		\STATE \underline{\textbf{Terminating Condition}}
		\item[]
		\STATE \begin{enumerate}
			\item $t>T$ \text{Upper bound on number of iterations}
			\item $\left\lVert z_{t+1}-z_t \right\rVert_2 < \epsilon_{GM}$
		\end{enumerate}
	\end{algorithmic}
\end{algorithm}	

\begin{theorem}\label{th:Weiszfeld's}[Theorem 8.2 \cite{beck2015weiszfeld}]
	Let $\{z_t\}_{t\geq 0}$ be the sequence generated by Weiszfeld's Algorithm \ref{WZ} with $z_0$ as given in the initialization of Algorithm \ref{WZ}. Then, for any $t\geq 0$, we have
	\[f(z_t)-f^*\leq \frac{M}{2t}\norm{z_0-z^*}^2\]
	where $M=\frac{2\mathcal{L}(z_p)L^2}{(\norm{R_p}-1)^2}$. Here $z^*$ is the true GM.
\end{theorem}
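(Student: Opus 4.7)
The plan is to recognize that Weiszfeld's iteration is a majorization–minimization (MM) scheme and then run the standard MM $O(1/t)$ analysis on it. Define the surrogate
\[
h(x,y) := \sum_{i=1}^L \frac{\|x-z_i\|^2 + \|y-z_i\|^2}{2\|y-z_i\|}
\]
for $y\notin\mathcal{A}$. I would first verify three properties: (i) $h(y,y)=f(y)$; (ii) $h(x,y)\ge f(x)$ for every $x$ (from the AM–GM inequality $\tfrac{a^2+b^2}{2b}\ge a$ applied to $a=\|x-z_i\|$, $b=\|y-z_i\|$); (iii) the minimizer of $h(\cdot,y)$ in $x$ coincides with one Weiszfeld step, since setting $\nabla_x h(x,y)=\sum_i (x-z_i)/\|y-z_i\|=0$ yields exactly the weighted barycenter formula in Algorithm~\ref{WZ}. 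Moreover $\nabla^2_x h(x,y)=\mathcal{L}(y)I$, so $h(\cdot,y)$ is $\mathcal{L}(y)$-strongly convex in $x$.

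Next I would combine strong convexity with the majorization inequality to prove the three-point descent bound. By strong convexity, $h(z^*,z_t)\ge h(z_{t+1},z_t)+\tfrac{\mathcal{L}(z_t)}{2}\|z^*-z_{t+1}\|^2$; since $h(z_{t+1},z_t)\ge f(z_{t+1})$ this gives
\[
f(z_{t+1})+\tfrac{\mathcal{L}(z_t)}{2}\|z^*-z_{t+1}\|^2 \le h(z^*,z_t).
\]
A direct computation shows
\[
h(z^*,z_t)-f(z^*)=\sum_i \frac{(\|z^*-z_i\|-\|z_t-z_i\|)^2}{2\|z_t-z_i\|}\le \tfrac{\mathcal{L}(z_t)}{2}\|z^*-z_t\|^2
\]
by the reverse triangle inequality $\bigl|\|z^*-z_i\|-\|z_t-z_i\|\bigr|\le \|z^*-z_t\|$. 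Chaining the two inequalities yields the key recursion
\[
f(z_{t+1})-f^* \le \tfrac{\mathcal{L}(z_t)}{2}\bigl(\|z^*-z_t\|^2-\|z^*-z_{t+1}\|^2\bigr).
\]
An analogous use of strong convexity with $x=z_t$ gives the monotone-decrease (descent) inequality $f(z_t)-f(z_{t+1})\ge \tfrac{\mathcal{L}(z_t)}{2}\|z_t-z_{t+1}\|^2\ge 0$, so $f(z_t)$ is non-increasing.

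The main obstacle is obtaining a uniform upper bound $\mathcal{L}(z_t)\le M$ for all $t\ge 0$; without it the telescoping sum does not yield a rate. Since $\mathcal{L}(y)=\sum_i 1/\|y-z_i\|$ blows up only when some iterate approaches a data point, I would use the specific initialization $z_0=z_p+t_p d_p$ with $t_p=(\|R_p\|-1)/\mathcal{L}(z_p)$. The role of this initialization is precisely to pick an off-sample starting point for which $\min_i \|z_0-z_i\|$ is lower-bounded in terms of $\mathcal{L}(z_p)$ and $\|R_p\|-1$; combined with the monotonicity of $f$ along the iterates (which prevents the $z_t$ from drifting toward any $z_i$), one obtains $\mathcal{L}(z_t)\le \tfrac{2\mathcal{L}(z_p)L^2}{(\|R_p\|-1)^2}=M$ for every $t$. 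Replacing $\mathcal{L}(z_t)$ by $M$ in the recursion, summing over $t=0,\ldots,T-1$, and using monotonicity $f(z_T)\le f(z_{t+1})$ to bound the left-hand side by $T(f(z_T)-f^*)$ yields
\[
f(z_T)-f^* \le \tfrac{M}{2T}\|z_0-z^*\|^2,
\]
which is exactly the claim. The one technical step I would have to execute carefully is the lower bound on $\min_i \|z_t-z_i\|$, since this is where the curious $(\|R_p\|-1)^2$ factor enters and where the non-optimality assumption $z_p\neq z^*$ (equivalently $\|R_p\|>1$, coming from the subgradient optimality condition at a data point) is indispensable.
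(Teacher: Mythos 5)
The paper does not prove this statement; it simply quotes Theorem~8.2 of Beck--Sabach \cite{beck2015weiszfeld} as a known fact about Weiszfeld's method, so there is no in-paper proof to compare against. Your reconstruction via the majorization--minimization surrogate $h(x,y)=\sum_i \frac{\|x-z_i\|^2+\|y-z_i\|^2}{2\|y-z_i\|}$ is exactly the route Beck--Sabach take: properties (i)--(iii), the $\mathcal{L}(z_t)$-strong convexity of $h(\cdot,z_t)$, the three-point inequality
$f(z_{t+1})-f^*\le \tfrac{\mathcal{L}(z_t)}{2}\bigl(\|z_t-z^*\|^2-\|z_{t+1}-z^*\|^2\bigr)$,
the descent inequality, and the telescoping step are all correctly executed.

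The one genuine gap is the step you explicitly flag: you assert, rather than derive, the uniform bound $\mathcal{L}(z_t)\le M$. Without that, the telescoping sum involves iteration-dependent constants $\mathcal{L}(z_t)$ and does not collapse to the stated rate. The missing lemma has two parts. First, the particular initialization $z_0=z_p+t_p d_p$ with $d_p=R_p/\|R_p\|$ and $t_p=(\|R_p\|-1)/\mathcal{L}(z_p)$ produces a \emph{quantified} decrease $f(z_p)-f(z_0)\ge \frac{(\|R_p\|-1)^2}{2\mathcal{L}(z_p)}$; this is where $\|R_p\|>1$ (non-optimality of $z_p$, hence of every anchor point) is used, since $\|R_p\|-1$ is the magnitude of the steepest descent slope of $f$ at $z_p$ once the nonsmooth $\|z-z_p\|$ term is accounted for, and $\mathcal{L}(z_p)$ is a local curvature upper bound. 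Second, because $f(z_t)$ is non-increasing and $f(z_0)<f(z_p)\le f(z_j)$ for every $j$, the reverse triangle bound $f(z_j)-f(z_t)\le L\|z_j-z_t\|$ gives $\min_j\|z_t-z_j\|\ge (f(z_p)-f(z_0))/L$ for all $t\ge 0$, and hence $\mathcal{L}(z_t)=\sum_j\|z_t-z_j\|^{-1}\le L^2/(f(z_p)-f(z_0))\le \frac{2L^2\mathcal{L}(z_p)}{(\|R_p\|-1)^2}=M$. Your phrasing suggests lower-bounding $\min_i\|z_0-z_i\|$ directly, but the argument must go through the function value (not the distance of $z_0$ alone) to propagate to all iterates. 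With that lemma filled in, the rest of your proof is complete and matches the source.
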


The first result above is asymptotic. 
The second one, Theorem \ref{th:Weiszfeld's}, gives convergence rate of $\mathcal{O}(M/t)$ where $M$ is as defined in the theorem. It is not clear how to upper bound $M$ only in terms of the model parameters ($d,L$ or $z^*$). Consequently, the rate of convergence is not clear.
Moreover, the expression for $z_0$ (initialization) is too complicated and thus it is not clear how to bound $\norm{z_0-z^*}^2$. Consequently, one cannot provide an expression for the iteration complexity that depends only on the model parameters.

  \subsection{Proof of Lemma \ref{gm_new}}
$\Jgood=\{\ell: \norm{\z_\ell-\tz}\leq \eps\norm{\tz}\}$ and define $z^\ast:=GM(z_1,...,z_{L})$ as exact Geometric median.

 	For $\z_\ell$ with $\ell \in\Jgood$, we have
 	\begin{align}
 		\norm{\z_{\gm}-\z_\ell} &= \norm{\z_{\gm}- \tz+ \tz-\z_\ell} \notag \\
 		&\geq \norm{\z_{\gm}- \tz} - \norm{\z_\ell- \tz} \notag \\
 		&\geq\norm{\z_{\gm}- \tz} - 2\epsilon\norm{\tz} + \norm{\z_\ell- \tz} \label{eq_gm_1}
 	\end{align}

 Moreover, by triangle inequality for $\z_\ell\notin\Jgood$, we have
 \begin{align}
 	\norm{\z_{\gm}-\z_\ell}
 	\geq  \norm{\z_\ell- \tz} - \norm{\z_{\gm}- \tz} \label{eq_gm_2}
 \end{align}

Summing \eqref{eq_gm_1}, \eqref{eq_gm_2} we get
\begin{align*}
	\sum_{\ell=1}^{L} \norm{\z_{\gm}-\z_\ell} &\geq \sum_{\ell=1}^{L}\norm{\z_\ell- \tz} \\
	& + (2|\Jgood|-L)\norm{\z_{\gm}- \tz} - 2|\Jgood|\epsilon\norm{\tz}
\end{align*}
By definition of $\z_\gm$ (approximate GM),  $\sum_{\ell=1}^{L}\norm{\z_{\gm}-\z_\ell}\leq (1+\epsilon_{\gm})\sum_{\ell=1}^{L}\norm{\z^\ast-\z_\ell}$. Hence,
\begin{align*}
	\sum_{\ell=1}^{L}\norm{\z_\ell- \tz} + (2|\Jgood|-L)\norm{\z_{\gm}- \tz} - 2|\Jgood|\epsilon\norm{\tz} \\\leq (1+\epsilon_{\gm})\sum_{\ell=1}^{L}\norm{\z^\ast-\z_\ell}
\end{align*}
Since $\z^\ast$ is the minimizer of $\min_{\z\in \Re^{n}}\sum_{\ell=1}^{L}\norm{\z-\z_\ell}$, so
\[\sum_{\ell=1}^{L}\norm{\z^\ast-\z_\ell}\leq \sum_{\ell=1}^{L}\norm{\z_\ell- \tz}.\]
Using this to lower bound the first term on the LHS of above,
\begin{align*}
	&\sum_{\ell=1}^{L}\norm{\z^\ast-\z_\ell}+ (2|\Jgood|-L)\norm{\z_{\gm}- \tz} - 2|\Jgood|\epsilon\norm{\tz} \\
	&\leq (1+\epsilon_{\gm})\sum_{\ell=1}^{L}\norm{\z^\ast-\z_\ell} \\
	\end{align*}

Arranging the terms and using the fact $|\Jgood|\geq (1-\alpha)L$  we get
\begin{align*}
	&\norm{\z_{\gm}- \tz} \\
	&\leq \frac{2|\Jgood|\epsilon\norm{\tz}}{2|\Jgood|-L} + \epsilon_{\gm}\frac{\sum_{\ell=1}^{L}\norm{\z^\ast-\z_\ell}}{2|\Jgood|-L} \\
	& \leq \frac{2(1-\alpha)\epsilon\norm{\tz}}{1-2\alpha} + \epsilon_{\gm}\frac{\max_{\ell\in[L]}\norm{\z_\ell}}{1-2\alpha}
\end{align*}
Using Claim \ref{cohen_1} ( with constant probability $1-c_\approxgm$ Algorithm \ref{cohen} obtains $(1+\eps_\gm)-$approximate geometric median $\z_\gm$ in order $T_\gm=C\log\left(\frac{L}{\eps_\gm}\right)$) implies that with probability $1-c_\approxgm$
\begin{align*}
	&\norm{\z_\gm-\tz} \leq C_{\alpha}\epsilon \| \tz\| + \epsilon_\gm\frac{\sum_{\ell=1}^{L}\norm{\z^*-\z_\ell}}{(1-2\alpha)L}\\
	&\leq C_{\alpha}\epsilon \| \tz\| + \epsilon_\gm\frac{\max_{\ell \in [L]} \norm{\z_\ell}}{1-2\alpha}
\end{align*}
where $C_\alpha := \frac{2(1-\alpha)}{1-2\alpha}$.

\subsection{Proof of Lemma \ref{gm_new_1}}\label{gm_prob_proof}

Given \[\Pr\{\norm{\z_\ell-\tz}\leq \epsilon\norm{\tz}\} \geq 1-p\]

then

\begin{align*}
	&\Pr\left\{\sum_{\ell=1}^{L}\mathbb{1}_{\{\norm{\z_\ell-\tz}\leq \epsilon\norm{\tz}\}}\geq L(1-\alpha)+L_{byz}\right\}\\
	&\geq \Pr\{T\geq L(1-\alpha)+L_{byz}\}
\end{align*}
where $T\sim \text{Binomial}(L,1-p)$ (First-order stochastic domination)

By Chernoff's bound for binomial distributions, the following holds:

\[\Pr\{T\geq L(1-\alpha)+L_{byz}\}\geq 1-\exp(-L\psi(\alpha-\tau,p))\]

where $\tau=\frac{L_{byz}}{L}$

This then implies w.p. at least $1-\exp(-L\psi(\alpha-\tau,p))$,
\[\sum_{\ell=1}^{L}\mathbb{1}_{\{\norm{\z_\ell-\tz}\leq \epsilon\norm{\tz}\}}\geq L(1-\alpha)+L_{byz}\geq L(1-\alpha)\]
where $\alpha\in(\tau,1/2)$. Using Lemma \ref{gm_new}
\[\norm{\z_{\gm}-\tz} \leq C_{\alpha}\epsilon\norm{\tz} + \epsilon_{\gm}\frac{\max_{1\leq \ell\leq L}\norm{\z_\ell}}{1-2\alpha} \]

w.p. at least $1-c_\approxgm-\exp(-L\psi(\alpha-\tau,p))$. Fixing $\alpha=0.4$ we get the result.

\subsection{Proof of Corollary \ref{gm_new_2}}

$\Jgood$ denotes the set of good node (nodes whose estimates $\z_\ell$ satisfy $\norm{\z_\ell-\tz}\leq \eps \|\tz\|$) with the stated probability.  First we need to show that,  with high probability, none of the entries of $\Jgood$ are thresholded out.
Using given condition 	$\Pr\{\norm{\z_\ell-\tz}\leq \eps \|\tz\| \} \geq 1-p$ and union bound, we conclude that,  w.p. at least $1-(1-\tau)Lp$, $\max_{\ell \in \Jgood} \|\z_\ell\| \le (1+\eps) \|\tz\|= \thresh$. This means that, with this probability, none of the $\Jgood$ elements are thresholded out.

For the set $\{\z_1,...,\z_{L}\}\setminus\{ \z_\ell:\norm{\z_\ell}> (1 + \eps) \|\tz\| \}$ we apply Lemma \ref{gm_new_1}. Since  $\left|\{\z_1,...,\z_{L}\}\setminus\{ \z_\ell:\norm{\z_\ell}> (1 + \eps) \|\tz\| \}\right|=L' \leq L$,

$(1-\tau)L\leq L'\leq L$ implies $\tau'\leq\tau<0.4$ hence condition of Lemma \ref{gm_new_1} is satisfied using Lemma \ref{gm_new_1} w.p. at least $1-c_\approxgm-\exp(-L'\psi(0.4-\tau',p ))-(1-\tau)Lp\geq 1-c_\approxgm-\exp(-L\psi(0.4-\tau,p ))-Lp $,
\[\norm{\z_{\gm}-\tz} \le  6 \eps \| \tz\| + 5 \eps_\gm   (1 + \eps) \|\tz\| < 14 \max(\eps,\eps_\gm) \|\tz\|\]

\bibliographystyle{IEEEbib}
\bibliography{tipnewpfmt_kfcsfullpap, byz}

\end{document}